\documentclass[twocolumn,10pt]{IEEEtran}
\usepackage{bm,cite,algorithm,float,amsmath,amssymb}

\usepackage{graphicx,epstopdf}
\usepackage{pifont}
\usepackage{cite}
\usepackage{subfigure}
\usepackage{dsfont}

\usepackage{bbm}

\makeatletter

\newcommand{\Rmnum}[1]{\expandafter\@slowromancap\romannumeral #1@}
\makeatother
\usepackage{mathrsfs}
\usepackage{amsfonts}

\usepackage{textcomp}
\usepackage{color}

\usepackage{epsfig}

\usepackage[noend]{algpseudocode}
%\usepackage{algorithmicx,algorithm}

%\usepackage{amsfonts,balance}
%\usepackage[utf8x]{inputenc}
%\captionsetup{font={scriptsize}}
%\bibliographystyle{unsrt} 
\bibliographystyle{IEEEtran}
\IEEEoverridecommandlockouts

\usepackage{mathtools}
\usepackage{amsthm}

\newtheorem{theorem}{Theorem}

\newtheorem{definition}{Definition}

\newtheorem{remark}{Remark}

\newtheorem{lemma}{Lemma}

%\newtheorem{proposition}{Proposition}
%\newenvironment{propositionbox} {\begin{definition}}{\hfill %\interlinepenalty500 $\Box$\end{proposition}}

%\newtheorem{lemma}{Lemma}

% \usepackage[
% top    = 1.70cm,
% bottom = 1.05in,
% left   = 0.60 in,
% right  = 0.60 in]{geometry}

% \setlength{\columnsep}{0.7cm}
\begin{document}
		
\title{Analyzing Grant-Free Access for URLLC Service}

\author{
	\IEEEauthorblockN
{ Yan Liu,~\IEEEmembership{Student Member,~IEEE,}
		Yansha Deng,
		~\IEEEmembership{Member,~IEEE,}
		 Maged Elkashlan,~\IEEEmembership{Member,~IEEE,}\\
		 Arumugam Nallanathan,~\IEEEmembership{Fellow,~IEEE,}
		 and George K. Karagiannidis,~\IEEEmembership{Fellow,~IEEE}
}\\

\vspace{-0.2cm}
\thanks{ Manuscript received Feb. 1, 2020; revised Jun. 7, 2020; accepted Jul. 17, 2020. }
\thanks{Y. Liu,  M. Elkashlan and A. Nallanathan are with School of Electronic Engineering and Computer Science, Queen Mary University of London, London E1 4NS, U.K.
%N. Jiang, Y. Deng and A. Nallanathan are with Department of Informatics, King's College London, London, UK 
(e-mail:\{yan.liu, maged.elkashlan, a.nallanathan\}@qmul.ac.uk). }
\thanks{Y. Deng is with Department of Engineering, King's College London, London WC2R 2LS, U.K. 
(Corresponding author: Yansha Deng (e-mail:yansha.deng@kcl.ac.uk).)}
% \thanks{J. Yuan is with School of Electrical Engineering and Telecommunications, The University of New South Wales, Sydney, Australia
% (e-mail:j.yuan@unsw.edu.au).
% %(Corresponding author: Yansha Deng (e-mail:yansha.deng@kcl.ac.uk).)
% }
\thanks{G. K. Karagiannidis is with Department of Electrical and Computer
Engineering, Aristotle University of Thessaloniki, Thessaloniki 54 124, Greece
(e-mail: geokarag@auth.gr).}

%\vspace*{+0.3cm}
}

\maketitle

%\vspace*{-2cm}         
\begin{abstract}
% , such as
% factory automation, autonomous driving, industrial control, and robotic surgeries.
% self-driving cars, industrial control, and real-time gaming. 
%Such services
5G New Radio (NR) is expected to support new ultra-reliable low-latency communication (URLLC) service targeting at supporting the small packets transmissions with very stringent latency and reliability  requirements.  
Current Long Term Evolution (LTE) system has been designed based on grant-based (GB) (i.e., dynamic grant) random access, which can hardly support the URLLC requirements.
%is far from supporting such services, especially for uplink.
Grant-free (GF) (i.e., configured grant) access is proposed as a feasible and promising technology to meet such requirements, especially for uplink transmissions, which effectively saves the time of requesting/waiting for a grant.
% However, this comes at an increased likelihood of collisions resulting from uncontrolled channel access, when the same resources are preallocated to a group of users. 
While some basic GF access features have been proposed and standardized in NR Release-15, there is still much space to improve.
Being proposed as 3GPP study items, three GF access schemes with Hybrid Automatic Repeat reQuest (HARQ) retransmissions including Reactive,  K-repetition, and  Proactive, are analyzed in this paper. 
Specifically, we present a spatio-temporal analytical framework for the contention-based GF access analysis. 
Based on this framework, we define the latent access failure probability to characterize URLLC reliability and latency performances. 
We propose a tractable approach to derive and analyze the latent access failure probability of the typical UE under three GF HARQ schemes.
% Finally, based on the derived latent access failure probability , we derived the latent energy consumption in each TTI of the typical UE for the four schemes, respectively.
Our results show that under shorter latency constraints, the Proactive scheme provides the lowest latent access failure probability, whereas, under longer latency constraints, the K-repetition scheme achieves the lowest latent access failure probability, which depends on $K$. If $K$ is overestimated, the Proactive scheme provides lower latent access failure probability than the K-repetition scheme.
% K-repetition scheme could not offer good latent access failure probability performance in the lower SINR threshold and high UE density scenario, but could provide better latent access failure probability in other scenarios under longer latency constraints.
% We show that the contention based grant-free transmission with proper HARQ design can well meet the reliability requirement.
% It can be demonstrated from our results that the proposed grant-free transmission schemes are able to work together and result in significant performance gains in terms of latency, reliability as well as energy.
% Our results show that: 1) Power boost scheme almost outperforms other schemes in terms of the latent access failure performance but costs too much higher energy than other schemes; 2) K-repetition scheme achieves a good performance in terms of the reliability and latency-energy tradeoff in higher SINR threshold scenarios, but is not suited for scenarios with a lower SINR threshold; 3) The early termination makes the Proactive scheme achieve a good performance in some latency budget, as it saves time and energy wasted by the blind repetitions.

\end{abstract}

%\vspace*{+0.3cm}

\begin{IEEEkeywords}
URLLC, 5G NR, Grant free  access, HARQ
\end{IEEEkeywords}

\section{Introduction}
%The emerging 5G wireless networks are expected to support diverse use-cases which can be broadly classified into three categories [1] enhanced mobile broadband (eMBB), massive machine type communications (mMTC) and ultra-reliable low-latency communications (uRLLC).
%Owing to stringent reliability and latency targets, the most challenging design requirements are created by uRLLC which is the key enabler for the various critical applications across different vertical industries [2].
The Fifth Generation (5G) New Radio (NR) considers three new communication service categories: enhanced Mobile Broadband (eMBB), massive Machine-Type Communications (mMTC), and Ultra-Reliable Low-Latency Communications (URLLC) \cite{3gpp2018study}\cite{series2015imt}.
Among them, the URLLC service is an essential element for applications, including factory automation\cite{7497764}, automation vehicles\cite{8246845}, remote control\cite{8663990}, and virtual/augmented reality (VR/AR)\cite{8663985}, which has stringent requirements on low latency and high reliability for small packets transmissions.
% (e.g., generally $1-10^{-5}$ reliability with 1ms user plane latency, and 0.5ms for both downlink (DL) and uplink (UL)\cite{3gpp2018study}) 
% A general URLLC requirement is $1-10^{-5}$ reliability with 1ms user plane latency for 32 bytes (0.5ms for both downlink (DL) and uplink (UL))\cite{3gpp2018study}.
The Third Generation Partnership Project  (3GPP) has defined a general URLLC requirement: $1-10^{-5}$ reliability within 1ms user plane latency\footnote{User plane latency is defined as the one-way latency from the processing of the packet at the transmitter to when the packet has been received successfully and includes the transmission processing time, transmission time and reception processing time.} 
for 32 bytes (0.5ms for both downlink (DL)  and uplink (UL))\cite{3gpp2018study}.
% A general URLLC reliability requirement for one transmission of a packet is 1-10-5 for 32 bytes with a user plane
% latency of 1ms.
More details about the variety of different traffic characteristics and the requirements of some URLLC use cases can be found in \cite{3gpp22261}. %\cite{kim2016grant}\cite{3gpp22261}.
% \cite{8474959}
For example, the automation use case requires $1-10^{-5}$ reliability within  10ms for remote motion control; 
the intelligent transportation use case requires $1-10^{-6}$ reliability within 5ms
for cooperative collision avoidance.

Current Long Term Evolution (LTE) system can hardly fulfill  the URLLC requirements. Especially in the uplink, current LTE utilizes a scheduling based transmission mode, namely, grant-based (GB) scheduling as specified in \cite{36213}.
% a User Equipment (UE) needs to send scheduling request to
% the serving base station in a dedicated and periodic resource
% and then wait for the scheduling grant from the base station
% in some slots later.
This conventional GB scheduling is initiated by the User Equipment (UE) with an access request to the network in which the Base Station (BS) can respond by issuing an access grant through a four-step random access (RA) procedure as shown in Fig. 1.
Such scheduling-request-triggered transmission would take at least 10ms before starting the data transmission, which is far from the URLLC latency requirement.
% traffic is transmitted within 0.5 ms after the uplink traffic arrived at the buffer for transmission 
Recently, grant-free (GF) access has been proposed and extensively discussed in 3GPP RAN WG1 \cite{1167309,11705245,11705654} to cope with the URLLC requirement in the uplink transmission. 
% This scheduling-request-triggered transmission is not fast enough to guarantee that the URLLC traffic is transmitted within 0.5 ms after the uplink traffic arrived at the buffer for transmission.
With uplink GF access, a UE with a small packet can transmit data along with required control information in the first step transmission itself. 
This can greatly reduce the RA and data transmission latency, as the scheduling request and grant issuing step in  GB RA are removed as shown in Fig. 1.

In the GF transmission, the frequency resource can be reserved in advance or allocated at the time when there is a request. Preallocation of the dedicated resource, known as Semi-Persistent-Scheduling (SPS)\cite{1167309}, is more suitable for periodic traffic with a fixed pattern,  whereas contention-based GF transmission over the shared resource is more suitable for sporadic packets, as it is more efficient and flexible in terms of resource utilization.
However, contention-based GF transmission is subject to potential collisions with other neighbouring UEs transmitting simultaneously over the shared resource, thus jeopardizing the transmission reliability.

A standard technique to improve transmission reliability, which has been adopted in various wireless standards, is Hybrid Automatic Repeat reQuest (HARQ) retransmission\cite{vangelista2018performance}.
Conventional HARQ allows for retransmissions only upon reception of a Negative ACKnowledgement (NACK).
This requires the BS to first receive the packet for detection, then issue the feedback. 
This is the so called \textit{Reactive} (Reac) scheme, where retransmissions are triggered only when there is a failure in the previous transmission.
% To improve reliability, retransmissions schemes such as Hybrid Automatic Repeat reQuest
% (HARQ) have been considered.
%and other HARQ strategies to further reduce latency and improve reliability are needed for URLLC.
%Then, a next level benefit by GF is to have longer time for HARQ retransmissions, and thus implies higher reliability within the same latency budget, compared to GB transmissions.
%However, the reactive HARQ scheme can only support a limited number of retransmissions before the URLLC requirementsis no longer met. 
%Therefore different HARQ strategies to further reduce latency and improve reliability have beenrecently studied. 
%One technique that has been considered for 5G, is to run a number of blind transmissions of the same payload.  The BS can then perform soft combining of the transmissions to improve the decoding reliability [13]. 
%Such kind of solution is already part of the recent 3GPP agreements for NR and are referred to as K-repetition (K-Rep) [14].
% One technique that has been considered for
% 5G, is to run a number of blind transmissions of the same
% payload. 
% The BS can then perform soft combining of the
% transmissions to improve the decoding reliability [13]. Such
% kind of solution is already part of the recent 3GPP agreements
% for NR and are referred to as K-repetition (K-Rep) [14
However, the Reactive scheme introduces additional latency, as the UE needs to wait for the feedback before performing a retransmission, which is determined by the HARQ round-trip-time (RTT), i.e., the time duration of the cycle from the beginning of the transmission until processing its feedback\cite{sesia2011lte}.
Thus, the Reactive scheme only allows for a limited number of retransmissions due to the stringent latency requirement of URLLC service\cite{sesia2011lte}, and this fact motivates more research for advanced HARQ schemes to be integrated with GF transmission to provide reduced latency and enhanced reliability.
% A standard technique, which has been adopted in various wireless standards, Hybrid Automatic Repeat reQuest (HARQ) retransmission, can be integrated with GF transmission 
% Then, GF transmissions with Hybrid Automatic Repeat reQuest (HARQ) retransmissions can be adopted 
% to improve the reliability.
% within the same delay budget, compared to GB transmissions.without scheduling request and dynamic grant, 
% However, the benefits of time diversity could be rather limited under stringent latency constraints for URLLC service as the number of HARQ round trips and channel uses is rather limited. 

% In this scenario, new HARQ schemes can be integrated with GF transmission.

% However, the benefit of feedback-based retransmissions (even with error-free but delayed feedback) is questionable since each transmit packet is much smaller due to energy and latency constraints, thus more prone to errors. 

\begin{figure}
	\centering
	\includegraphics[width=3.5in,height=2.5in]{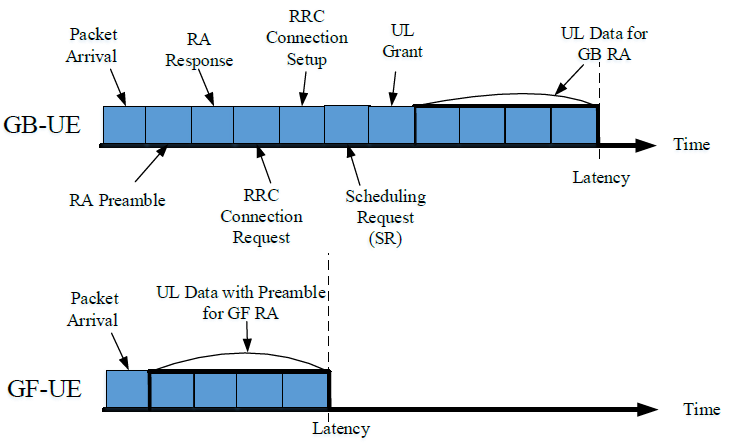}
	\caption{Uplink transmissions for grant-based and grant-free random access}
	\label{fig:my_label}
\end{figure}

% The potential collision however necessitates different means of collision avoidance and resolution in order to meet reliability targets.

One scheme is the \textit{K-repetition} (Krep) scheme supported  in  the 3GPP NR Release-15\cite{3gpp38214}, where the pre-defined number ($K_{\rm Krep}$) of consecutive replicas of the same packet are transmitted without waiting for the feedback, and then the BS performs soft combining of these repetitions to improve the reliability \cite{1705246}.
Another candidate scheme is known as the \textit{Proactive} (Proa) scheme, which has been discussed  in\cite{1612246}\cite{1903079}.
In a Proactive scheme, the UE still repeats transmissions in consecutive transmission time intervals (TTIs) like K-repetition scheme with maximum $K_{\rm Proa}$ times, but if the UE receives and decodes a positive feedback (ACK) from the BS before reaching maximum $K_{\rm Proa}$ times, the repetition will be terminated to reduce latency. 
It is noted that this scheme is more computational heavy for the UE, as the UE has to monitor the feedback.
% However, it is also likely to be more resource efficient than K-repetition scheme if the number of blind repetitions is overestimated and more reliable if the number is underestimated.

Another standard technique to enhance reliability is the efficient random access control mechanism, including the Access Class Barring (ACB), the Back-Off (BO) and the Power Boosting (PB) schemes\cite{nan2018random}.
However, both the ACB and  BO schemes make a group of UEs completely barred in specific time slots, which will introduce extra latency for these UEs.
As such, we just consider the GF HARQ schemes integrated with the PB, which can quickly compensate unexpected Signal-to-Interference plus-Noise Ratio (SINR) degradations at the initial transmissions\cite{4Abreu}.
% This is motivated by the fact that HARQ retransmissions within the latency deadline should be prioritized to improve the success probability \cite{4Abreu}. 
Specifically, if a transmission fails, the UE uses the full path-loss inversion power control to maintain the average received power at a higher power level in the next retransmission, where the power control is one candidate technology component for uplink transmission with the focus on improving the reliability.

Despite that the aforementioned GF access designs are proposed to govern the URLLC service, their theoretical formulations and comparative insights have never been fully established.
Recent works \cite{4Abreu}\cite{8877253} have evaluated the Reactive, K-repetition, and Proactive schemes for URLLC service through system-level simulation without analytical characterization.
The authors in \cite{4Abreu} claimed that the effects of inter- and intra-cell interference, queuing and time-frequency variant channels, are difficult or even infeasible to evaluate with analytical models.
% The authors in \cite{4Abreu} have evaluated the effect of power control and power boosting for GF  transmissions using system level simulations.
% These simulations permit to study effects of inter- and intra-cell interference, queuing and time-frequency variant channels, which would be difficult or even unfeasible to evaluate with analytical models.
This is because existing wireless systems were designed mainly to maximize the data rates of the long packet transmission,  the short packet transmission in URLLC service challenges the existing wireless system in terms of the joint reliability and latency requirements.
To cope with it, correctly modeling and analyzing the reliability and latency is fundamentally important, but the interplay between latency and reliability brings extra complexity.
In this paper, we address the following fundamental questions:
1) how to quantify the URLLC reliability and latency performances;
2) how to examine whether  different GF schemes satisfy the URLLC reliability and latency performances or not;
3) how to evaluate which GF scheme performs better in a certain specific scenario.
To do so, we present a novel spatio-temporal mathematical framework to analyze and evaluate both the reliability and latency performances for three different GF HARQ schemes. 
The main contributions of this paper can be summarized in the following:
% \begin{itemize}
%     \item We give an overview of contention-based GF random access schemes in 5G NR and discussing their procedures.
% \end{itemize}

\begin{itemize}
    \item We present a novel spatio-temporal mathematical framework for analyzing contention-based GF HARQ schemes for URLLC service  by using stochastic geometry and probability theory.
    In the spatial domain, stochastic geometry is applied to model and analyze the  mutual interference among active UEs (i.e., those with non-empty data buffer). In the time domain, probability theory is applied to model the correlation of  the buffer state and the transmission state over different time slots.
\end{itemize}
% \begin{itemize}
%     \item In the spatial domain, stochastic geometry is used to account for mutual interference among active UEs (i.e., those with non-empty data buffer). In the time domain, probability theory is used to account for the buffer state and the transmission state.
% \end{itemize}

\begin{itemize}
    \item Based on this framework, we propose a tractable approach to characterize and analyze the URLLC performances of a randomly chosen UE by defining the latent access failure probability.
    We then derive the exact closed-form expressions for the latent access failure probabilities of the UE under three different contention-based GF HARQ schemes, including Reactive, K-repetition, and Proactive schemes, respectively.
\end{itemize}

% \begin{itemize}
%     \item 
% \end{itemize}

% \begin{itemize}
%     \item In terms of the latent access failure probability , we first derive the exact expressions for four different GF schemes without HARQ retransmissions, respectively, and then extend the analysis to GF schemes with HARQ retransmissions. Finally based on the derived latent access failure probability , we derive the expressions of the energy consumption in each TTI of the UE for the four GF HARQ schemes, respectively.
% \end{itemize}

% \begin{itemize}
%     \item 
%     We first derive the exact expressions of the latent access failure probability for different GF schemes without HARQ retransmissions, respectively, and then extend the analysis to GF schemes with HARQ retransmissions. We also derive the expressions of the latency and the energy consumption of the UE for different GF schemes.
% \end{itemize}

\begin{itemize}
    \item We develop a realistic simulation framework to capture the randomness locations, pilot and data transmissions as well as the real packets of each UE in each TTI to verify our derived latent access failure probability. 
    We compare  the effectiveness of the three different GF HARQ schemes.
    Our results show that the Proactive scheme provides the lowest latent access failure probability under shorter latency constraints, while the K-repetition scheme has the lowest latent access failure probability as well as the most improvement with PB under longer latency constraints.
    % We believe the presented model and results can be used as a reference for empirical system-level analysis of uplink grant free solutions.
\end{itemize}

% \begin{itemize}
%     \item The analytical model presented in this paper can also be applied for the reliability and latency performance evaluation of other types
% of GF HARQ schemes in the cellular-based networks.
% \end{itemize}

The rest of the paper is organized as follows. Section II provides the problem formulation and system model. 
Section III analyzes the URLLC performance by deriving the expressions of the latent access failure probability of a randomly chosen UE under three different GF HARQ schemes.
Section  IV  provides numerical results. 
Finally, Section V  concludes the work.

% 

%Techniques to improve the supported load with GF random access while ensuring high reliability and low latency is are therefore currently being
%discussed in academic rese\cite{181477}. 
%State of the art solutions include GF transmissions with K-repetition, where a pre-defined number of replicas are transmitted, and proactive repetition\cite{8340053}.

\section{Problem Formulation and System Model}

\subsection{Network Model}
We consider a single layer cellular network, where the BSs and the UEs are spatially distributed following two independent Poisson Point Processes (PPPs) $\Phi_{\rm{B}}$ and $\Phi_{\rm{D}}$ with intensities $\lambda_{\rm{B}}$ and $\lambda_{\rm{D}}$, respectively. 
% The deployed BSs provides coverage to the URLLC UEs.
We assume that each UE associates to its geographically nearest BS, where a Voronoi tessellation is formed. 
The UEs are connected and synchronized to the serving cell. 
% The semi-static configuration includes time and frequency resource allocation,
% modulation and coding scheme (MCS), power control settings and HARQ related parameters
Moreover, we consider additive noise with average power $\sigma ^2$ and a flat Rayleigh fading channel, i.e. the channel response is constant over the selected Resource Blocks (RBs), however, it can vary at every transmission or retransmission. 
The channel power gain $h$ is assumed to be exponentially distributed  with unit mean, i.e., $h\sim$ Exp(1).
All channel gains are assumed to be independent and identically distributed (i.i.d.) in space and time.
We consider the path loss model with the path-loss attenuation $x^{-\alpha}$, where $x$ is the propagation distance and  ${\alpha}$ is the path-loss exponent. 
We apply a full path-loss inversion power control at all UEs to solve the ``near-far'' problem, where each UE compensates for its own path-loss to keep the average received signal power equal to a same threshold $\rho$.
We also assume the density of BSs is high enough and no UE suffers from truncation outage \cite{nan2018random}.

%such that their transmissions are received at the same average power. %though their instantaneous receive power may change at each transmission due to Rayleigh fluctuations.

 \subsection{Contention-Based Grant-Free Access}

In this paper, we consider the uplink contention-based GF access for UEs with sporadic small packets with URLLC requirements, where UEs transmit data in an arrive-and-go manner without sending a scheduling request and receiving resource grant from the network. 
%According to \cite{nan2018collision}\cite{8533378}, each UE transmits its pilot and data simultaneously.
Each UE has a data buffer that stores packets received from higher layers.
%As only active UEs will try to request for uplink channel resources, we define the active probability of each IoT device pa ∈[0, 1] follows a Bernoulli process.
An i.i.d. Bernoulli traffic generation model with probability of $p_a\in [0, 1]$, is assumed at each buffer.
Note that we only consider a single packet sequence arrival.
This packet sequence will be removed from the buffer, i.e., the buffer becomes empty without new packets,  once it has been successfully transmitted, otherwise, this UE will wait and reattempt in the next HARQ retransmission.
%At each time slot, transmitters with non-empty buffers employ the F-ALOHA protocol with probability pfa  pa/Nc to access one of the Nc channels, where pa is the ALOHA transmission probability.

GF uplink transmissions occur in a slotted-ALOHA system based on OFDM (Orthogonal Frequency Division Multiplexing) within short-TTI\footnote{5G NR introduces the concept of `mini-slots' and supports a scalable numerology allowing the sub-carrier spacing (SCS) to be expanded up to 240 kHz. 
In contrast with the LTE slot duration of
14 OFDM symbols per TTI, mini-slots in 5G NR can
compose of 1-13 symbols.  
Collectively, this allows shorter transmission slots to meet the stringent latency requirement. 
}%Thisallows shorter transmission slots without increasing the SCS, which is particularly suitable for low frequency bands.
In this paper, the TTI refers to a mini-slot, which is shorter than the typical coherence times that are of the order of few milliseconds. But generally, the coherence time could be normalized. In addition, the repetitions could be performed over different RBs in frequency so that the channel gains i.i.d assumption is justified. \cite{7432148}.
%The transmissions occur in a frame based system like LTE and occur in transmission time intervals (TTI). %of mini-slots with 2 OFDM symbols.
%In this work, we assume a four symbol mini-slot at 30 kHz SCS, resulting in a transmission duration of 0.143 ms. 
%These assumptions follow the 3GPP NR URLLC evaluation agreements. 
The UEs are configured by radio resource control (RRC) signaling prior to the GF access (as Type 1 UL \cite{3Chairman}), with time and frequency resource, modulation and coding scheme (MCS), power control settings, and HARQ related parameters\cite{3gpp38824}.
% for GF transmission over a set of preallocated radio resources.
The configured UEs are connected and synchronized, thus being always ready for a GF transmission.
According to \cite{8533378}, we consider $N$ UEs pre-configured with $S$ orthogonal pilots, i.e.,
$S$ sub-carriers over one TTI, for their uplink GF transmissions in the frequency domain\cite{8069007}.
At the beginning of each round trip, UEs randomly move to new positions, and the active ones randomly select one of the available $S$ pilots to transmit with the data simultaneously\cite{nan2018collision}.
A collision occurs when the same pilot is transmitted by two or more UEs using the same sub-carrier, and received successfully by the same BS.
% Note that only active UEs in a TTI cause interference to each other.
%Note that in the 1st transmission, 
% The active factor is the probability of a
% user being active in a TTI.
% The \textit{active factor} of the UE, i.e., the probability that the UE has non-empty data buffer, in the 1st TTI, is ${\mathcal A}_1=p_a$.
According to the thinning process \cite{kingman1993poisson}, the density of active UEs choosing the same pilot can be derived as
\begin{align}\label{density}
{\lambda _{a}} = {p_a\lambda _D}/S.
\end{align}
% Using the RB consists of 7 OFDM symbols in the time domain, and the 60 kHz subcarrier spacing (SCS). The length of each TTI is 0.125 ms, which is much smaller than that in LTE, to allow the possibility to meet the stringent latency requirement.
% These assumptions follow the 3GPP NR URLLC evaluation
% agreements \cite{2Tel2018}.
% This leaves sufficient time budget for the first transmission, processing at the UE and HARQ retransmissions  within the one ms latency target for URLLC services.
%This waiting time is denoted as frame alignment (A). If the packet is successfully decoded, the BS sends an ACK feedback (F), otherwise it sends a NACK. After having received and decoded the feedback, the UE can decide to perform a retransmission (T).

% \begin{figure}
% 	\centering
% 	\includegraphics[width=4in,height=2.5in]{grant-free.PNG}
% 	\caption{Grant-free transmission procedure}
% 	\label{fig:my_label}
% \end{figure}

\subsection{Grant-Free Access Schemes}
% \begin{figure}
% \centering
%  \subfigure[Reactive GF transmission / Power boost GF transmission]
%  {\begin{minipage}[t]{0.50\textwidth}
%  \includegraphics[width=3.8in,height=2.8in]{reactive.pdf}
% % \subcaption{1}
%  \end{minipage}}
%  \subfigure[K-repetition GF transmission with K=4 repetitions]
%  {\begin{minipage}[t]{0.49\textwidth}
%  \centering
%  \includegraphics[width=3.8in,height=2.8in]{krepe1.pdf}
% % \subcaption{1}
%  \end{minipage}}
%  \subfigure[Proactive GF transmission with maximum K=8 repetitions ]
%  {\begin{minipage}[t]{1\textwidth}
%  \centering
%  \includegraphics[width=6in,height=3.6in]{proactive.pdf}
% % \subcaption{1}
%  \end{minipage}}
% \caption{The Uplink Grant-Free HARQ Schemes for URLLC. A = Frame alignment, T = Transmission, DP = Downlink Processing, UP = Uplink Processing, F = Feedback.} 
% \label{fig:1}
% \end{figure}
This section provides a general description of the three GF HARQ schemes considered in this paper. 
% Four GF HARQ schemes are considered in this paper.
For ease of description, we first present  definitions for general variables.
As illustrated in Fig. 2, the frame alignment (A) delay is denoted as $T_{\rm{fa}}$, the packet transmission (T) time is denoted as $T_{\rm{tx}}$, and the processing (DP) time at the BS is denoted as $T_{\rm{dp}}$. 
If the packet is successfully decoded, the BS sends an ACK feedback, otherwise it sends a NACK, where the ACK/NACK feedback (F) time is represented by $T_{\rm{fd}}$. 
After having received and processed the feedback, the UE can decide whether to perform a retransmission.
The processing time at the UE is denoted by $T_{\rm{up}}$.
The frame alignment delay $T_{\rm{fa}}$ is a random variable uniformly distributed between zero and one TTI\cite{8902865}. 
Depending on the packet size, channel quality and scheduling strategy, the transmission time $T_{\rm{tx}}$ can vary from one to multiple TTIs. 
Considering the small packets of URLLC
traffic, we assume $T_{\rm{fa}}$ = 1 TTI and $T_{\rm{tx}}$ = 1 TTI in this work same as\cite{8835946}. 
The BS feedback time $T_{\rm{fb}}$ and the BS (UE) processing time $T_{\rm{dp}}$ ($T_{\rm{up}}$) are also assumed to be one TTI.
Then, the latency framework of the three GF HARQ schemes are described as follows. 
\subsubsection{\textbf{Reactive scheme}}
% \begin{figure}
% 	\centering
% 	\includegraphics[width=6in,height=3in]{reactive.pdf}
% 	\caption{Reactive GF transmission / Power boost GF transmission}
% 	\label{fig:my_label}
% \end{figure}
\begin{figure}[htbp!]
    \begin{center}
    \begin{minipage}[t]{0.48\textwidth}
    \centering
        \includegraphics[width=3.1in,height=2in]{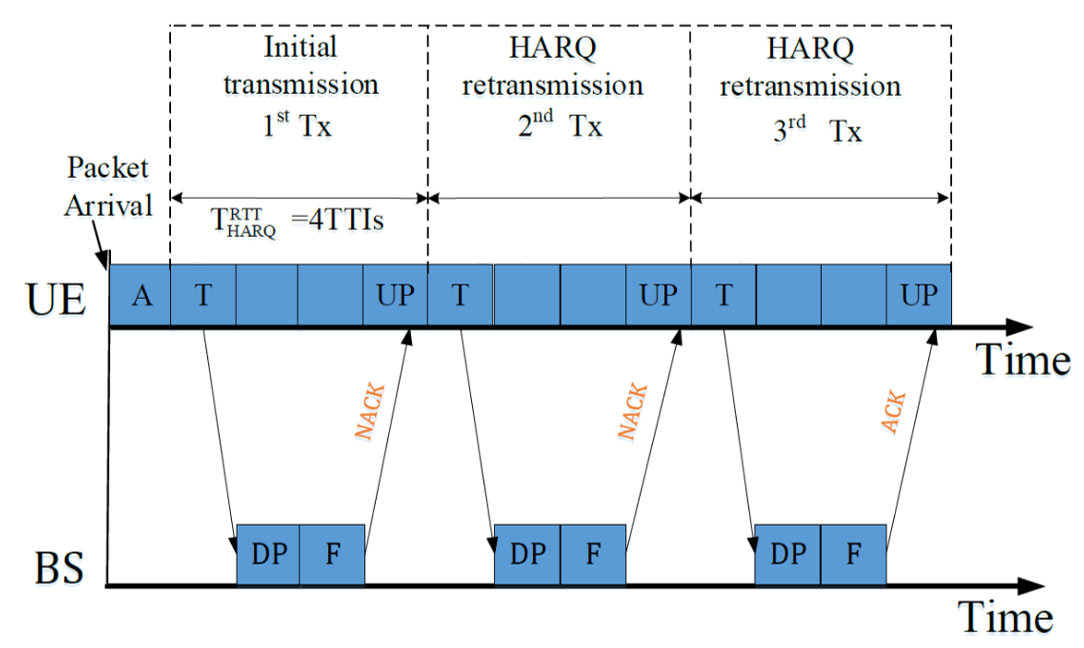}
        \vspace*{-0.4cm}
        \caption{\scriptsize Reactive GF transmission}
    \end{minipage}
    \label{fig:10}
        \begin{minipage}[t]{0.48\textwidth}
    \centering
        \includegraphics[width=3.1in,height=2in]{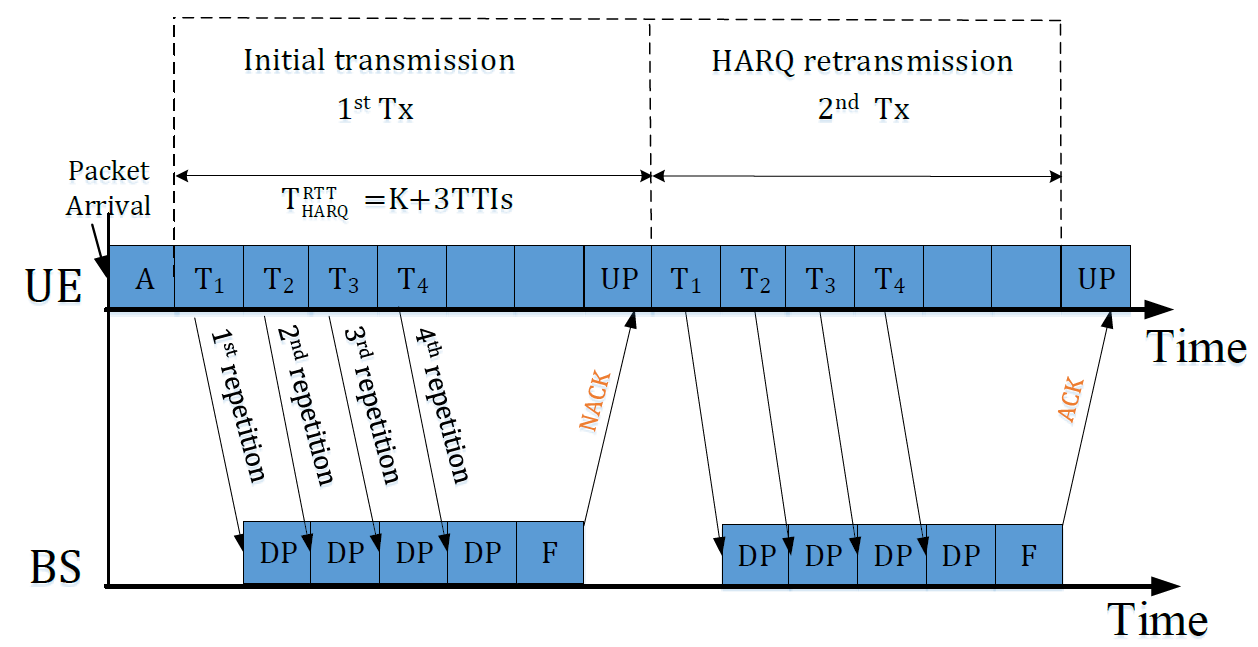}
        \vspace*{-0.4cm}
        \caption{\scriptsize K-repetition GF transmission with $K_{\rm Krep}$=4 repetitions}
        \end{minipage}
        \label{fig:11}
    \end{center}
\end{figure}
% A frame-based system alike LTE is assumed, meaning that transmissions can start on a frame basis. 
%The transmissions occur when the UE is already synchronized and in connected state. 
% The UEs are configured by radio resource control (RRC) signaling (as Type 1 UL \cite{3Chairman}). 
% The needed control information, such as time and frequency resource allocation, modulation and coding scheme (MCS), power control settings and HARQ related parameters, are configured by RRC signaling prior to the GF random access. 
% In the uplink, the configured users are connected and synchronized, thus being always ready for a URLLC transmission. 
%\subsection{Grant Free Schemes}
%GF access with four HARQ schemes are considered in this paper.

%However, transmissions are then susceptible to potential collisions from simultaneous transmissions of neighbouring nodes.
%This section presents a system-level simulations based performance comparison of several uplink 
%GF transmission procedures considering a large urban macro network.
The Reactive scheme is illustrated in Fig. 2. 
After the UE finalizes its initial uplink  transmissions (T), its signal will be processed at
the BS (DP) for a HARQ feedback (F) (ACK/NACK). 
After processing the HARQ feedback (UP), the UE retransmits the same packet upon reception of a NACK. 
In this scheme, we note that the HARQ round trip time 
\begin{align}\label{RTT_reactive}
T^{\rm{RTT}}_{\rm{Reac}}=4 {\rm TTIs}.    
\end{align}
Then the latency after $m$ HARQ round trips is obtained as
\begin{align}\label{latency_rea1}
&{T_{\rm{Reac}}}[m]
 ={T_{\rm{fa}}}+m{T^{\rm{RTT}}_{\rm{Reac}}} \nonumber\\
 &={T_{\rm{fa}}} +m({T_{\rm{tx}}} + 
 {T_{\rm{dp}}}+ {T_{\rm{fb}}} + {T_{\rm{up}}}) \nonumber\\
 &={1+4m} \ {\rm{TTIs}}.
\end{align}

%\begin{align}\label{latency_rea1}
%{T_{\rm{Reac}}}[m]={T_{\rm{fa}}}+m{T^{\rm{RTT}}_{\rm{Reac}}}={T_{\rm{fa}}} +m({T_{\rm{tx}}} + {T_{\rm{dp}}}+ {T_{\rm{fb}}} + {T_{\rm{up}}}).
%\end{align} 

\subsubsection{\textbf{K-repetition scheme}}
% \begin{figure}
% 	\centering
% 	\includegraphics[width=6in,height=3in]{krepe1.pdf}
% 	\caption{K-repetition GF transmission with K=4 repetitions}
% 	\label{fig:my_label}
% \end{figure}
The K-repetition scheme is illustrated in Fig. 3, where the  UE is configured to autonomously transmit the same packet for $K_{\rm Krep}$ repetitions in consecutive TTIs. 
At the end of $K_{\rm Krep}$ repetitions, the BS needs to combine the received repetitions, process the received packet, and feedback to the UE.
In this scheme, the HARQ round trip time 
\begin{align}\label{RTT_krep}
T^{\rm RTT}_{\rm Krep}=(K_{\rm Krep}+3) {\rm TTIs}.    
\end{align}
Then the latency after $m$ HARQ round trips is defined as 
\begin{align}\label{latency_krep}
&T_{\rm{Krep}}[m] 
={T_{\rm{fa}}}+mT^{\rm{RTT}}_{\rm{Krep}} \nonumber\\
&= {T_{\rm{fa}}} + m(K_{\rm Krep}{T_{\rm tx}} + {T_{\rm dp}}+ {T_{\rm fb}} + {T_{\rm up}})
\nonumber\\
&=1+m(K_{\rm Krep}+3)\ \rm{TTIs}.
\end{align}
% Each repetition can be identical, or consists of different redundancy versions of the encoded data. 
% This method can eliminate the RTT latency, with a potential resource wastage if the needed number of repetitions is overestimated.
% This method eliminates the RTT latency at the expense of potential resource wastage if the needed number of repetitions is overestimated.

\subsubsection{\textbf{Proactive scheme}}
The Proactive scheme is illustrated in Fig. 4. Similarly to the K-repetition scheme, the UE is configured to repeat the transmission for a maximum number of $K_{\rm Proa}$ repetitions but can receive the feedback after each repetition. This allows the UE to terminate repetitions earlier once receiving the positive feedback (ACK). 
% For the Proactive scheme with a maximum $K_{\rm Proa}$ repetitions, the description of the latency is more complicated. 
We note that the UE could receive the 1st feedback 3TTIs after the 1st repetition. 
\begin{figure}
	\centering
	\includegraphics[width=3.6in,height=2.1in]{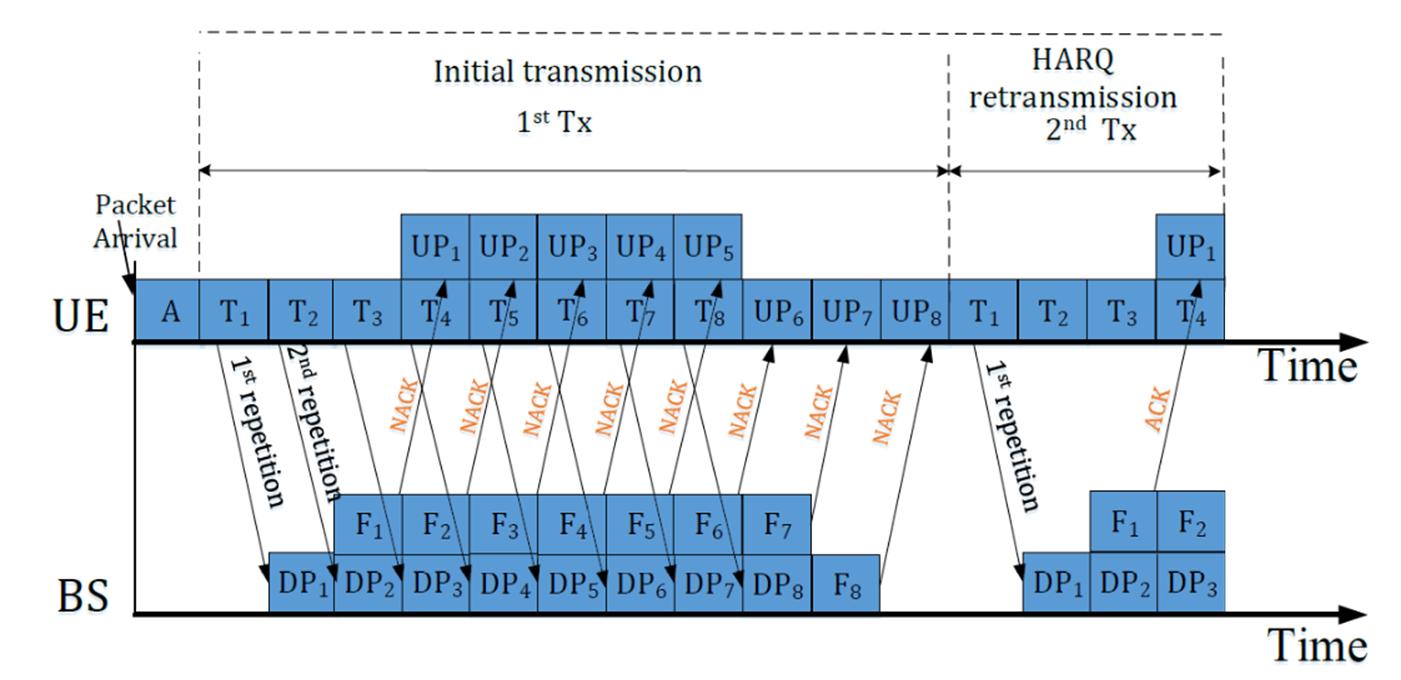}
	\caption{Proactive GF transmission with maximum $K_{\rm Proa}$=8 repetitions}
	\label{fig:my_label}
\end{figure}
That is to say, the minimum HARQ round trip time is 4.
For  $K_{\rm Proa}\leqslant4$ , the UE continues repetitions until maximum $K_{\rm Proa}$.
For $K_{\rm Proa}\geqslant5$, the UE continues repetitions until  either the UE receives ACK from the BS, or the number of repetitions reaches maximum $K_{\rm Proa}$ times\cite{2Tel2018}.
Let us denote the 1st access success of the typical UE occurs in the $l$th repetition during one HARQ round trip.
Thus, we have the single HARQ round trip time for the Proactive scheme as:
 \begin{align}\label{pro_rrt1}
 T_{{\rm Proa},K,l}^{\rm RTT}=
  \begin{cases}
    K_{\rm Proa}+3, 
  &\mbox{     $l=0$},\\
  l+3, 
  &\mbox{   $1\leqslant l\leqslant K_{\rm Proa}$}.
  \end{cases}
\end{align}
% \begin{align}\label{pro_rrt1}
%  T_{{\rm Proa},K,l}^{\rm RTT}=
%   \begin{cases}
%   K_{\rm Proa}+3, 
%   &\mbox{  $K_{\rm Proa} \leqslant 4 $ },\\
%     K_{\rm Proa}+3, 
%   &\mbox{   $K_{\rm Proa} \geqslant  5$  $\&$  $l=0$},\\
%   l+3, 
%   &\mbox{   $K_{\rm Proa} \geqslant  5$ $\&$ $l \le K_{\rm Proa}$}.
%   \end{cases}
% \end{align}
% \begin{align}\label{pro_rrt1}
% %\hspace{-0.5cm}
%  T_{{\rm Proa},K,l}^{\rm RTT}=
% \begin{cases}
% K+3, &\rm{if \ K \leqslant 4 \  or \ if \ K> 4 \ with \ none  \ repetition \ succeeds},\\
% l+3, &\rm{if \   K >4\ and \ the \ first \ success\ occurs \ in \ $the$ \ l $th repetition$}.
% \end{cases}
% \end{align}

Note that if $l=0$ , i.e., all the $K_{\rm Proa}$ repetitions in one HARQ round trip are not successful with $T_{{\rm Proa},K,0}^{\rm RTT}=K_{\rm Proa}+3$, the UE will perform HARQ retransmission in next round trip. 
Thus, the latency after $m$ HARQ round trips for the Proactive scheme with a maximum $K_{\rm Proa}$ repetitions can be derived as
% If all the $K_{\rm Proa}$ repetitions in one HARQ round trip are not successful with $T_{{\rm Proa},K,l}^{\rm RTT}=K_{\rm Proa}+3$, 
\begin{align}\label{latency_pro}
&T_{\rm Proa}[m]={T_{\rm fa}}+\underbrace{(m-1)T^{\rm RTT}_{{\rm Proa},K,0}}_{\rm I}+\underbrace{T^{\rm RTT}_{{\rm Proa},K,l}}_{{\rm I}{\rm I}}\nonumber\\
&={T_{\rm fa}} + (m-1)(K_{\rm Proa}+3)+T^{\rm RTT}_{{\rm Proa},K,l}\nonumber\\
&=l+4+(m-1)(K_{\rm Proa}+3) \ \rm{TTIs} \ (1 \leqslant  \textit{l} \leqslant K_{\rm Proa}),
\end{align}
where I denotes that the transmissions in all the former ($m-1$) HARQ round trips are  not successful; and II implies the possible case in the final $m$th HARQ round trip given in  (\ref{pro_rrt1}).
% A reduction of the overall transmission resources can be obtained compared to the K-repetition scheme in case the time spent for the K times transmission is higher than the HARQ RTT.
% Then a reduction of the overall transmission resources and latency can be obtained compared to the K-repetition scheme, in case K is overestimated; and the enhancement of the reliability can be guaranteed compared to the K-repetition, in case K is underestimated.

\subsection{Signal to Noise plus Interference Ratio (SINR)}
% According to \cite{nan2018collision}\cite{8533378}, each UE chooses a pilot sequence from a predetermined orthogonal pilot sequence set and transmits its selected pilot
% and data simultaneously. 
% A collision occurs when two or more
% devices choose the same pilot sequence.
Note that the GF access failure occurs due to the following two reasons: 1) a pilot cannot be recognized by the received BS, due to its lower received SINR than the SINR threshold $\gamma_{\rm th}$; 
2) the BS successfully receives two or more same pilots simultaneously, such that the collision occurs, and the BS cannot decode any collided pilots.
% random access is successful if 1) the received SINR at the BS is greater than the SINR threshold $\gamma_{th}$; and 2) no collision occurs (i.e., no other UEs successfully transmit to the typical BS simultaneously); otherwise access outage occurs.
Our model follows the assumption of collision model in \cite{nan2018collision}\cite{3gpp2011study}, where all these collision UEs would not be decoded at the BS.
Different from the data transmission with no intra-cell interference due to orthogonal resource allocation, the GF  access analysis in this work needs to take into account both the inter- and intra-cell interference\footnote{We consider intra-cell interference because the UEs in the same cell associated with the same BS may choose the same pilot. 
We consider the inter-cell interference due to that the UEs in different cells share the preamble sequence pool among BSs.
Similar with \cite{7917340}, we focus on providing a general analytical framework of cellular network,  considering both the inter- and intra-interference.}.
% due to that the UEs in the same cell associated to the same BS may choose the same preamble.
We formulate the SINR of a typical BS located at the origin as
\begin{align}\label{SINR1}
{\rm{SINR}}_m = \displaystyle\frac{g_m\rho h_0}{{\mathcal {I}_{\rm intra} +\mathcal {I}_{\rm inter} + {\sigma ^2}}},
\end{align}
where $\rho$ is the full path-loss inversion power control threshold, $h_0$ is the channel power gain from the typical UE to
its associated BS, $\sigma_2$ is the noise power, ${\mathcal {I}}_{\rm intra}$ and ${\mathcal {I}}_{\rm inter}$ are the aggregate intra-cell and inter-cell interference, which will be discussed in
detail in Section III,
and $g_m$ denotes the power level unit in the $m$th retransmission  by adjusting the target received power at the BS equal to $g_m\rho$\cite{nan2018collision}\cite{1545847} (i.e., $g_1<g_2<...<g_m<...<g_J$). Note that $g_J$ is the maximum allowable power level unit.
% In this way, the access success probability in future transmissions can be improved over time.

\subsection{Problem Formulation and Objectives}
The URLLC requirement of the UL GF transmission is that the UEs can successfully complete their payload delivery within a limited time, i.e., $T_{\rm latency}\leqslant \cal T$, with a failure probability lower than a certain target, i.e., ${\cal P}_{F}\leqslant \varepsilon $. 
% Note that the GF access is failure, i.e., if 1) the received SINR at the BS is greater than the SINR threshold $\gamma_{th}$; and 2) no collision occurs (i.e., no other UEs successfully transmit to the typical BS simultaneously).
% Consequently, the outage probability is presented as
% \begin{align}\label{mth_data_transmission}
% &\mathcal{P}_{\rm out}
% %\\\nonumber
% = \sum\limits_{{n} = 0}^\infty  {\bigg\{{ \underbrace{{{\rm O}[{n},m]}}_{\rm I}
% \underbrace{\Theta^{\rm Krep} [{n},{K},m]}_{{\rm I}{\rm I}}
% \underbrace{{{\Big( {1 - \Theta^{\rm Krep} [{n},{K},m]} \Big)}^{{n}}}}_{{\rm I}{\rm I}{\rm I}}} \bigg\}},% \end{align}
% where collision events are detected by BS after it decodes the preambles in the step 1 of RA, and then no response will be fedback from the BS to the IoT devices, such that it can not proceed to the next step of RA [7,
% From the perspective of a UE, with a predefined latency constraint $\cal T$, we can define the reliability of a communication setup as the probability that the latency does not exceed this deadline ($T_{\rm latency}\leqslant \cal T$), and outage as the probability that it does ($P_{\rm out}\leqslant \varepsilon $).
For its performance characterization, we define the \textit{latent access failure probability}
as ${\cal P}_{F}(T_{\rm latency}\leqslant \cal T)$.
%From this perspective, it is meaningful to consider a probabilistic QoS in the following form
To address this inherent limitation of URLLC requirements, it is meaningful to consider a probabilistic Quality of Service (QoS) in the following form:
\begin{definition}
(URLLC QoS). We say the URLLC QoS of the
UE is satisfied in a given frame if:
%The QoS Level of the user is satisfied if 
%The latent access failure probability is defined as 
\begin{align}\label{definition}
{\cal P}_{F}(T_{\rm latency}\leqslant \cal T) \leqslant \varepsilon.
\end{align}
In 5G specification, $\varepsilon=10^{-5}$ and ${\cal T}=1$ ms for general URLLC requirement\cite{3gpp2018study}.
\end{definition}

% Note that the outage occurs if 1) the received SINR at the BS is lower than the SINR threshold $\gamma$, so the BS cannot decode the UE; 2)  the collision occurs (i.e., two or more UEs successfully transmit to the typical BS simultaneously).

% Here, we discuss the applicability of improved open loop power control for GF transmissions. 
% The considered power control is given by: P[dBm] = min $\{P_{max}, P_0 + 10 log10(M) + \alpha PL + g(k)\}$; where $P_{max}$ is the maximum
% transmit power, $P_0$ is the target receive power per resource block, M is the number of resource blocks, $\alpha$ is
% the fractional path loss compensation factor, $PL$ is the path loss and the function $g(k)$ gives a power boost
% step for the $k$th transmission.

% \subsection{Problem Formulation and Objectives}

% For comparison, we calculate the E2E delays to
% obtain quantitative results
%In order to calculate the HARQ latency time $T_{HA}$ and the outage probability $P_{\rm out}$, let us define the following variables:$K$: transmission times$T$: time duration of one TTI

\section{Performance Analysis and Evaluation}

This section presents a general analytical model for three different GF HARQ schemes.
% According to (\ref{definition}), the latent access failure probability depends on the latency constraint $\cal T$. 
We perform the analysis on a randomly chosen active UE in terms of the latent access failure probability under different latency constraints for three different GF schemes, respectively, in the following.

\subsection{Reactive scheme}
In the Reactive scheme as illustrated in Fig. 2, the latent access failure probability remains unchanged at the beginning of each HARQ round trip and only changes at the end of each HARQ round trip (i.e., after processing the feedback at the UE in the 4th TTI of this round trip), as the UE needs time to transmit packet and receive feedback.
For example, in one HARQ round trip (e.g., the $m$th round trip), for ${\cal T}=(m-1)T^{\rm RTT}_{\rm Reac}+2,(m-1)T^{\rm RTT}_{\rm Reac}+3,(m-1)T^{\rm RTT}_{\rm Reac}+4$ TTIs, the latent access failure probabilities are the same as ${\cal P}_{F}^{}[T_{\rm latency}\leqslant{\cal T}-1]$, since the UE can not receive feedback on time; for ${\cal T}=(m-1)T^{\rm RTT}_{\rm Reac}+5=mT^{\rm RTT}_{\rm Reac}+1$ TTIs, the latent access failure probability ${\cal P}_{F}^{}[T_{\rm latency}\leqslant{\cal T}]$ changes determined by the UE's retransmission or not after receiving NACK or ACK, respectively.

In order to calculate the latent access failure probabilities under various latency constraints, we need to know the maximum number $M$ of HARQ round trips allowed under the latency constraint $\cal T$ TTIs.
%\textcolor{red}{\footnote{Here, $\cal T$ means $\cal T$ (TTIs), if  }}.
For ease of presentation, we define
\begin{align}\label{M_reac}
M=\lfloor {({\cal T}-1)}/T^{\rm RTT}_{\rm Reac} \rfloor,   
\end{align}
%to imply the maximum number of HARQ round trips allowed under the latency constraint $\cal T$ 
with $T^{\rm RTT}_{\rm Reac}=4$ TTIs given in (\ref{RTT_reactive}).

% when $\cal T=$ 6, 7, and 8 TTIs, $M=1$ and the latent access failure probabilities ${\cal P}_{\rm out}^{}[T_{\rm latency}\leqslant{\cal T}]$ remain unchanged.

Note that inactive UEs (with empty data buffer) do not transmit, such that  they  do  not generate interference.
A UE is still active in the $m$th ($1\leqslant m\leqslant M $) round trip if none of its GF access in the last ($m-1$) round trips are successful.
%As mentioned before, whether  a UE  is  active  or  not in  the $m$th RTT depends on that whether its data transmission is successful in the $m-1$th RTT.
Mathematically, the active probability $\mathcal{A}_m^{}$ of the UE in the $m$th round trip, 
% which are decided by ${\cal P}_{\rm out}^{}[T_{\rm latency}\leqslant (m-1)T^{\rm RTT}_{\rm Reac}+1]$.  
is obtained as
% \begin{align}\label{active probability}
% {\mathcal{A}_m^{}} = \bigg\{ \begin{gathered}
%   {\cal P}_{\rm out}^{m}[T_{\rm latency}\leqslant{\cal T}_{}], m>1, \hfill \\
%   1, m = 1. \hfill \\ 
%   0, m = 0. \hfill \\
% \end{gathered}
% \end{align}
% \begin{align}\label{active_probability}
%  {\mathcal{A}_m^{}}=
%   \begin{cases}
%   {\cal P}_{\rm out}^{}[T_{\rm latency}\leqslant(m-1)T^{\rm RTT}_{\rm Reac}+1], 
%   &\mbox{if  $m \geqslant 1$},\\
% %   1, 
% %   &\mbox{if  $ m=1$}\\
%   1, 
%   &\mbox{if  $ m=0$ }.
%   \end{cases}
% \end{align}
% \begin{align}\label{active_probability}
%  {\mathcal{A}_m^{}}=
%   1-{\cal P}_{F}^{}[T_{\rm latency}\leqslant(m-1)T^{\rm RTT}_{\rm Reac}+1],
% \end{align}
\begin{align}\label{active_probability}
 {\mathcal{A}_m^{}}=
  1-{\cal P}_{F}^{}[T_{\rm latency}\leqslant T_{\rm{Reac}}[m-1]],
\end{align}
with $T_{\rm{Reac}}[m-1] $ obtained from \eqref{latency_rea1}.
% which means the probability that the UE fails to access in the $(m-1)$ round trips and needs to perform retransmission in the $m$th round trip.
% to derive the latent access failure probability  ${\cal P}_{\rm out}^{}[T_{\rm latency}\leqslant{\cal T}]$ with the maximum $M$ number of HARQ round trips, we need to  derive  

Based on (\ref{active_probability}), the latent access failure probability of a randomly chosen UE with the Reactive scheme is derived in the following \textbf{Theorem 1}.
\begin{theorem}
	The latent access failure probability of a randomly chosen UE with the Reactive HARQ scheme under the latency constraint ${\cal T}$ ${\rm TTIs}$ is derived as
	\begin{align}\label{reac_out}
	{\cal P}_{F}^{\rm Reac}[T_{\rm latency}\leqslant{\cal T}]=
	\begin{cases}
  1, 
  &\mbox{$M=0$},\\
%   1, 
%   &\mbox{if  $ m=1$}\\
  1-\sum\limits_{m = 1}^M{\mathcal{A}_m^{\rm Reac}}{{{\cal P}_m^{\rm Reac}}}, 
  &\mbox{$M\geqslant 1$},
  \end{cases}
	\end{align}
where $M$ is given in (\ref{M_reac}),
$\mathcal{A}_m^{\rm Reac}$ is given according to (\ref{active_probability}) as
\begin{align}\label{active_reac}
	\mathcal{A}_m^{\rm Reac}=
	\begin{cases}
  1, 
  &\mbox{$m=1$},\\
%   1, 
%   &\mbox{if  $ m=1$}\\
  1-\sum\limits_{i = 1}^{m-1}{\mathcal{A}_i^{\rm Reac}}{{{\cal P}_i^{\rm Reac}}}, 
  &\mbox{$ m\geqslant 2$},
  \end{cases}
	\end{align}
% \begin{align}
% \mathcal{A}_m^{\rm Reac}=1-\sum\limits_{i = 1}^{m-1}{\mathcal{A}_i^{\rm Reac}}{{{\cal P}_i^{\rm Reac}}}    
% \end{align},
and
${\cal P}_m^{\rm Reac}$ is the GF access success probability of the typical UE in the $m$th round trip with the Reactive scheme that derived in (\ref{mth_data_REAC}) of the following \textbf{Lemma 1}.
\end{theorem}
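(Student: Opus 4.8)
The plan is to establish the latent access failure probability by conditioning on the round trip in which the UE first succeeds, treating each round trip as a renewal point. First I would observe that under the latency constraint $\mathcal{T}$ TTIs, equation~(\ref{M_reac}) tells us that at most $M$ complete HARQ round trips can fit. If $M=0$, no round trip can complete within the deadline, so the UE cannot possibly succeed and the failure probability is trivially $1$; this handles the first branch. For $M\geqslant 1$, the key is to decompose the event of eventual success (within the deadline) into the disjoint events that the first success occurs precisely in the $m$th round trip, for $m=1,\dots,M$.

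The central step is to justify the product form $\mathcal{A}_m^{\rm Reac}\,\mathcal{P}_m^{\rm Reac}$ appearing in each summand. Here $\mathcal{P}_m^{\rm Reac}$ is the per-round-trip access success probability supplied by Lemma~1, and $\mathcal{A}_m^{\rm Reac}$ is the probability that the UE is still active (has not yet succeeded) entering the $m$th round trip. I would argue that the event ``first success in round trip $m$'' factors as ``still active at the start of round trip $m$'' and ``succeeds in round trip $m$.'' The active probability itself obeys the recursion in~(\ref{active_reac}): a UE is active in round trip~$1$ with probability~$1$ (the traffic model makes it active at the initial attempt), and for $m\geqslant 2$ it is active only if none of the previous $m-1$ round trips succeeded, i.e. $\mathcal{A}_m^{\rm Reac}=1-\sum_{i=1}^{m-1}\mathcal{A}_i^{\rm Reac}\mathcal{P}_i^{\rm Reac}$, which is exactly one minus the cumulative success probability through round trip $m-1$, matching~(\ref{active_probability}) combined with~(\ref{latency_rea1}). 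Summing the disjoint first-success probabilities over $m=1,\dots,M$ gives the total success probability, and the failure probability is its complement, yielding the second branch of~(\ref{reac_out}).

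I expect the main obstacle to be the independence justification underlying the factorization $\mathcal{A}_m^{\rm Reac}\,\mathcal{P}_m^{\rm Reac}$. The per-round-trip success probability $\mathcal{P}_m^{\rm Reac}$ is computed (in Lemma~1) for a \emph{typical} UE conditioned on being active in that round trip, but the set of interfering UEs is itself correlated across round trips through the spatio-temporal buffer dynamics: UEs that failed earlier remain active and continue to interfere, and the retransmission power levels $g_m$ escalate. The clean recursion treats the success probabilities $\mathcal{P}_m^{\rm Reac}$ as if the conditioning on ``still active'' decouples from the SINR event in round trip $m$. I would lean on the independent-thinning and i.i.d.-in-time assumptions stated in the network model (fresh UE positions and i.i.d. fading at each round trip, per the ``UEs randomly move to new positions'' assumption) to argue that, given the active density $\lambda_a$, the success event in round trip $m$ is conditionally independent of the prior failure history. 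Making this rigorous amounts to invoking the mean-field approximation on the interference field so that $\mathcal{P}_m^{\rm Reac}$ depends only on $m$ through the power level $g_m$ and the active density, not on the particular past trajectory; the bulk of the careful work therefore lies in Lemma~1, and Theorem~1 itself reduces to the combinatorial bookkeeping of the renewal decomposition once that conditional independence is granted.
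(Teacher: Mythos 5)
Your proposal follows essentially the same route as the paper's Appendix A: decompose the success event by the round trip in which the first success occurs, factor each term as $\mathcal{A}_m^{\rm Reac}\mathcal{P}_m^{\rm Reac}$ via the active-probability recursion, and take the complement (the paper simply works this out explicitly for $M=1,2,3$ and then iterates, whereas you state the general decomposition directly). Your additional remark that the factorization rests on conditional independence of the per-round-trip success events---justified by the re-positioning and i.i.d. fading assumptions and effectively deferred to Lemma~1---is a point the paper leaves implicit, but it does not change the argument.
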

\begin{proof}
	See Appendix A 
\end{proof}

\begin{lemma}
The GF access success probability of the typical UE in the $m$th round trip with the Reactive scheme is given by
\begin{align}\label{mth_data_REAC}
&\mathcal{P}_m^{\rm Reac} 
%\\\nonumber
= \sum\limits_{{n} = 0}^\infty  {\bigg\{{ \underbrace{{{\rm O}[{n},m]}}_{\rm I}
\underbrace{\Theta^{\rm Reac} [{n},{}m]}_{{\rm I}{\rm I}}
\underbrace{{{\Big( {1 - \Theta^{\rm Reac} [{n},{}m]} \Big)}^{{n}}}}_{{\rm I}{\rm I}{\rm I}}} \bigg\}},
\end{align}
where
\begin{align}\label{mth_condition_n_REAC}
{\rm O}[{n},m]=\frac
{{{c^{(c + 1)}}\Gamma (n + c + 1){{\big( {\displaystyle%\frac
{{{\mathcal{A}_m^{\rm Reac}}{\lambda _a}}}/{{{\lambda _B}}}} \big)}^{n}}}}{{\Gamma (c + 1)\Gamma (n + 1){{\big( {\displaystyle{{{\mathcal{A}_m^{\rm Reac}}{\lambda _a}}}/{{{\lambda _B}}} + c} \big)}^{n + c + 1}}}},
\end{align}
and
\begin{align}\label{trans_reac}
&\Theta^{\rm Reac} [{n},{}m]=
 \exp \big( {- \displaystyle\frac{{\gamma_{\rm th}}\sigma ^2}{g_m{\rho }}\big)}{{{(1 + {{\gamma_{\rm th}}})}^{-n}}}\nonumber\\
&\times{{\exp \Big(
%\displaystyle\frac{{ - k{\gamma _{th}}{\sigma ^2}}}{\rho } 
{ - {{( {{{\gamma_{\rm th}}}} )}^{\frac{1}{2 }}}\displaystyle{{{}{\mathcal{A}_m^{\rm Reac}{\lambda _a}/{\lambda _B}}}}{{}}\arctan({{({}{{\gamma_{\rm th}}})}^{\frac{1}{2 }}}) }\Big)}}, (g_m\leqslant g_J).
\end{align}
Part I is the probability of the number of intra-cell interfering UEs for a typical BS $N=n${\footnote{{Note that $N=n$ means there are $n$ number of intra-cell interfering UEs (without the typicals UE),  i.e., $n+1$ number of active UEs in one cell.}}} derived following \cite[Eq.(3)]{yu2013downlink},
where c = 3.575 is a constant related to the approximate PMF of the PPP Voronoi cell and $\Gamma (\cdot)$ is the gamma function.
Part II is the transmission success probability of the UE conditioning on $N = n$.
Part III is the transmission failure probability that the transmissions  from other $n$ intra-cell interfering UEs are not successfully received by the BS, i.e., the non-collision probability of the UE.

% \begin{align}\label{mth_transmission}
% &\Theta^{\rm Reac} [{n},{},m]=
% \exp \big( {\displaystyle{{ - {\gamma _{th}}{\sigma ^2}}}{\rho }\big)}
% \\\nonumber
% &\displaystyle\frac{{\exp \Big(
% %\displaystyle\frac{{ - k{\gamma _{th}}{\sigma ^2}}}{\rho } 
% { - 2{{( {{\gamma _{th}}} )}^{\frac{2}{\alpha }}}\displaystyle\frac{{{}{\mathcal{A}_m^{\rm Krep}\lambda _a}}}{{{\lambda _B}}}\int_{{{( {{\gamma _{th}}} )}^{ - \frac{1}{\alpha }}}}^\infty  {\Big[ {1 - {{\Big( {\displaystyle\frac{1}{{1 + {x^{ - \alpha }}}}} \Big)}^k}} \Big]} xdx} \Big)}}{{{(1 + {\gamma _{th}})}^{kn}}}.
% % \\\nonumber
% % &\displaystyle\frac{{\exp \bigg( {\displaystyle\frac{{ - r{\gamma _{th}}{\sigma ^2}}}{\rho } - 2{{( {{\gamma _{th}}} )}^{\frac{2}{\alpha }}}\displaystyle\frac{{{}{\mathcal{A}_m^{k_{rep}}\lambda _a}}}{{{\lambda _B}}}\int_{{{( {{\gamma _{th}}} )}^{ - \frac{1}{\alpha }}}}^\infty  {\Big[ {1 - {{\Big( {\displaystyle\frac{1}{{1 + {x^{ - \alpha }}}}} \Big)}^r}} \Big]} xdx} \bigg)}}{{{{(1 + {\gamma _{th}})}^{rn}}}}.
% \end{align}
\end{lemma}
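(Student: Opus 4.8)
The plan is to split the access-success event into its two physical requirements and then condition on the number of competing intra-cell UEs. For the typical UE to succeed in the $m$th round trip, two things must occur simultaneously: its own pilot must be decodable at the serving BS (received ${\rm SINR}_m \geq \gamma_{\rm th}$), and no other UE that selected the same pilot in the same cell may be decodable, since the collision model of \cite{nan2018collision} discards all pilots involved in a collision. Writing $N$ for the number of active, same-pilot intra-cell UEs other than the typical one, I would therefore express the success probability as
\begin{align}\label{sketch_decomp}
\mathcal{P}_m^{\rm Reac}=\sum_{n=0}^{\infty}\mathbb{P}(N=n)\,\mathbb{P}(\text{typical decoded}\mid N=n)\,\mathbb{P}(\text{no rival decoded}\mid N=n),
\end{align}
and then identify the three factors with Parts I, II and III of \eqref{mth_data_REAC} respectively.

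For Part I, the active UEs choosing a given pilot form a thinned PPP of intensity $\mathcal{A}_m^{\rm Reac}\lambda_a$ via \eqref{density} and the active probability \eqref{active_probability}, so the number of such UEs falling inside the typical Voronoi cell obeys the approximate cell-occupancy law of \cite[Eq.(3)]{yu2013downlink}. This is exactly ${\rm O}[n,m]$, with dimensionless mean load $\mathcal{A}_m^{\rm Reac}\lambda_a/\lambda_B$ and the fitted constant $c=3.575$; this step is a direct substitution needing no new argument.

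Part II is the core computation. Conditioning on $N=n$ and using $h_0\sim{\rm Exp}(1)$ in \eqref{SINR1}, I would turn the threshold event into an expectation of an exponential,
\begin{align}\label{sketch_theta}
\Theta^{\rm Reac}[n,m]=\mathbb{E}\Big[\exp\Big(-\tfrac{\gamma_{\rm th}}{g_m\rho}(\mathcal{I}_{\rm intra}+\mathcal{I}_{\rm inter}+\sigma^2)\Big)\Big],
\end{align}
which factorizes by independence into a deterministic noise term $\exp(-\gamma_{\rm th}\sigma^2/(g_m\rho))$ and the Laplace transforms of the two interference fields evaluated at $s=\gamma_{\rm th}/(g_m\rho)$. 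Under full path-loss inversion each of the $n$ intra-cell interferers contributes received power $g_m\rho h_i$ with $h_i\sim{\rm Exp}(1)$, so its Laplace transform is $(1+\gamma_{\rm th})^{-1}$ and the $n$ i.i.d. contributions give the factor $(1+\gamma_{\rm th})^{-n}$. The inter-cell term follows from the probability generating functional of the interferer PPP: after power inversion the received power from an out-of-cell UE depends on the ratio of its distances to its own and to the typical BS, and integrating $1-(1+\gamma_{\rm th}r^{-\alpha})^{-1}$ against the PPP intensity yields the factor $\exp(-(\gamma_{\rm th})^{1/2}(\mathcal{A}_m^{\rm Reac}\lambda_a/\lambda_B)\arctan((\gamma_{\rm th})^{1/2}))$. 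I expect this inter-cell Laplace transform to be the main obstacle. The subtlety is that the nearest-BS association forces every interferer to lie farther from the typical BS than from its own BS, which restricts the effective radial integration domain; specializing to $\alpha=4$ then collapses the integral to $\tfrac12(\gamma_{\rm th})^{1/2}\arctan((\gamma_{\rm th})^{1/2})$ rather than the unconstrained value $\tfrac{\pi}{4}(\gamma_{\rm th})^{1/2}$ one obtains over the full half-line, while normalizing by the mean cell area $1/\lambda_B$ supplies the load ratio $\lambda_a/\lambda_B$.

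Finally, for Part III I would invoke the exchangeability of the $n$ rival UEs: each is decoded independently with the same probability $\Theta^{\rm Reac}[n,m]$ computed above, so the probability that none of them is decoded, hence that no collision corrupts the typical UE, is $(1-\Theta^{\rm Reac}[n,m])^{n}$. Multiplying the three factors inside \eqref{sketch_decomp} and summing over $n$ reproduces \eqref{mth_data_REAC}. The one consistency point worth verifying is that treating the rivals' decoding events as i.i.d. is compatible with the mean-field approximation already adopted for the interference field, so that the same $\Theta^{\rm Reac}[n,m]$ may legitimately be reused in Parts II and III.
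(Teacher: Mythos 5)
Your proposal is correct and follows essentially the same route as the paper's own proof: condition on the Voronoi cell occupancy ${\rm O}[n,m]$ from \cite[Eq.(3)]{yu2013downlink}, compute $\Theta^{\rm Reac}[n,m]$ by factorizing the SINR event into the noise term times the intra-cell Laplace transform $(1+\gamma_{\rm th})^{-n}$ and the inter-cell Laplace transform via the PGFL with the association-induced exclusion region (the restricted lower integration limit that yields the $\arctan$ form at $\alpha=4$), and take the non-collision factor as $(1-\Theta^{\rm Reac}[n,m])^{n}$. The only presentational difference is that the paper's Appendix B confines itself to deriving Part II and leaves the decomposition and Parts I and III implicit in the lemma statement, whereas you spell out that decomposition explicitly — including the mean-field consistency caveat in reusing $\Theta^{\rm Reac}[n,m]$ for the rivals, which the paper adopts without comment.
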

\begin{proof}
See Appendix B.
\end{proof}

\begin{remark}
In (\ref{trans_reac}), it can be shown that the transmission success probability (II in  (\ref{mth_data_REAC})) of the typical UE is inversely proportional to the received {\rm SINR} threshold ${\gamma_{\rm th}}$ and the density ratio ${\lambda_a}/{\lambda_B}$. 
The  transmission failure probabilities of other interfering UEs (III in (\ref{mth_data_REAC})) (i.e, the non-collision probability of the typical UE) are directly proportional to the received SINR threshold ${\gamma_{\rm th}}$ and the density ratio ${\lambda_a}/{\lambda_B}$. 
Therefore, a tradeoff between transmission success probability and non-collision probability is observed.
\end{remark}

\subsection{K-repetition scheme}
In the K-repetition scheme as illustrated in Fig. 3, the latent access failure probability also changes  at  the  end  of  each  HARQ round trip similar to the Reactive scheme, but with longer round trip time $T^{\rm RTT}_{\rm Krep}$ TTIs given in (\ref{RTT_krep}).
More specifically, in one HARQ round trip (e.g., the $m$th round trip) of the K-repetition scheme, for ${\cal T}= (m-1)T^{\rm RTT}_{\rm Krep}+2, (m-1)T^{\rm RTT}_{\rm Krep}+3, ... , (m-1)T^{\rm RTT}_{\rm Krep}+K_{\rm Krep}+3$ TTIs, the latent access failure probabilities are the same as ${\cal P}_{F}^{}[T_{\rm latency}\leqslant{\cal T}-1]$; for ${\cal T}=(m-1)T^{\rm RTT}_{\rm Krep}+K_{\rm Krep}+4=mT^{\rm RTT}_{\rm Krep}+1$ TTIs, the latent access failure probabilities ${\cal P}_{F}^{}[T_{\rm latency}\leqslant{\cal T}]$ changes determined by the UE’s retransmission or not after receiving NACK or ACK, respectively.
% when ${\cal T}=K+5,..., K+8$ TTIs, the latent access failure probabilities ${\cal P}_{\rm out}^{}[T_{\rm latency}\leqslant{\cal T}]$ remain unchanged.
Let us define
\begin{align}\label{M_krep}
M=\lfloor {({\cal T}-1)}/T^{\rm RTT}_{\rm Krep} \rfloor,   
\end{align}
to imply the maximum number of HARQ round trips allowed under the latency constraint $\cal T$ {TTIs} with $T^{\rm RTT}_{\rm Krep}=K_{\rm Krep}+3$ TTIs, we can derive the latent access failure probability of a randomly chosen UE with the K-repetition scheme in the following \textbf{Theorem 2}.
\begin{theorem}
	The latent access failure probability of a randomly chosen UE with the K-repetition scheme under latency constraint ${\cal T}$ {\rm TTIs} is derived as
	\begin{align}\label{krep_out}
	{\cal P}_{F}^{\rm Krep}[T_{\rm latency}\leqslant{\cal T}]=
	\begin{cases}
  1, 
  &\mbox{  $M=0$}\\
%   1, 
%   &\mbox{if  $ m=1$}\\
  1-\sum\limits_{m = 1}^M{\mathcal{A}_m^{\rm Krep}}{{{\cal P}_m^{\rm Krep}}}, 
  &\mbox{  $ M\geqslant 1$ },
  \end{cases}
	\end{align}
% 	\begin{align}\label{krep_out}
% 	{\cal P}_{\rm out}^{\rm Krep}[T_{\rm latency}\leqslant{\cal T}] =1-\sum\limits_{m = 1}^M{\mathcal{A}_m^{\rm Krep}}{{{\cal P}_m^{\rm Krep}}}, (M=\lfloor {({\cal T}-1)}/T^{\rm RTT}_{\rm Krep} \rfloor),
% 	\end{align}
where $M$ is given in  (\ref{M_krep}), $\mathcal{A}_m^{\rm Krep}$ is obtained according to (\ref{active_probability}) as
\begin{align}\label{active_krep}
	\mathcal{A}_m^{\rm Krep}=
	\begin{cases}
  1, 
  &\mbox{$m=1$},\\
%   1, 
%   &\mbox{if  $ m=1$}\\
  1-\sum\limits_{i = 1}^{m-1}{\mathcal{A}_i^{\rm Krep}}{{{\cal P}_i^{\rm Krep}}}, 
  &\mbox{$ m\geqslant 2$},
  \end{cases}
	\end{align}
% 	\begin{align}\label{active_probability_krep}
%   {\mathcal{A}_m^{}}=
%   {\cal P}_{\rm out}^{}[T_{\rm latency}\leqslant(m-1)T^{\rm RTT}_{\rm Krep}+K+1], 
%   \end{align}
and ${\cal P}_m^{\rm Krep}$ is the GF access success probability of the typical UE in the $m$th round trip with the K-repetition scheme that derived in (\ref{access_krep}) of the following \textbf{Lemma 2}.
\end{theorem}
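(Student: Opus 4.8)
The plan is to mirror the proof of Theorem 1, since the K-repetition recursion is structurally identical to the Reactive one except that each HARQ round trip now occupies $T^{\rm RTT}_{\rm Krep}=K_{\rm Krep}+3$ TTIs instead of $4$. First I would fix the latency budget $\cal T$ and determine how many complete HARQ round trips it admits. From the latency expression \eqref{latency_krep}, the $m$th round trip finishes at $T_{\rm Krep}[m]=1+m(K_{\rm Krep}+3)$ TTIs, and (as argued in the text preceding the theorem) the failure probability stays flat between consecutive round-trip ends, changing only at those instants. Requiring $T_{\rm Krep}[m]\leqslant{\cal T}$ yields $m\leqslant({\cal T}-1)/T^{\rm RTT}_{\rm Krep}$, so the largest admissible index is exactly $M=\lfloor({\cal T}-1)/T^{\rm RTT}_{\rm Krep}\rfloor$ as in \eqref{M_krep}. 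If $M=0$ no feedback can arrive within $\cal T$, the UE cannot complete its delivery, and $\mathcal{P}_F^{\rm Krep}=1$.

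For $M\geqslant1$ I would decompose the successful-delivery event into the mutually exclusive events ``the first successful access occurs in the $m$th round trip'', for $m=1,\dots,M$. By the law of total probability, the probability of first success in round $m$ factorizes as $\mathcal{A}_m^{\rm Krep}\mathcal{P}_m^{\rm Krep}$, where $\mathcal{A}_m^{\rm Krep}$ is the probability that the UE is still active at the start of round $m$ (i.e.\ has failed in all earlier rounds) and $\mathcal{P}_m^{\rm Krep}$ is the per-round success probability conditioned on being active, supplied by Lemma 2. Summing these disjoint contributions and subtracting from one gives $\mathcal{P}_F^{\rm Krep}[T_{\rm latency}\leqslant{\cal T}]=1-\sum_{m=1}^M\mathcal{A}_m^{\rm Krep}\mathcal{P}_m^{\rm Krep}$. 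The recursion for $\mathcal{A}_m^{\rm Krep}$ then follows immediately: at $m=1$ the UE is certainly active, so $\mathcal{A}_1^{\rm Krep}=1$; for $m\geqslant2$ the UE is active precisely when it has not yet succeeded, so $\mathcal{A}_m^{\rm Krep}=1-\sum_{i=1}^{m-1}\mathcal{A}_i^{\rm Krep}\mathcal{P}_i^{\rm Krep}$, matching \eqref{active_krep}. This is the same unrolling used in \eqref{active_reac}.

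The step I expect to be the main obstacle is justifying this factorization rigorously, because it conceals the spatio-temporal coupling of the problem: the interference seen in round $m$ depends on how many contending UEs are still active, which itself depends on the success probabilities of all previous rounds. Following Appendix A, I would invoke the mean-field decoupling so that the density of still-active interferers in round $m$ is approximated by the thinned intensity $\mathcal{A}_m^{\rm Krep}\lambda_a$, which is precisely how $\mathcal{A}_m^{\rm Krep}$ enters the per-round success probability of Lemma 2. Granting this decoupling, the temporal recursion and the spatial success probability separate cleanly and the claimed closed form follows. The only K-repetition--specific changes relative to Theorem 1 are the value of $T^{\rm RTT}_{\rm Krep}$ (hence of $M$) and the use of the soft-combined per-round probability $\mathcal{P}_m^{\rm Krep}$ from Lemma 2 in place of $\mathcal{P}_m^{\rm Reac}$.
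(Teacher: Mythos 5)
Your proposal is correct and follows essentially the same route as the paper: the paper gives no separate proof for Theorem~2 but relies on the argument of Appendix~A (Theorem~1) with $T^{\rm RTT}_{\rm Reac}$ replaced by $T^{\rm RTT}_{\rm Krep}$ and $\mathcal{P}_m^{\rm Reac}$ replaced by the Lemma~2 quantity $\mathcal{P}_m^{\rm Krep}$, which is exactly your plan. Your phrasing via the law of total probability over the disjoint ``first success in round $m$'' events is just a compact restatement of the paper's term-by-term expansion $1-\mathcal{P}_1-(1-\mathcal{P}_1)\mathcal{P}_2-\cdots$ and its iteration, and your remark on the mean-field thinning $\mathcal{A}_m^{\rm Krep}\lambda_a$ makes explicit an approximation the paper uses silently in Lemma~2.
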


\begin{lemma}
The GF access success probability of the typical UE in the $m$th HARQ round trip with the K-repetition scheme is derived as
\begin{align}\label{access_krep}
&\mathcal{P}_m^{\rm Krep}= 
\nonumber \\
& \sum\limits_{{n} = 0}^\infty  {\bigg\{{ \underbrace{{{\rm O}[{n},m]}}_{\rm I}
\underbrace{\Theta^{\rm Krep} [{n},{m},K_{\rm Krep}]}_{{\rm I}{\rm I}}
\underbrace{{{\Big( {1 - \Theta^{\rm Krep} [{n},{m},K_{\rm Krep}]} \Big)}^{{n}}}}_{{\rm I}{\rm I}{\rm I}}} \bigg\}},
\end{align}
where
\begin{align}\label{mth_condition_n}
{\rm O}[{n},m]=\frac
{{{c^{(c + 1)}}\Gamma (n + c + 1){{\big( {\displaystyle%\frac
{{{\mathcal{A}_m^{\rm Krep}}{\lambda _a}}}/{{{\lambda _B}}}} \big)}^{n}}}}{{\Gamma (c + 1)\Gamma (n + 1){{\big( {\displaystyle{{{\mathcal{A}_m^{\rm Krep}}{\lambda _a}}}/{{{\lambda _B}}} + c} \big)}^{n + c + 1}}}},
\end{align}
and
\begin{align}\label{trans_krep1}
&\Theta^{\rm Krep} [{n},{m},K_{\rm Krep}]\nonumber \\
& =\sum\limits_{{k} = 1}^{{K_{\rm Krep}}} {{( - 1)}^{{k} + 1}}\Big( \begin{array}{l}
{K_{\rm Krep}}\\
{k}
\end{array} \Big)
\exp \big( {- \displaystyle\frac{{k\gamma_{\rm th}}\sigma ^2}{g_m{\rho }}\big)}{{{(1 + {{\gamma_{\rm th}}})}^{-kn}}}
 \nonumber \\
% \exp \big( {\displaystyle{{ - k{\gamma _{th}}{\sigma ^2}}}/{\rho }\big)}
&\times\displaystyle{{\exp \Big(
		%\displaystyle\frac{{ - k{\gamma _{th}}{\sigma ^2}}}{\rho } 
		{ - \displaystyle{{{}{\mathcal{A}_m^{\rm Krep}\lambda _a}}}/{{{\lambda _B}}}\Big( {{}_2{F_1}\Big( { - \frac{2}{\alpha },k;\frac{{\alpha  - 2}}{\alpha }; - {{\gamma_{\rm th}}}} \Big) - 1} \Big)} \Big)}}.
\end{align}
{Similar to \rm{\textbf{Lemma 1.}},} Part I is the probability of the number of intra-cell interfering UEs $N=n$.
Part II is the transmission success probability of the UE conditioning on $N = n$.
Part III is the {non-collision probability of the UE.}
\end{lemma}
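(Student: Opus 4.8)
The plan is to mirror the derivation of \textbf{Lemma 1} (Appendix B) and replace the single-shot transmission-success term by a $K_{\rm Krep}$-repetition one. First I would condition on the number $N=n$ of co-pilot intra-cell UEs sharing the typical BS. Since the active co-pilot UEs form a thinned PPP with the same Poisson--Voronoi cell statistics used in \textbf{Lemma 1}, the probability mass $\mathrm{O}[n,m]$ of \eqref{mth_condition_n} is unchanged: it is the approximate PMF of the number of users in a Voronoi cell from \cite{yu2013downlink}, with the active density scaled by $\mathcal{A}_m^{\rm Krep}$ as in \eqref{active_krep}. Conditioned on $n$, the typical UE's GF access succeeds in round trip $m$ iff (i) its own packet is decoded after the $K_{\rm Krep}$ repetitions, with probability $\Theta^{\rm Krep}[n,m,K_{\rm Krep}]$ (Part II), and (ii) none of the $n$ co-pilot interferers is decoded, i.e., no collision occurs (Part III). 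Treating the $n$ interferers as exchangeable and approximately independent, each fails with probability $1-\Theta^{\rm Krep}$, which gives the factor $(1-\Theta^{\rm Krep})^{n}$; summing the product $\mathrm{O}[n,m]\,\Theta^{\rm Krep}(1-\Theta^{\rm Krep})^{n}$ over $n$ yields \eqref{access_krep}.

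The new ingredient is $\Theta^{\rm Krep}$. I would model the $K_{\rm Krep}$-repetition decoding as a union of per-repetition decodable events, each governed by the per-TTI SINR in \eqref{SINR1}, and expand this union by inclusion--exclusion (this is precisely what produces the alternating-binomial structure of \eqref{trans_krep1}). The crucial modelling point is that, within one round trip, the interferers keep fixed positions but draw independent fading across repetitions; hence the joint success probability of any fixed set of $k$ repetitions factorizes over those repetitions once conditioned on the interferer locations. For such a set of $k$ repetitions, the exponential CCDF of the desired Rayleigh gain $h_0$ contributes the noise factor $\exp(-k\gamma_{\rm th}\sigma^2/(g_m\rho))$; the $n$ intra-cell interferers, whose received powers are equalized to $g_m\rho$ by full path-loss inversion, are location-independent and therefore contribute $[(1+\gamma_{\rm th})^{-n}]^{k}=(1+\gamma_{\rm th})^{-kn}$. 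Collecting the $\binom{K_{\rm Krep}}{k}$ equal-weight sets with the alternating signs then assembles the sum $\sum_{k=1}^{K_{\rm Krep}}(-1)^{k+1}\binom{K_{\rm Krep}}{k}(\cdots)$.

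The main obstacle is the inter-cell term. Because the repetitions share a single realization of the inter-cell interferer PPP (intensity $\mathcal{A}_m^{\rm Krep}\lambda_a$, cf. \eqref{density}), each interfering point's contribution to the joint success of $k$ repetitions is its per-repetition Laplace factor raised to the power $k$; applying the probability generating functional of the PPP then leaves an integral of the form $\int_0^\infty\big(1-(1+\gamma_{\rm th}u^{-\alpha/2})^{-k}\big)\,\mathrm{d}u$ after normalizing distances. Evaluating this integral in closed form is the delicate step, and it is exactly what yields the Gauss hypergeometric function ${}_2F_1(-2/\alpha,k;(\alpha-2)/\alpha;-\gamma_{\rm th})$, with the ``$-1$'' subtracting the no-interference baseline. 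As a consistency check, setting $k=1$ and $\alpha=4$ collapses ${}_2F_1(-\tfrac12,1;\tfrac12;-\gamma_{\rm th})-1$ to $\sqrt{\gamma_{\rm th}}\arctan\sqrt{\gamma_{\rm th}}$, recovering the single-transmission term of \textbf{Lemma 1}. Substituting the noise, intra-cell, and inter-cell factors into the inclusion--exclusion sum gives $\Theta^{\rm Krep}$ in \eqref{trans_krep1}, and the remaining work is bookkeeping; the two points needing care are the closed-form hypergeometric integral and the independence/exchangeability approximation underlying the non-collision factor $(1-\Theta^{\rm Krep})^{n}$.
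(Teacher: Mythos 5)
Your proposal follows essentially the same route as the paper's Appendix C: condition on the Voronoi-cell PMF ${\rm O}[n,m]$ with density thinned by $\mathcal{A}_m^{\rm Krep}$, expand the union of per-repetition success events by inclusion--exclusion (the paper's ``Binomial theorem'' step), factorize the joint $k$-repetition success over i.i.d.\ fading with interferer locations fixed within the round trip (giving the noise factor $\exp(-k\gamma_{\rm th}\sigma^2/(g_m\rho))$, the intra-cell factor $(1+\gamma_{\rm th})^{-kn}$, and the per-point inter-cell Laplace factor raised to the power $k$ inside the PGFL), and evaluate the resulting integral to the ${}_2F_1$ form, exactly as the paper does. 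The only imprecision is cosmetic: the normalized inter-cell PGFL integral runs from the exclusion radius $(\gamma_{\rm th})^{-1/\alpha}$ rather than from $0$ (a consequence of nearest-BS association under full path-loss inversion), and it is this lower limit, combined with the integral, that produces the ``$-1$'' accompanying the hypergeometric term.
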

\begin{proof}
See Appendix C.
\end{proof}

%As the repetition value $K$, the path loss exponent $\alpha$, and the transmit distance $x$ are positive, 
\begin{remark}
It is evident from (\ref{trans_krep1}) that the transmission success probability (II in \eqref{access_krep}) of the typical UE increases, whereas the non-collision probability (III in \eqref{access_krep}) decreases with increasing the repetition value $K_{\rm Krep}$.
Therefore, there exists a tradeoff between transmission success probability and non-collision probability.
For illustration, the relationship among GF access success probability {$(\mathcal{P}_1^{\rm Krep})$}, the transmission success probability {$(\mathcal{P}_1^{\rm Krep}$ with III=1)}, and the non-collision probability {$(\mathcal{P}_1^{\rm Krep}$ with II=1)} {versus repetition values} are shown in Fig. 5.
We can see that in certain scenario in (b) (i.e., ${\gamma_{\rm th}}=-10$ dB and ${\lambda_D}/{\lambda_B}>4\times10^4$), the increase of repetition value $K_{\rm Krep}$ could not further improve, and even degrades the GF access success probability.
This is due to the fact that increasing the repetition increases the collisions in overloaded traffic scenario, and wastes extra time and frequency resource. Further details will be described later in Section V.
\end{remark}
\begin{figure}
\centering
\subfigure[]
{\includegraphics[width=3.2in,height=2.7in]{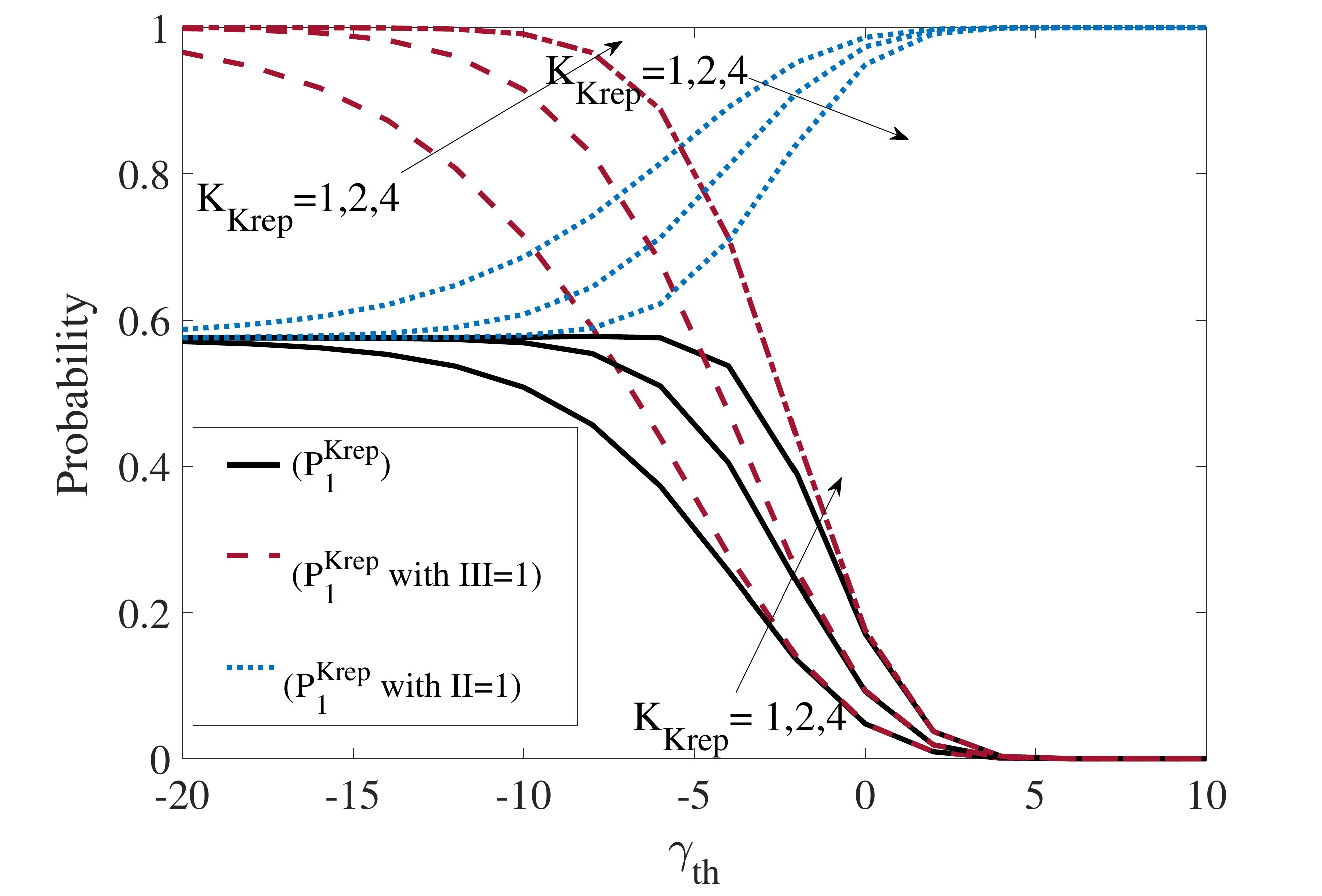}}
\subfigure[]
{\includegraphics[width=3.2in,height=2.7in]{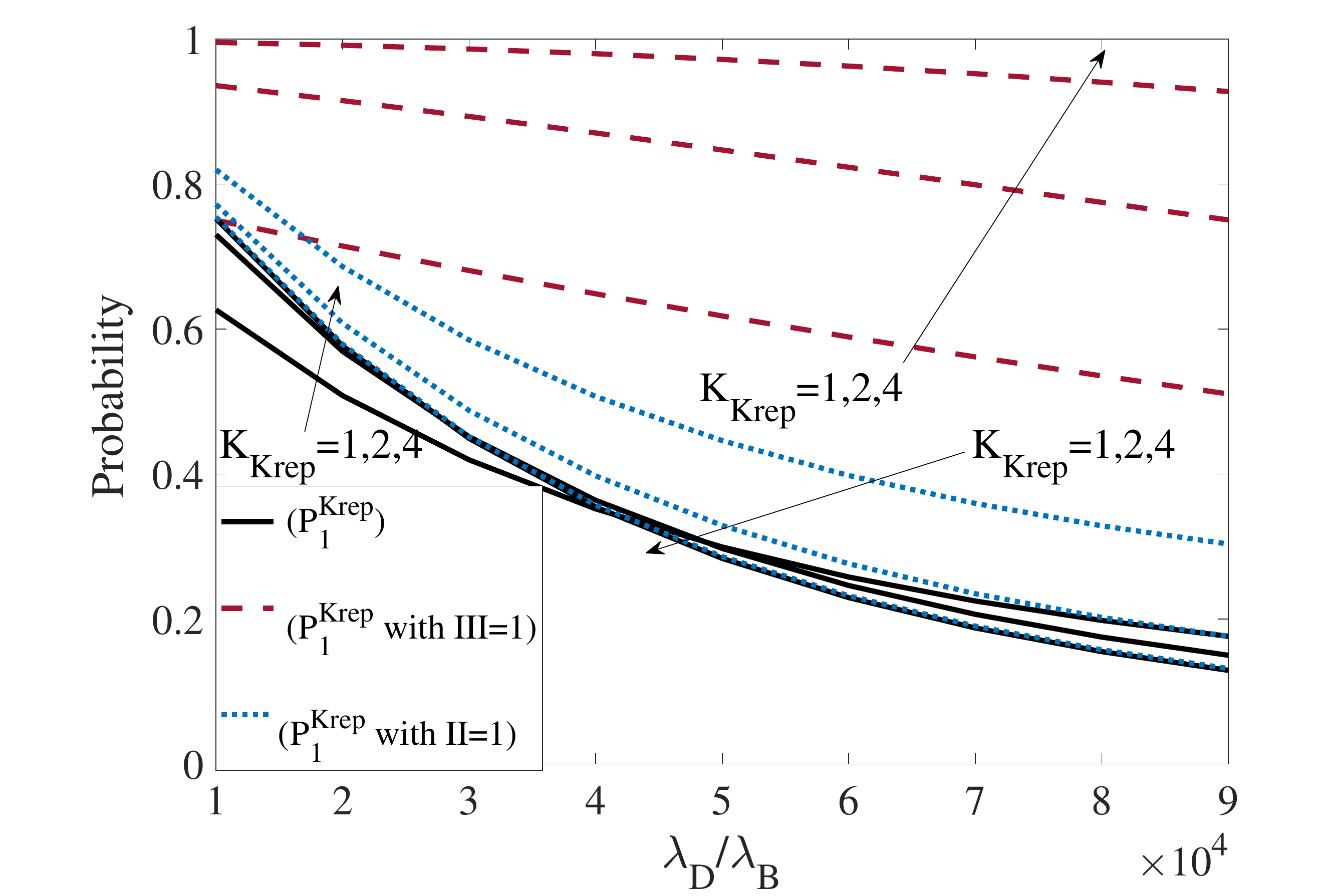}}
\caption{Comparing GF access success probability (${\cal P}_1^{\rm Krep}$), transmission success probability (${\cal P}_1^{\rm Krep}$ with III = 1), and non-collision probability (${\cal P}_1^{\rm Krep}$ with II = 1).
The parameters are $\lambda_B=1$ BS/km$^2$, $\lambda_D=20000$ UEs/km$^2$, $p_a=0.0011$, $\rho= −130$ dBm, $\gamma_{\rm th}=-10$ dB and $\sigma^2=−126.2$ dBm.}
\label{fig5}
\end{figure}
% The latent access failure probabilities under various latency constraints of the K-repetition scheme can be derived  following  like the iteration process shown in the flowchart in Fig. 5.

Finally, the latent access failure probabilities under arbitrary latency constraints of a randomly chosen UE with the K-repetition and Reactive schemes can be derived based on the iteration process. Note that the Reactive scheme is a special case of
K-repetition scheme when the repetition value $K_{\rm Krep}=1$.
\begin{figure}
	\centering
	\includegraphics[width=3.6in,height=3.3in]{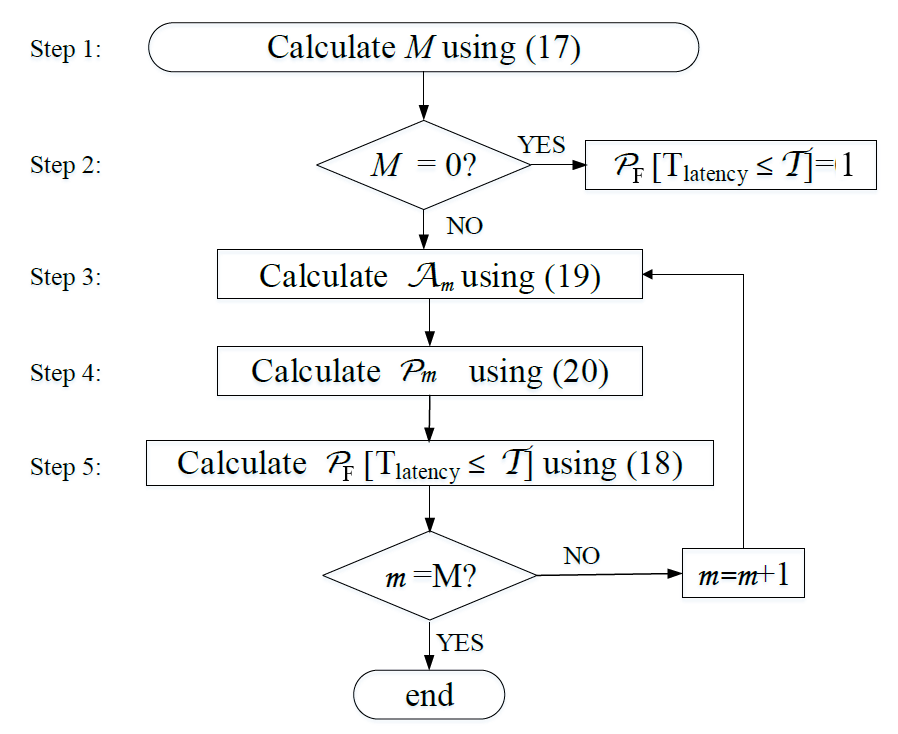}
	\caption{Flowchart for deriving the latent access failure probability of the K-repetition scheme and the Reactive scheme.}
	\label{fig:my_label}
\end{figure}
We assume $m$ is a variable that denotes the HARQ round trip from 1 to $M$. 
The iteration process for calculating the latent access failure probability is
shown in Fig. 6. Details of this process are described by the following:
\begin{itemize}
\item Step 1: Calculate the maximum number of HARQ round trips $M$ under the given latency constraint ${\cal T}_{}$ {TTIs} using (\ref{M_krep}). 

\item Step 2: If $M=0$, $
{\cal P}_{F}^{\rm Reac}[T_{\rm latency}\leqslant{\cal T}]=1$, otherwise  go to Step 3;
% Initialize the latency constraint $\cal T$=1 TTI, and $
% {\cal P}_{F}^{\rm Reac}[T_{\rm latency}\leqslant{1}]=0$;
\item Step 3: Calculate the active probability ${\mathcal{A}_m^{}}$ in the $m$th HARQ round trip using (\ref{active_krep});
\item Step 4: Calculate the GF access success probability in the $m$th round trip using (\ref{access_krep});
\item Step 5: Calculate the latent access failure probability in the $m$th round trip using (\ref{krep_out}).
% \item Step 5: Repeat the step 3 to 4 until $m=M$ and calculate the latent access failure probability $
% {\cal P}_{\rm out}^{\rm Reac}[T_{\rm latency}\leqslant{\cal T}]$ using (\ref{krep_out}).
\end{itemize}
Repeating Step 3 to 5 until $m=M$, the latent access failure probability under latency constraint ${\cal T}_{}$ can be obtained.

\subsection{Proactive scheme}
The analytical model for the Proactive scheme is more complicated compared with the Reactive and K-repetition schemes.
In the former two schemes, the latent access failure probabilities only change at the end of each HARQ round trip, as the BS processes the received signal and sends the feedback to the UE once in each round trip.
However, in the Proactive scheme, the latent access failure probabilities change at several TTIs in one round trip, as the BS processes each repetition and sends the feedback to the UE at several TTIs.
% Take one example, as the Proactive scheme with the maximum $K$ repetitions shown in Fig. 4, in one HARQ round trip (e.g., the $m$th round trip), for ${\cal T}= 2, 3$ and $4$ TTIs, the latent access failure probabilities do not update with ${\cal P}_{\rm out}^{}[T_{\rm latency}\leqslant{\cal T}]=1$; when ${\cal T}=5,...,K+4$ TTIs, the latent access failure probabilities ${\cal P}_{\rm out}^{}[T_{\rm latency}\leqslant{\cal T}]$ update. 
% when ${\cal T}=K+5,..., K+7$ TTIs, the latent access failure probabilities ${\cal P}_{\rm out}^{}[T_{\rm latency}\leqslant{\cal T}]$ remain unchanged
Due to the complexity of the Proactive scheme, we first analyze the latent access failure probability of a randomly chosen UE with the latency constraint ${\cal T}\leqslant K_{\rm Proa}+4$ TTIs without HARQ retransmissions. 
\subsubsection{Proactive scheme without HARQ retransmissions, ${\cal T}\leqslant K_{\rm Proa}+4$ {\rm TTIs}}

Compared with the K-repetition scheme, in which the UE is enforced to perform $K_{\rm Krep}$ repetitions no matter if its transmission is successful or not within $K_{\rm Krep}$ times, the UE in the Proactive scheme is allowed to terminate the repetition once the UE receives ACK.
Take one example, as shown in Fig. 7, the UE-1  successfully transmits the packet in the 1st repetition, the UE-1 knows the success of its 1st repetition in the 4th repetition, and the UE-1 terminates its 5th repetition. 
That is to say, if a UE does not have a second packet to be transmitted, the Proactive scheme could help to reduce its interference to other UE(s) that share the same resource and happen to be active at the same time.

\begin{figure}
	\centering
	\includegraphics[width=3.3in,height=1.4in]{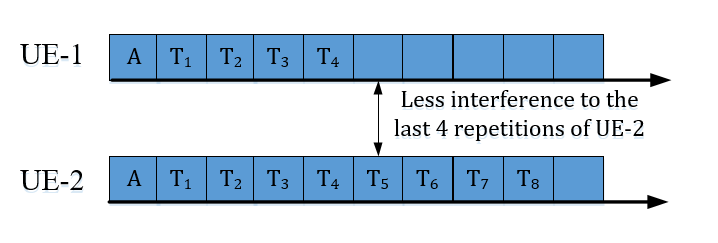}
	\caption{Early termination reduces UE interference }
	\label{fig:my_label}
\end{figure}

Due to the fact that the ACK/NACK feedback can only be received after 3TTIs, for the maximum repetition value $K_{\rm Proa}\leqslant4$, the UE can not receive feedback before completing $K_{\rm Proa}$ repetitions, thus the UE needs to  complete all the $K_{\rm Proa}$ repetitions without terminating earlier and the
number of interfering users will not change in each repetition of one round trip.
% For $K_{\rm Proa}\leqslant4$, the latent access failure probabilities for the Proactive scheme is the same as that of  the K-repetition scheme in (\ref{krep_out}) of Section III-B.

For $K_{\rm Proa}\geqslant5$, the UE can receive feedback from the BS to determine retransmission or not.
For instance, the ACK feedback decreases the number of interfering users in the later repetitions as shown in Fig. 7.
Let us denote that the $1$st successful transmission occurs in the $l$th repetition, thus the feedback of this repetition will be received in the ($l+3$)th repetition, which affects the latent access failure probability of the UE in the ($l+3$)th repetition, and from the ($l+4$)th repetition, these successful UEs will not repeat the rest ($K_{\rm Proa}-l-3$) repetitions for this packet any more.
% Thus, the UE will update the latency outage probability in the ($l+3$)th repetition.
% These successfully transmitted UEs in the $l$th repetition will not transmit from ($l+4$)th repetition.

We define the \textit{feedback factor} for the $l$th ($1\le l \le K_{\rm Proa}$) repetition as $\eta_{1,l}$, which means the GF access failure probability in the former ($l-4$) repetitions\footnote{Note that the ACK/NACK feedback can only be received  after 3TTIs, thus the feedback from the former ($l-4$) repetitions will affect the $l$th repetition. Only the failure UEs in the former ($l-4$) repetitions will transmit in the $l$th repetition.}.
It is obvious that $\eta_{1,l}=1$ when $1\le l \le 4$.
%As shown in Fig. 2, we note that the UE could receive the first feedback three TTIs after the  first transmission. 
%That is to say, when $K\leqslant4$, no feedback will be received to affect the number of interfering UEs in each TTI.
%Thus, when $K\leqslant4$, the latent access failure probability ${\cal P}^{\rm out}_{\rm Proa}[T_{\rm latency}>{\cal T}_{K}]$ with latency target ${\cal T}_{K}=4+K$ TTIs is defined as Theorem 1.
%A UE still needs to transmit in the $l$th repetition if all the $l-4$ ($l>4$) repetitions before are not successful.
Then we derive the feedback factor as
% \begin{align}\label{feedback1}
% % {\mathcal{A}_m^{}} = \bigg\{ \begin{gathered}
% %   {\cal P}_{\rm out}^{m}[T_{\rm latency}\leqslant{\cal T}_{}], m>1, \hfill \\
% %   1, m = 1. \hfill \\ 
% %   0, m = 0. \hfill \\
% % \end{gathered}
% % \end{align}
\begin{align}\label{feedback1}
 {\eta_{1,l}}=
  \begin{cases}
  1, 
  &\mbox{  $1 \leqslant l \leqslant 4$},\\
%   1, 
%   &\mbox{if  $ m=1$}\\
  1-{\cal P}_{1, l-4}^{\rm Proa}, 
  &\mbox{  $ l\geqslant 5$ },
  \end{cases}
\end{align}
where ${\cal P}_{1,l}^{\rm Proa}$ is derived in the following \textbf{Lemma 3}.

\begin{lemma}
We define the transmission success probability in the $l$th repetition as ${\mathbb P}_{1,l}$, the transmission success probability in all $l$ repetitions as ${\Theta^{\rm Proa} [{n},{1},l]}$ (i.e., any one of the $l$ repetitions succeeds (${\mathbb P}_{1,l}$)), and the access success probability in $l$ repetitions as ${\cal P}_{1,l}^{\rm Proa}$ (considering collision).
Then, the GF access success probability of a randomly chosen UE with the Proactive scheme under the latency constraint ${\cal T}\leqslant K_{\rm Proa}+4$ {\rm TTIs} is driven as
\begin{align}\label{data_proa}
&\mathcal{P}_{1,l}^{\rm Proa} 
%\\\nonumber
= \sum\limits_{{n} = 0}^\infty  {\bigg\{{ \underbrace{{{\rm O}[{n},1,l]}}_{\rm I}
\underbrace{\Theta^{\rm Proa} [{n},{1},l]}_{{\rm I}{\rm I}}
\underbrace{{{\Big( {1 - \Theta^{\rm Proa} [{n},{1},l]} \Big)}^{{n}}}}_{{\rm I}{\rm I}{\rm I}}} \bigg\}},
\end{align}
where
\begin{align}\label{N_condition1mu_proa}
{{\rm O}[{n},1,l]} = \displaystyle\frac{{{c^{(c + 1)}}\Gamma (n + c + 1){{\big( {{{{\eta_{1,l}\lambda _{a}}}}/{{{\lambda _B}}}} \big)}^{n}}}}{{\Gamma (c + 1)\Gamma (n+ 1){{\big( {{{{\eta_{1,l}\lambda _{a}}}}/{{{\lambda _B}}} + c} \big)}^{n + c + 1}}}}, 
\end{align}
and for $l\le 4$,
\begin{align}\label{trans_l4}
    &{\Theta^{\rm Proa} [{n},{1},l]}=1-\prod\limits_{r = 1}^{l}(1-{\mathbb P}_{1,r})\\\nonumber
    &=\sum\limits_{{r} = 1}^{{l}} {{( - 1)}^{{r} + 1}}\Big( \begin{array}{l}
{l}\\
{r}
\end{array} \Big)
% \exp \big( {\displaystyle{{ - k{\gamma _{th}}{\sigma ^2}}}/{\rho }\big)}
\exp \big( {- \displaystyle\frac{{r\gamma_{\rm th}}\sigma ^2}{g_m{\rho }}\big)}{{{(1 + {{\gamma_{\rm th}}})}^{-kn}}}
\nonumber \\
&\times\displaystyle{{\exp \Big(
%\displaystyle\frac{{ - k{\gamma _{th}}{\sigma ^2}}}{\rho } 
{ - \displaystyle{{{}{{\eta_{1,r}}\lambda _a}}}/{{{\lambda _B}}}\Big( {{}_2{F_1}\Big( { - \frac{2}{\alpha },k;\frac{{\alpha  - 2}}{\alpha }; - {{\gamma_{\rm th}}}} \Big) - 1} \Big)} \Big)}},
\end{align}
and for $l\geqslant5$, 
\begin{align}\label{trans_l5}
{\Theta^{\rm Proa} [{n},{1},l]} =1-(1-{{\Theta^{\rm Proa} [{n},{1},4]}})\prod\limits_{r = 5}^{l}\big({1-\mathbb P}_{1,r}\big),
\end{align}
with
\begin{align}\label{m=1trans}
        {\mathbb P}_{1,r}={\eta_{1,r}}
        %\sum\limits_{n = 0}^\infty 
        {{{{\rm O}[{n},1,r]}{\Theta^{\rm Proa} [{n},{1},1]}}},
    \end{align}
where ${{\Theta^{\rm Proa} [{n},{1},1]}}$ is obtained from (\ref{trans_l4}), and ${{\rm O}[{n},1,r]}$  is obtained from (\ref{N_condition1mu_proa}).
\end{lemma}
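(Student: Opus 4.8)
The plan is to reuse the conditional decomposition that produced Lemmas 1 and 2 and then to layer on the early-termination effect encoded by the feedback factor $\eta_{1,l}$ of (\ref{feedback1}). First I would condition on the typical BS seeing $N=n$ co-pilot intra-cell interferers. Given $n$, the first-round-trip GF access of the typical UE succeeds exactly when (Part II) the typical UE is decoded in at least one of its $l$ repetitions, and simultaneously (Part III) none of the $n$ interferers is decoded, so that no collision erases the typical pilot. By the collision-model convention of \cite{nan2018collision} and the conditional independence used in Lemmas 1 and 2, these two events multiply, and deconditioning over $n$ against the Poisson--Voronoi cell PMF of \cite[Eq.(3)]{yu2013downlink} --- here written as ${\rm O}[n,1,l]$ in (\ref{N_condition1mu_proa}) and evaluated at the \emph{thinned} active density $\eta_{1,l}\lambda_a$ rather than $\lambda_a$ --- produces the series in (\ref{data_proa}). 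What remains is to supply the per-repetition success probability and the law by which repetitions combine into $\Theta^{\rm Proa}[n,1,l]$.

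Second, I would obtain the single-repetition conditional success probability ${\mathbb P}_{1,r}$ from the SINR in (\ref{SINR1}). Because $h_0\sim\mathrm{Exp}(1)$, the event ${\rm SINR}_m\ge\gamma_{\rm th}$ factorizes into the noise contribution $\exp(-\gamma_{\rm th}\sigma^2/(g_m\rho))$ times the Laplace transforms of the intra- and inter-cell interference evaluated at $\gamma_{\rm th}/(g_m\rho)$. Conditioned on $n$ equal-power co-pilot interferers the intra-cell factor becomes $(1+\gamma_{\rm th})^{-n}$, while the inter-cell PPP contributes the exponential of the Gauss hypergeometric integral; both steps are identical to Appendices B and C with $\lambda_a$ replaced by the thinned density, which is exactly the kernel appearing inside (\ref{trans_l4}).

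Third --- and this is the crux --- I would assemble the repetitions while tracking the shrinking interferer population. Since fading is i.i.d.\ across TTIs, the per-repetition successes are conditionally independent given $n$, so the probability of failing in all $l$ repetitions factorizes and $\Theta^{\rm Proa}[n,1,l]=1-\prod_{r}(1-{\mathbb P}_{1,r})$. For $l\le 4$ no feedback has returned yet (ACK/NACK arrives only after $3$ TTIs), hence the interferer set is unchanged, every ${\mathbb P}_{1,r}$ coincides, and the product collapses to the binomial inclusion--exclusion sum (\ref{trans_l4}). For $l\ge 5$ the feedback begins to thin the interferers: a UE still transmits in repetition $r$ only if it failed in repetition $r-4$, an event of probability $\eta_{1,r}=1-{\cal P}_{1,r-4}^{\rm Proa}$, and splitting repetitions $1$--$4$ from $5$--$l$ yields the recursion (\ref{trans_l5}). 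The main obstacle is justifying this thinning cleanly: I must argue that deleting the already-successful co-pilot UEs leaves the residual interferers a thinned PPP of density $\eta_{1,r}\lambda_a$, so that both ${\rm O}[\cdot]$ and the inter-cell Laplace transform keep their functional form under $\lambda_a\mapsto\eta_{1,r}\lambda_a$, and that conditional independence across repetitions survives this deletion. Because $\eta_{1,r}$ refers back to the lower-index access success ${\cal P}_{1,r-4}^{\rm Proa}$, the construction is genuinely inductive in $l$, and I would verify the recursion is well founded before substituting $\Theta^{\rm Proa}[n,1,l]$ into Parts II and III and summing over $n$ to recover (\ref{data_proa}).
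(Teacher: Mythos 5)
Your proposal follows essentially the same route as the paper's Appendix D: conditioning on $N=n$, deriving the per-repetition success probability via the noise/intra-cell/inter-cell Laplace-transform factorization of Appendices B--C evaluated at the thinned density, splitting the repetitions at $r=4$ (no feedback yet, identical factors collapsing to the inclusion--exclusion sum) versus $r\ge 5$ (feedback thins the interferers via $\eta_{1,r}$ and re-weights by ${\rm O}[n,1,r]$), and recombining Parts I--III into the series over $n$. The one step you single out as the crux needing justification --- that the surviving co-pilot interferers remain a PPP of density $\eta_{1,r}\lambda_a$ so that ${\rm O}[\cdot]$ and the inter-cell Laplace transform keep their functional form, and that conditional independence across repetitions survives the deletion --- is likewise assumed without proof in the paper, so your account is, if anything, more explicit about the underlying approximation.
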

\begin{proof}
See Appendix D.
\end{proof}

%and the GF success probability in the $1st$ RT as ${\mathbb P}_{\rm Proc}^1(l)$. 
%We note that as ${\mathbb P}_1$=${\mathbb P}_2$=${\mathbb P}_3$=${\mathbb P}_4$, then ${\mathbb P}_{1+4}$=${\mathbb P}_{2+4}$=${\mathbb P}_{3+4}$=${\mathbb P}_{4+4}$.
%For ease of presentation, we assume $K=4l-3+m$ ($m=0,1,2,3$, $l=1,2,...$).

In order to calculate the latent access failure probabilities under arbitrary latency constraints ${\cal T}\leqslant K_{\rm Proa}+4$ {\rm TTIs}, we define two indexes for $\cal T$ as
\begin{align}\label{index}
 \begin{cases}
  \mu=\lfloor {({\cal T}-2)}/T^{\rm RTT}_{{\rm Proa},K,0} \rfloor,
  &\mbox{}\\
%   1, 
%   &\mbox{if  $ m=1$}\\
  \nu=\rm{mod}({\cal T}-2,T^{\rm RTT}_{{\rm Proa},K,0}), 
  &\mbox{}
  \end{cases}   
\end{align}
where $T^{\rm RTT}_{{\rm Proa},K,0}$ is given in (\ref{pro_rrt1})\footnote{In the Proactive scheme with $m$ HARQ round trips, a UE is still active in the $m$th ($1\leqslant m\leqslant M$) HARQ round trip if none of its GF access in the former ($m−1$) HARQ round trips is successful.
That is to say, all the maximum ${K_{\rm Proa}}$ repetitions in the Proactive scheme in the former ($m−1$) HARQ round trips are not successful, i.e., $l=0$.}, $\mu$ implies the maximum number of the HARQ round trips under the latency constraint (for the Proactive scheme under the latency constraint ${\cal T}\leqslant K_{\rm Proa}+4$, $\mu=0$), $\nu$ implies the updated TTI index for the latent access failure probability in each HARQ round trip.

Then, the latent access failure probability  of a randomly chosen UE with the Proactive scheme under the latency constraint ${\cal T}\leqslant K_{\rm Proa}+4$ is derived in  \textbf{Theorem 3}.
\begin{theorem}
The latent access failure probability  of a randomly chosen UE with the Proactive scheme under the latency constraint ${\cal T}\leqslant K_{\rm Proa}+4$ {\rm TTIs} is derived  as
% \begin{align}\label{pro_rrt}
% %\hspace{-0.5cm}
%  {\cal P}_{\rm out}^{\rm Proa}[T_{\rm latency}\leqslant{\cal T}_{}] =
% \begin{cases}
% 1-{\cal P}_K^{\rm Proa}, &\rm{if \ K \leqslant 4 \  with \ {\cal T}_{}=K+4},\\
% 1-{\cal P}_l^{\rm Proa},
% %-\sum\limits_{r = 5}^{l + 3} {\cal P}_{r}\eta_r, 
% &\rm{if \ 1\leqslant \textit{l} \leqslant K \ and \  K  \geqslant5 \ with \ {\cal T}_{1}=\textit{l}+4 },
% \end{cases}
% \end{align}
\begin{align}\label{pro_rrt}
 {\cal P}_{F}^{\rm Proa}[T_{\rm latency}\leqslant{\cal T}_{}] =
  \begin{cases}
  1, 
  &\mbox{  $ \nu \leqslant 2$, and $\mu=0$},\\
%   1, 
%   &\mbox{if  $ m=1$}\\
  1-{\cal P}_{1,\nu-2}^{\rm Proa}, 
  &\mbox{ $ \nu \geqslant 3$, and $\mu=0$ }.
  \end{cases}
\end{align}
% \begin{align}\label{pro_rrt}
%  {\cal P}_{\rm out}^{\rm Proa}[T_{\rm latency}\leqslant{\cal T}_{}] =
%   \begin{cases}
%   1-{\cal P}_K^{\rm Proa}, 
%   &\mbox{if  $ K\leqslant 4$ and ${\cal T}_{}=K+4$},\\
% %   1, 
% %   &\mbox{if  $ m=1$}\\
%   1-{\cal P}_l^{\rm Proa}, 
%   &\mbox{if $ 1\leqslant l \leqslant K $, $K \geqslant 5$  and $ {\cal T}_{}=l+4$ }.
%   \end{cases}
% \end{align}
where ${\cal P}_{1,\nu-2}^{\rm Proa}$ is obtained from (\ref{data_proa}) of \textbf{Lemma 3}.
% \begin{align}\label{proactive}
% &{\cal P}_{\rm out}^{Pro}[T_{\rm latency}>{\cal T}_K] =1-{\cal P}_{1,K}^{Pro},
% \end{align}
% where ${\cal P}_{1,K}^{Pro}$ is the GF access success probability of the UE in the $1$st RTT and is derived in the following Lemma.
\end{theorem}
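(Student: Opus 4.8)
The plan is to reduce the statement to a clean timing-to-repetition accounting built on top of \textbf{Lemma 3}, since the heavy probabilistic work (interference thinning via the feedback factor, the union over repetitions, and the collision term) is already packaged inside ${\cal P}_{1,l}^{\rm Proa}$. The Theorem itself should then be essentially an indexing identity.

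First I would fix the latency map. Specializing \eqref{latency_pro} to the first round trip (the no-retransmission regime forced by ${\cal T}\leqslant K_{\rm Proa}+4$, i.e.\ $\mu=0$), a first access success in the $l$th repetition yields a confirmed latency $T_{\rm Proa}[1]=l+4$ TTIs, where the additive $4$ collects the one-TTI frame alignment $T_{\rm fa}$ and the three-TTI feedback delay inherent in $T^{\rm RTT}_{{\rm Proa},K,l}=l+3$ of \eqref{pro_rrt1}. Hence the UE's success is observed within the deadline ${\cal T}$ if and only if $l+4\leqslant{\cal T}$.

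Next I would translate this inequality through the indexes in \eqref{index}. Because ${\cal T}\leqslant K_{\rm Proa}+4$ gives ${\cal T}-2\leqslant K_{\rm Proa}+2<T^{\rm RTT}_{{\rm Proa},K,0}=K_{\rm Proa}+3$, the floor vanishes, so $\mu=0$ and $\nu={\cal T}-2$. The admissibility condition $l+4\leqslant{\cal T}$ therefore reads $l\leqslant{\cal T}-4=\nu-2$. Consequently the latent access \emph{success} probability under the constraint is the probability that the first success falls in one of repetitions $1,\dots,\nu-2$, which is precisely ${\cal P}_{1,\nu-2}^{\rm Proa}$ from \eqref{data_proa} of \textbf{Lemma 3}; taking the complement gives the $\nu\geqslant3$ branch $1-{\cal P}_{1,\nu-2}^{\rm Proa}$. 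For the boundary $\nu\leqslant2$ (i.e.\ ${\cal T}\leqslant4$), the bound $\nu-2\leqslant0$ leaves no confirmable repetition, the earliest confirmation being at TTI $5$ for $l=1$, so no success can be registered by the deadline and the failure probability is $1$, yielding the first branch.

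The step I expect to be the main obstacle is justifying that the failure probability is flat between consecutive feedback-confirmation instants, so that evaluating at ${\cal T}$ coincides with evaluating at the last confirmable repetition $\nu-2$. Because the ACK/NACK is delayed by three TTIs, the UE learns nothing new during the intervening TTIs, so the success set can only grow at the confirmation times $l+4$; I would argue this monotone, piecewise-constant behavior explicitly so that the $\nu$-indexing faithfully encodes ``$T_{\rm latency}\leqslant{\cal T}$''. The remaining care is in trusting that ${\cal P}_{1,\nu-2}^{\rm Proa}$ already aggregates the union of per-repetition success events under the decreasing interference captured by the feedback factor $\eta_{1,l}$ in \eqref{feedback1}; this is exactly what \textbf{Lemma 3} supplies, so the Theorem follows from the indexing identity above.
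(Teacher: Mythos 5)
Your proof is correct and follows essentially the same route as the paper: the paper gives no separate appendix proof for this theorem, treating it as an immediate consequence of \textbf{Lemma 3} together with the index definitions in (\ref{index}), which is exactly the timing-to-repetition reduction ($\mu=0$, $\nu={\cal T}-2$, success confirmed by the deadline iff $l+4\leqslant{\cal T}$, i.e.\ $l\leqslant\nu-2$) that you make explicit. Your added discussion of the piecewise-constant failure probability and of the boundary case $\nu\leqslant 2$ (earliest possible confirmation at TTI $5$) simply fills in details the paper leaves implicit.
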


Next, we extend the analysis of the latent access failure probabilities of the typical UE with the Proactive scheme to an arbitrary latency constraint $\cal  T$ allowing the maximum $M$ number of HARQ round trips.

\subsubsection{Proactive scheme with HARQ retransmissions}
% As mentioned before,  when ${\cal T}=K+3,..., K+5$ TTIs in the 1st RTT, the latent access failure probabilities ${\cal P}_{\rm out}^{}[T_{\rm latency}\leqslant{\cal T}]$ remain unchanged.
% Now we extend the analysis to an arbitrary latency constraint $\cal  T$ allowing the maximum $M$ number of HARQ round trips.
In the Proactive scheme with HARQ  retransmissions, a UE is still active in the $m$th ($1\leqslant m\leqslant M$) HARQ round trip if none of its GF access in the former ($m−1$) HARQ round trips are successful.
That is to say, all the maximum ${K_{\rm Proa}}$ repetitions in the Proactive scheme in the former ($m−1$) HARQ round trips are not successful.
Similar to the other two schemes, we give the active probability ${\mathcal{A}_m^{Proa}}$ in the $m$th HARQ round trip in \eqref{active_proa}.
% \begin{align}\label{non_empty_proac}
%  {\mathcal{A}_m^{Proa}}=
%   {\cal P}_{\rm out}^{}[T_{\rm latency}\leqslant (m-1)T^{\rm RTT}_{{\rm Proa},K}+1].     
% \end{align}
For an arbitrary latency constraint $\cal  T$ {\rm TTIs}, we first obtain the two indexes $\mu$ and $\nu$ using (\ref{index}),  i.e., the maximum number of the HARQ round trips under the latency constraint is $M=\mu$.
% \begin{align}\label{M_proa}
% M=\lfloor {({\cal T}-2)}/T^{\rm RTT}_{{\rm Proa},K} \rfloor, (M \ge 0)
% \end{align}
% where $T^{\rm RTT}_{\rm {\rm Proa},K}=K+3$.
% The non-empty probability ${\mathcal{A}}_m$ is obtained from (\ref{active_probability_krep}) in \textbf{Theorem 2}.
% For  $M=0$ (i.e., $T\le K_{\rm Proa}+4$ TTIs), the latency outage probabilities can be obtained using the flowchart in Fig. 8. 
% For $M\ge 1$, we need to derive the active probability in the $m$th HARQ round trip as
Then, the latent access failure probability can be obtained in the following \textbf{Theorem 4}.
\begin{theorem}
The latent access failure probability of a randomly chosen UE with the Proactive HARQ scheme under arbitrary latency constraint $\cal T$ {\rm TTIs} is derived as

\begin{align}\label{promulti_m}
&{\cal P}_{F}^{\rm Proa}[T_{\rm latency}\leqslant{\cal T}_{}]=\nonumber \\
&\begin{cases}
1, 
&\mbox{  $ \nu \leqslant 2$ $\&$ $\mu=0$},\\
%   1, 
%   &\mbox{if  $ m=1$}\\
1-{\cal P}_{1, \nu-2}^{\rm Proa}, 
&\mbox{ $ \nu \geqslant 3$ $\&$ $\mu=0$ },\\
1-\sum\limits_{m = 1}^{M}{\mathcal{A}_m^{\rm Proa}}{{{\cal P}_{m,K}^{\rm Proa}}}, 
&\mbox{ $ \nu \leqslant 2$ $\&$ $\mu\ge1$},\\
1-\sum\limits_{m = 1}^{M}{\mathcal{A}_m^{\rm Proa}}{{{\cal P}_{m,K}^{\rm Proa}}}+{\mathcal{A}_{M+1}^{\rm Proa}}{{{\cal P}_{M+1,\nu-2}^{\rm Proa}}} 
&\mbox{ $ \nu \geqslant 3$ $\&$ $\mu\ge1$},\\
\end{cases}
  \end{align}
where
$\mathcal{A}_m^{\rm Proa}$ is obtained according to (\ref{active_probability}) as
\begin{align}\label{active_proa}
	\mathcal{A}_m^{\rm Proa}=
	\begin{cases}
  1, 
  &\mbox{$m=1$},\\
%   1, 
%   &\mbox{if  $ m=1$}\\
  1-\sum\limits_{i = 1}^{m-1}{\mathcal{A}_i^{\rm Proa}}{{{\cal P}_i^{\rm Proa}}}, 
  &\mbox{$ m\geqslant 2$},
  \end{cases}
	\end{align}
and ${\cal P}_{m,l}^{\rm Proa}$ is the GF access probability of a typical UE in the $m$th HARQ round trip, given in the following \textbf{Lemma 4}.
% \begin{proof}
% % See Appendix E.
% \end{proof}
\end{theorem}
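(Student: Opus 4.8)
The plan is to write the latent access failure probability as the complement of the cumulative access success probability, and then to account separately for the complete HARQ round trips and for the trailing, only partially observed round trip. First I would invoke complementarity, ${\cal P}_{F}^{\rm Proa}[T_{\rm latency}\leqslant{\cal T}]=1-{\cal P}_{S}^{\rm Proa}[T_{\rm latency}\leqslant{\cal T}]$, where ${\cal P}_{S}^{\rm Proa}$ denotes the probability that the typical UE delivers its packet within the budget $\cal T$. The indexes $\mu$ and $\nu$ of (\ref{index}) then carry the structure: writing ${\cal T}-2=\mu\,T^{\rm RTT}_{{\rm Proa},K,0}+\nu$ identifies $M=\mu$ as the number of \emph{complete} round trips that fit inside $\cal T$, while $\nu$ measures the residual budget inside the trailing round trip. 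The $-2$ offset and the threshold $\nu\geqslant3$ encode the three-TTI feedback delay of (\ref{pro_rrt1}): a success achieved in the $l$th repetition is only \emph{confirmed} at the $(l+3)$th TTI, so within the trailing round trip only $\nu-2$ repetitions can contribute a confirmed success. This is exactly the retransmission-free accounting already established in Theorem 3, equation (\ref{pro_rrt}), which I would reuse as the $\mu=0$ base case.

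Next I would decompose ${\cal P}_{S}^{\rm Proa}$ according to the round trip in which the first confirmed success occurs. The UE reaches the $m$th round trip only if all of its GF access attempts in the previous $m-1$ round trips failed; this event has probability ${\mathcal A}_{m}^{\rm Proa}$, given recursively in (\ref{active_proa}). Conditioned on reaching it, the UE runs the full set of $K_{\rm Proa}$ repetitions and succeeds with probability ${\cal P}_{m,K}^{\rm Proa}$ of Lemma 4 (this is the full-round-trip quantity written ${\cal P}_{m}^{\rm Proa}$ in (\ref{active_proa})). Summing the disjoint ``first success in round trip $m$'' events by the law of total probability gives the complete-round-trip contribution $\sum_{m=1}^{M}{\mathcal A}_{m}^{\rm Proa}{\cal P}_{m,K}^{\rm Proa}$. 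When $\nu\geqslant3$ there is in addition a trailing partial round trip indexed $M+1$, entered with probability ${\mathcal A}_{M+1}^{\rm Proa}$, in which only $\nu-2$ repetitions are usable; its success contribution is ${\mathcal A}_{M+1}^{\rm Proa}{\cal P}_{M+1,\nu-2}^{\rm Proa}$, with ${\cal P}_{M+1,\nu-2}^{\rm Proa}$ supplied by the retransmission-free analysis of Lemma 3, equation (\ref{data_proa}). Adding the partial contribution to the complete-round-trip success sum and taking the complement with respect to unity produces the four rows of (\ref{promulti_m}), the $\mu=0$ rows collapsing to Theorem 3.

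The hard part will be the circular dependence between the two families of quantities: the active probability ${\mathcal A}_{m}^{\rm Proa}$ is built from the earlier success probabilities ${\cal P}_{i}^{\rm Proa}$, while each per-round-trip success probability ${\cal P}_{m,K}^{\rm Proa}$ in turn depends on ${\mathcal A}_{m}^{\rm Proa}$ through the thinned density of still-active interferers that feeds the PPP-cell occupancy term ${\rm O}[n,m]$. Consequently the round trips cannot be treated as independent a priori; instead I would resolve the coupling by induction on $m$, computing ${\mathcal A}_{m}^{\rm Proa}$ from the already-determined ${\cal P}_{1}^{\rm Proa},\dots,{\cal P}_{m-1}^{\rm Proa}$ and then feeding it into ${\cal P}_{m,K}^{\rm Proa}$, precisely the iteration that Fig. 6 prescribes for the other two schemes. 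A secondary subtlety is to justify that, given the active densities, the feedback-driven early termination inside a round trip (captured by the feedback factors $\eta_{1,l}$ of (\ref{feedback1}) and folded into Lemma 3) leaves the trailing-round-trip success probability expressible through the same ${\cal P}_{\,\cdot\,,\nu-2}^{\rm Proa}$ form as the retransmission-free case; this is what makes the partial round trip append cleanly to the complete-round-trip success sum before the final complementation.
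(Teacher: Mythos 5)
Your proposal is correct and takes essentially the same approach as the paper: the paper's (largely implicit) proof of Theorem 4 is precisely your decomposition --- complementarity, the $\mu/\nu$ bookkeeping of (\ref{index}), the partition over the round trip of first confirmed success weighted by the active probabilities (\ref{active_proa}), reuse of Theorem 3/Lemma 3 for the $\mu=0$ and trailing-partial cases, and the inductive resolution of the $\mathcal{A}_m$--$\mathcal{P}_m$ coupling exactly as in Appendix A and the iteration of Fig.~8. One remark: your (correct) accounting gives the trailing term with a \emph{minus} sign, i.e.\ $1-\sum_{m=1}^{M}\mathcal{A}_m^{\rm Proa}\mathcal{P}_{m,K}^{\rm Proa}-\mathcal{A}_{M+1}^{\rm Proa}\mathcal{P}_{M+1,\nu-2}^{\rm Proa}$, so the ``$+$'' printed in the last row of (\ref{promulti_m}) is evidently a sign typo in the paper that your derivation corrects rather than reproduces.
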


\begin{lemma}
The GF access success probability of a randomly chosen UE with the Proactive HARQ scheme in the $m$th HARQ round trip is driven as
\begin{align}\label{data_proa_m}
&\mathcal{P}_{m,l}^{\rm Proa} 
%\\\nonumber
= \sum\limits_{{n} = 0}^\infty  {\bigg\{{ \underbrace{{{\rm O}[{n},m,l]}}_{\rm I}
\underbrace{\Theta^{\rm Proa} [{n},{m},l]}_{{\rm I}{\rm I}}
\underbrace{{{\Big( {1 - \Theta^{\rm Proa} [{n},{m},l]} \Big)}^{{n}}}}_{{\rm I}{\rm I}{\rm I}}} \bigg\}},
\end{align}
where
\begin{align}\label{N_condition1mu_proa_m}
{{\rm O}[{n},m,l]} = \displaystyle\frac{{{c^{(c + 1)}}\Gamma (n + c + 1){{\big( {{{{\eta_{m,l}{\mathcal{A}_m^{Proa}}\lambda _{a}}}}/{{{\lambda _B}}}} \big)}^{n}}}}{{\Gamma (c + 1)\Gamma (n+ 1){{\big( {{{{\eta_{m,l}{\mathcal{A}_m^{Proa}}\lambda _{a}}}}/{{{\lambda _B}}} + c} \big)}^{n + c + 1}}}},  
\end{align}
with
\begin{align}\label{feedbackm_proac}
 {\eta_{m,l}}=
  \begin{cases}
  1, 
  &\mbox{if  $1 \leqslant l \leqslant 4$},\\
%   1, 
%   &\mbox{if  $ m=1$}\\
  1-{\cal P}_{m,l-4}^{\rm Proa}, 
  &\mbox{if  $ l\geqslant 5$ },
  \end{cases}
\end{align}
and for $l \le 4$,
\begin{align}\label{trans_l4_m}
    &{\Theta^{\rm Proa} [{n},{m},l]}=1-\prod\limits_{r = 1}^{l}(1-{\mathbb P}_{m,r})\\\nonumber
    &=\sum\limits_{{r} = 1}^{{l}} {{( - 1)}^{{r} + 1}}\Big( \begin{array}{l}
{l}\\
{r}
\end{array} \Big)
\exp \big( {- \displaystyle\frac{{r\gamma_{\rm th}}\sigma ^2}{g_m{\rho }}\big)}{{{(1 + {{\gamma_{\rm th}}})}^{-kn}}}
\nonumber \\
% \exp \big( {\displaystyle{{ - k{\gamma _{th}}{\sigma ^2}}}/{\rho }\big)}
&\times\displaystyle{{\exp \Big(
%\displaystyle\frac{{ - k{\gamma _{th}}{\sigma ^2}}}{\rho } 
{ - \displaystyle{{{}{{\eta_{m,r}}{\mathcal{A}_m^{Proa}}\lambda _a}}}/{{{\lambda _B}}}\Big( {{}_2{F_1}\Big( { - \frac{2}{\alpha },k;\frac{{\alpha  - 2}}{\alpha }; - {{\gamma_{\rm th}}}} \Big) - 1} \Big)} \Big)}},
\end{align}
% \begin{align}\label{trans_l4_m}
%     &{\Theta^{\rm Proa} [{n},{m},l]}=1-\prod\limits_{r = 1}^{l}(1-{\mathbb P}_{m,r}),
% \end{align}
% with
% \begin{align}\label{mtrans}
%         {\mathbb P}_{m,r}&={\eta_{m,r}}
%         %\sum\limits_{n = 0}^\infty 
%       {{\Theta^{\rm Proa} [{n},{m},1]}}\\\nonumber
%       &={\eta_{m,r}}
% \displaystyle\frac{{\exp \Big({{ - {\gamma _{th}}{\sigma ^2}}}/({g_m\rho})
% %\displaystyle\frac{{ - k{\gamma _{th}}{\sigma ^2}}}{\rho } 
% { - \displaystyle{{{\eta_{m,r}}{\mathcal{A}_m^{Proa}}{\lambda _a}}}/{{{\lambda _B}}}\Big( {{}_2{F_1}\Big( { - \frac{2}{\alpha },1;\frac{{\alpha  - 2}}{\alpha }; - {\gamma _{th}}} \Big) - 1} \Big)} \Big)}}{{{(1 + {\gamma _{th}})}^{n}}},
%     \end{align}
% \begin{align}\label{trans_l4_m}
%     &{\Theta^{\rm Proa} [{n},{m},l]}=1-\prod\limits_{r = 1}^{l}(1-{\mathbb P}_{m,r})
%     \\\nonumber
%     &=\sum\limits_{{r} = 1}^{{l}} {{( - 1)}^{{r} + 1}}\Big( \begin{array}{l}
% {l}\\
% {r}
% \end{array} \Big)
% % \exp \big( {\displaystyle{{ - k{\gamma _{th}}{\sigma ^2}}}/{\rho }\big)}
% \displaystyle\frac{{\exp \Big({{ - r{\gamma _{th}}{\sigma ^2}}}/{\rho }
% %\displaystyle\frac{{ - k{\gamma _{th}}{\sigma ^2}}}{\rho } 
% { - \displaystyle{{{\eta_{m,r}}{\mathcal{A}_m^{Proa}}{\lambda _a}}}/{{{\lambda _B}}}\Big( {{}_2{F_1}\Big( { - \frac{2}{\alpha },r;\frac{{\alpha  - 2}}{\alpha }; - {\gamma _{th}}} \Big) - 1} \Big)} \Big)}}{{{(1 + {\gamma _{th}})}^{rn}}},
% \end{align}
and for $l \ge 5$, 
 \begin{align}\label{trans_l5_m}
{\Theta^{\rm Proa} [{n},{m},l]} =1-(1-{{\Theta^{\rm Proa} [{n},{m},4]}})\prod\limits_{r = 5}^{l}\big({1-\mathbb P}_{m,r}\big),
\end{align}
with 
\begin{align}\label{m=1trans}
        {\mathbb P}_{m,r}={\eta_{m,r}}
        %\sum\limits_{n = 0}^\infty 
        {{{{\rm O}[{n},m,r]}{\Theta^{\rm Proa} [{n},{m},1]}}},
    \end{align}
where ${{\Theta^{\rm Proa} [{n},{m},1]}}$ is obtained from (\ref{trans_l4_m}) and ${{\rm O}[{n},m,r]}$ is obtained from (\ref{N_condition1mu_proa_m}).
\end{lemma}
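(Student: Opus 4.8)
The plan is to leverage the already-established \textbf{Lemma 3} (proven in Appendix D) as the base case $m=1$, and to extend its derivation to an arbitrary round trip $m$ by isolating the single structural change, namely the intensity of the active interfering UEs. First I would observe that, by the definition of the active probability in \eqref{active_proa}, a UE reaches the $m$th round trip only if all of its GF access attempts in the previous $m-1$ round trips have failed. Because UE positions are refreshed i.i.d.\ at the start of each round trip and thinning a PPP by an independent retention probability again yields a PPP, the set of UEs still active at the beginning of the $m$th round trip forms a PPP of intensity $\mathcal{A}_m^{\rm Proa}\lambda_a$. This is precisely the substitution $\lambda_a \to \mathcal{A}_m^{\rm Proa}\lambda_a$ that distinguishes \eqref{N_condition1mu_proa_m} from \eqref{N_condition1mu_proa}, so the whole of \textbf{Lemma 3} should port over after this replacement.

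Next I would replicate the three-part decomposition. For Part~I, the probability of $N=n$ intra-cell interferers follows from the approximate PMF of the number of UEs in a Voronoi cell (\cite[Eq.(3)]{yu2013downlink}), evaluated at the thinned and feedback-adjusted intensity $\eta_{m,l}\mathcal{A}_m^{\rm Proa}\lambda_a$, giving \eqref{N_condition1mu_proa_m}. For Part~II, the transmission success probability $\Theta^{\rm Proa}[n,m,l]$ is the probability that at least one of the first $l$ repetitions is decoded, i.e.\ one minus the product of the per-repetition failure probabilities $(1-{\mathbb P}_{m,r})$. For $l\le 4$ no feedback has yet returned, so the interferer count is frozen and the inclusion--exclusion expansion of $1-\prod_{r=1}^{l}(1-{\mathbb P}_{m,r})$ reproduces \eqref{trans_l4_m}, with the Rayleigh-fading SINR Laplace-transform argument carried over from Appendix~D but now weighted by the factor $\eta_{m,r}\mathcal{A}_m^{\rm Proa}$. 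For $l\ge 5$ I would split the product at $r=4$, treating the first four repetitions as the frozen block $\Theta^{\rm Proa}[n,m,4]$ and appending the feedback-thinned repetitions $r=5,\dots,l$, yielding \eqref{trans_l5_m}. Part~III, the non-collision probability $(1-\Theta^{\rm Proa}[n,m,l])^{n}$, then follows from the independence of the $n$ competing interferers, each failing to be decoded with the same marginal probability.

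The hard part will be justifying the self-referential feedback factor \eqref{feedbackm_proac} together with \eqref{m=1trans}. Here $\eta_{m,l}$ depends on ${\cal P}_{m,l-4}^{\rm Proa}$, which is itself built from $\Theta^{\rm Proa}$ and hence from $\eta_{m,r}$ at smaller indices; this is a forward recursion in $l$ rather than a genuine fixed point, since each $\eta_{m,l}$ only references repetition index $l-4<l$. I would make this rigorous by induction on $l$: the base $\eta_{m,l}=1$ for $1\le l\le 4$ is immediate because no feedback can arrive within three TTIs, and the inductive step computes ${\cal P}_{m,l-4}^{\rm Proa}$ from quantities already determined at steps $\le l-4$. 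The most delicate modelling point I would flag is that early-terminating UEs (those that succeeded at least four repetitions earlier) are removed from the interfering set, which is exactly the retention probability $\eta_{m,l}$ applied on top of $\mathcal{A}_m^{\rm Proa}$; verifying that this double thinning preserves the PPP structure, so that \cite{yu2013downlink} remains applicable within each repetition, is where I expect the main effort, and I would resolve it by appealing to the i.i.d.\ spatial refresh of UE positions at the start of each round trip to decouple the per-repetition retention from the underlying spatial process.
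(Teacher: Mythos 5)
Your proposal is correct and takes essentially the same route as the paper: in fact, the paper states this lemma without any separate proof, implicitly treating it as the direct extension of Lemma 3 (Appendix D) under the substitutions $\lambda_a \to \mathcal{A}_m^{\rm Proa}\lambda_a$ and $\eta_{1,l}\to\eta_{m,l}$, combined with the same active-probability thinning already used for the Reactive and K-repetition schemes, which is precisely your reduction. Your two added justifications --- the forward induction on $l$ showing the feedback factor is a recursion rather than a fixed point, and the appeal to the i.i.d.\ positional refresh at the start of each round trip to make the double thinning (by $\mathcal{A}_m^{\rm Proa}$ and then $\eta_{m,l}$) preserve the PPP structure --- merely make explicit steps the paper leaves tacit.
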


Finally,  the latent access failure probabilities for the Proactive scheme under an arbitrary latency constraint can be obtained using the iteration process shown in Fig. 8 with the details described in the following.
\begin{figure}
	\centering
	\includegraphics[width=3.5in,height=4.1in]{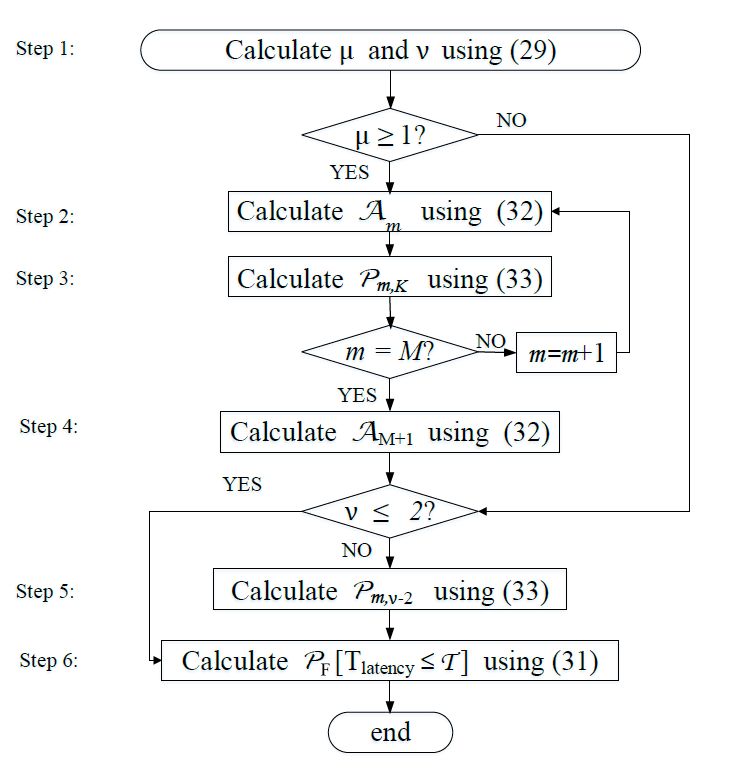}
	\caption{Flowchart for deriving the latent access failure probability of the Proactive scheme.}
	\label{fig:my_label}
\end{figure}

% $\mathcal{A}_m^{\rm Proa}$ and  is given according to \textbf{Theorem 3} with $l=K$ when $1\leqslant m\leqslant M-1$ and $l\leqslant K$ when $m=M$, respectively;
% \begin{align}\label{active probability_r}
% {\mathcal{A}_m^{Pb}} = \bigg\{ \begin{gathered}
%   {\cal P}_{\rm out}^{Pb}[T_{\rm latency}>{\cal T}_{m}], m>1, \hfill \\
%   1, m = 1. \hfill \\ 
% \end{gathered}
% \end{align} 

\begin{itemize}
\item Step 1: Calculate the indexes $\mu$ and $\nu$  under the given latency constraint ${\cal T}_{}$ {\rm TTIs} usinig (\ref{index}).  If $\mu\ge1$,  go to Step 2; If $\mu=0$, $\nu\le 2$ , go to Step 6; If $\mu=0$, $\nu\ge 3$ , go to Step 5;

%  \item Step 2: If $K_{\rm Proa}\le 4$, go to flowchart in Fig. 6, otherwise go to the step 3;
 
 \item Step 2: Calculate non-empty probability ${\mathcal{A}_m^{Proa}}$ using (\ref{active_proa});

  \item Step 3: Calculate the GF access success probability in the $m$th round trip, ${\cal P}_{m,K}^{\rm Proa}$ using (\ref{data_proa_m});

  Repeating Step 2 to 3 until $m=M$;
  
  \item Step 4: Calculate non-empty probability ${\mathcal{A}_{M+1}^{Proa}}$ using (\ref{active_proa});
  
  \item Step 5:  If $\nu\ge 3$,  calculate the GF access success probability ${\cal P}_{m,\nu-2}^{\rm Proa}$   using (\ref{data_proa_m});

    \item Step 6: Calculate the latent access failure probability  ${\cal P}_{\rm out}^{\rm Proa}[T_{\rm latency}\leqslant{\cal T}_{}]$ using (\ref{promulti_m}).

 \end{itemize}

\section{Simulation and Discussion}
In this section, we verify our analytical results by comparing the theoretical GF latent access failure probabilities with the results from Monte-Carlo simulations, where the simulations are performed using the system model described in Section II in MATLAB. 
The BSs and UEs are deployed via independent HPPPs in a 1600 km$^2$ circle area  with each UE associated with its nearest BS.
At the beginning of each round trip, UEs randomly move to new positions and the active ones randomly choose a pilot from $S=48$ pilots to transmit.
The channel fading gains between the UEs and BSs are modeled by exponentially distributed random variables.
The simulation parameters used for this study are in line with the main guidelines for 3GPP NR performance evaluations presented in \cite{2Tel2018} with mini-slots of 7 OFDM symbols for transmissions
in short TTI (0.125ms) using 60 kHz SCS{\footnote{{Mini-slot durations will depend on the SCS and on the
number of OFDM symbols for a given SCS, adopted according to the type of deployment and carrier frequency.
}}}. 
{To focus on the GF access in UL, we assume feedback in DL is error-free.\footnote{{According to Section II.C,   a HARQ  round-trip includes:
1) UL (UE to BS): the UE transmits the signal to the BS and the BS decodes the received signal;
2) DL (BS to UE): the BS sends an ACK/NACK feedback and the UE  processes the feedback to decide whether to perform a retransmission in the next HARQ.
This is to say,  from the UE perspective, one HARQ  round-trip is finished until the UE processes the feedback to know whether it is successful or not, which should consider both the transmission probability in UL and DL.
In this paper, to focus on the GF access in UL, we assume feedback in DL is error-free.
The analysis of the feedback with error probability can be extended following this work.}}}.
% Our results can be easily extended to scenarios with error probability.
% It is assumed that the receiver can ideally estimate the channel of all superimposed% transmissions. 
The simulation time is configured to collect at least $5\times10^6$ samples to ensure a sufficient confidence level on the $10^{−5}$ quantile.
In all figures of this section, “Analytical” and “Simulation” are abbreviated as “Ana.” and “Sim.”, respectively.
% On the vertical axes, the latent access failure probabilities, i.e., ${\cal P}_{F}=1-{\cal P}_{F}^{\rm Proa}[T_{\rm latency}\leqslant{\cal T}_{}]$ are displayed. 
Unless otherwise stated, we consider $\lambda_{\rm{B}}=1$ BSs/km$^2$, $\lambda_D=20000$ $\rm {UEs/km^2}$, ${\gamma_{\rm th}}= -2$ dB, $\alpha = 4$, $\rho =−130$ dBm, $p_a=0.0011$, $g_J=g_1=1$, 
the noise $\sigma^2$ = −174+10log$_{10}$(60000) = −126.2 dBm.

Fig. 9-Fig. 10 plot the GF latent access failure probabilities of the UE with the Reactive, K-repetition, and Proactive schemes  versus SINR thresholds ${\gamma_{\rm th}}=-10 $dB and ${\gamma_{\rm th}}=-2 $dB, respectively.
%  are set as $\gamma_{th}=-10 $dB and $\gamma_{th}=-2 $dB, respectively.and SINR thresholds, respectively.
% Based on the relative values of these parameters, we identify four main network configurations, which are 1) LD-HS: configuration with low density of UEs and high SINR threshold;  2) HD-HS: configuration with high density of UEs and high SINR threshold; 3) LD-LS: configuration with low density of UEs and low SINR threshold; and 4) HD-LS: configuration with high density of UEs and low SINR threshold. 
% On the horizontal axes, the latency constraints are plotted using the unit of TTI (1 TTI = 0.125 ms).
The analytical curves of the Reactive and K-repetition schemes are plotted following the flowchart in Fig. 6, and the analytical curves of the Proactive scheme are plotted following the flowchart in Fig. 8.
The close match between the analytical curves and simulation points validates the accuracy of the developed spatio-temporal mathematical framework.
The stair behaviour (i.e., the latent access failure probabilities stay unchanged for a period of time) is caused by the waiting time between each retransmission.

\begin{figure}[htbp!]
	\begin{center}
		\begin{minipage}[t]{0.52\textwidth}
			\centering
			\includegraphics[width=3.5in,height=2.6in]{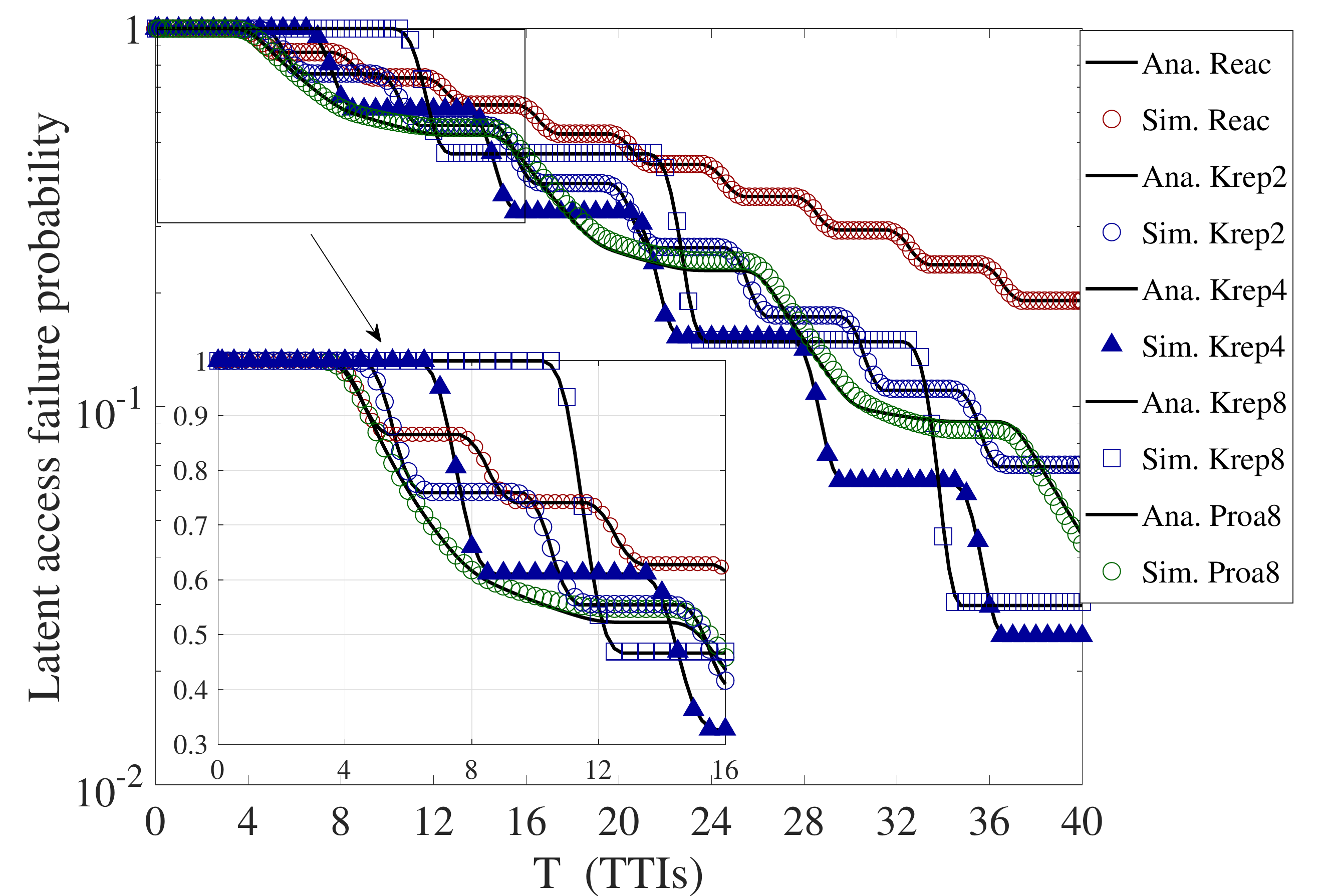}
			\vspace*{-0.4cm}
			\caption{\scriptsize Latent access failure probability when ${\gamma_{\rm th}}=-2$dB.}
		\end{minipage}
		\label{fig:10}
		\begin{minipage}[t]{0.45\textwidth}
			\centering
			\includegraphics[width=3.5in,height=2.6in]{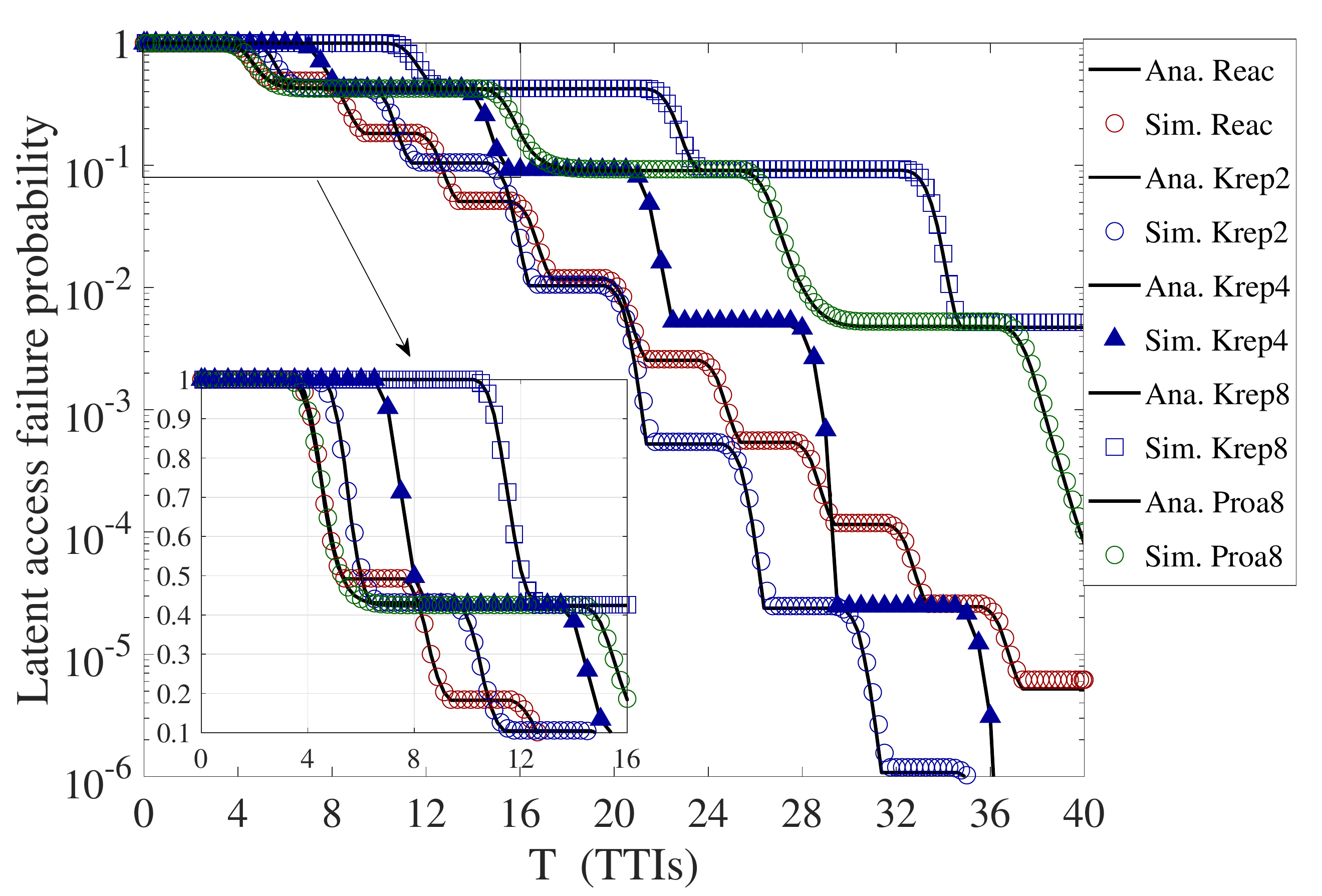}
			\vspace*{-0.4cm}
			\caption{\scriptsize Latent access failure probability when ${\gamma_{\rm th}}=-10$ dB.}
		\end{minipage}
		\label{fig:11}
	\end{center}
\end{figure}

In Fig. 9, {we first observe that the latent access failure probabilities follow  Proa$=$Reac$\le$Krep under latency constraints ${\cal T}\le 0.625$ms (5TTIs).
%we first observe that the latent access failure probabilities of the K-repetition scheme with different repetition values are lower than those of the Reactive scheme under longer latency constraints ${\cal T}\ge 1.5$ms (12 TTIs).
%This is due to that there is not enough time to perform K$\ge$2 repetitions.
}
{We also observe that the latent access failure probabilities follow  Proa$\le$Krep$\le$Reac under latency constraints $0.625$ ms $\le{\cal T}\le 1.5$ms (12TTIs).
%we first observe that the latent access failure probabilities of the K-repetition scheme with different repetition values are lower than those of the Reactive scheme under longer latency constraints ${\cal T}\ge 1.5$ms (12 TTIs).
In this case, under shorter latency constraints ${\cal T}\le 12$TTIs, the Proactive scheme should be chosen.
}
{This is due to that the Proactive scheme could terminate earlier to reduce latency without waiting for $K$ repetitions, which satisfies the shorter latency constraints}.
% We also observe that the latent access failure probabilities of the 4-repetition scheme are lower than those of  the 2-repetition scheme after longer latency constraints ${\cal T}\ge 1.8$ms (15 TTIs).
{But when the latency constraints ${\cal T}$ get longer, the advantage of the Proactive scheme than the K-repetition scheme is not obvious but the advantage of the Proactive and K-repetition schemes than the Reactive scheme is obvious, i.e., Proa$\&$Krep$<$Reac,  due to that the UE has enough time to finish the repetitions and get feedback.}
{We note that} increasing repetition value increases the GF access success probability, as it offers more opportunities to retransmit. 
However, when the repetition value is too large (e.g., $K_{\rm Krep}=8$), the latent access failure probabilities are not lower than those of the 4-repetition scheme in most of the time (except $1.5 \rm{ms}\le{\cal T}\le 1.8 \rm{ms}$, $4.2 \rm{ms}\le{\cal T}\le 4.5\rm{ms}$).
This is due to that transmitting 8 repetitions will cost too much waiting time and introduce a much longer delay.
It is obvious that if the repetition value is overestimated, the K-repetition scheme will waste the potential resource and lead to lower resource efficiency.

% However, it does not always perform better due to that the Proactive scheme also help to reduce interference to other UE(s) to improve reliability.
% This can be explained from the fact that the Proactive scheme earliest determination time depends on the HARQ RTT. 
% This can be explained from the fact that the Proactive scheme earliest determination time depends on the HARQ RTT. 
% Since in the HD-LS scenario, more than 2 repetitions are rarely needed, the Proactive scheme with max 8 repetitions is not better than other schemes, except for the K-repetition scheme with 8 repetitions, in this case.

% According to the curve of the Proactive scheme with maximum 8 repetitions, we can see that the blind repetition presents worse SINR due to the extra interference caused by the blind repetitions.

% This is due to that the UE queueing delay for $K=8$ is too long.
% GF with the K-repetition on the other hand presents the worst SINR due to the extra intra-cell interference caused by the blind repetitions. 
% The GF Reactive scheme presents a better SINR then the other GF schemes given that it avoids unnecessary retransmissions. 
% This shows that each transmission of the
% Reactive scheme presents a higher reliability, compared to the cases with blind repetitions. 

%repetitions will not improve the GF RA performance any more. 
% This is consistent with Fig. 4 as there is a  trade off  between  transmission  success  probability  and  non-collision  probability.

In  Fig. 10, 
{we first observe that latent access failure probabilities follow Krep$\le$Reac$\le$Proa, under longer latency constraints ${\cal T}\ge 1.5$ms (12 TTIs) for small repetition value $K_{\rm Krep}=2$.
%we observe that the 2-repetition scheme outperforms the Reactive scheme after 2 HARQ round trips, that the 4-repetition scheme outperforms the Reactive scheme after 4 HARQ round trips, and that the 8-repetition scheme has the highest latent access failure probabilities.
In this case, under longer latency constraints ${\cal T}\ge 12$TTIs, the K-repetition scheme should be chosen.
}
{
We also observe that the 8-repetition scheme has the highest latent access failure probabilities.}
This is due to that there is a  trade-off between transmission success probability and non-collision probability when increasing the repetition value, which is in line with Fig. 5 (b).
% In Fig. 5 (b), we note that in the scenario the same as  Fig. 10 ($\gamma_{th}=-10$ dB and ${\lambda_D}=2\times10^4$ $\rm {UEs/km^2}$), increasing the repetition value (from 2 to 4) does not increase the GF access success probability.
Thus, in Fig. 10, increasing the repetition value to 8 does not decrease but increases the latent access failure probabilities because it introduces longer waiting time without increasing the access success probabilities.
{In this case, when the repetition value is overestimated, the Proactive scheme should be chosen.}
% Similarly, the latent access failure probabilities of the Proactive scheme with a maximum of 8 repetitions could only outperform those of the K-repetition scheme with the same repetition value but underperform those of other schemes.
% Thus, it is necessary to deduce the  optimal number of repetitions in the future work.

% However, since UEs usually tend to operate using maximum transmit power, it is not able to apply Power Boosting in many cases.
% In addition, too much energy consumption in the Power boost scheme leads it not to be a regular scheme in the energy saving scenarios.
\begin{figure}[htbp!]
    \begin{center}
    \begin{minipage}[t]{0.53\textwidth}
    \centering
        \includegraphics[width=3.5in,height=3in]{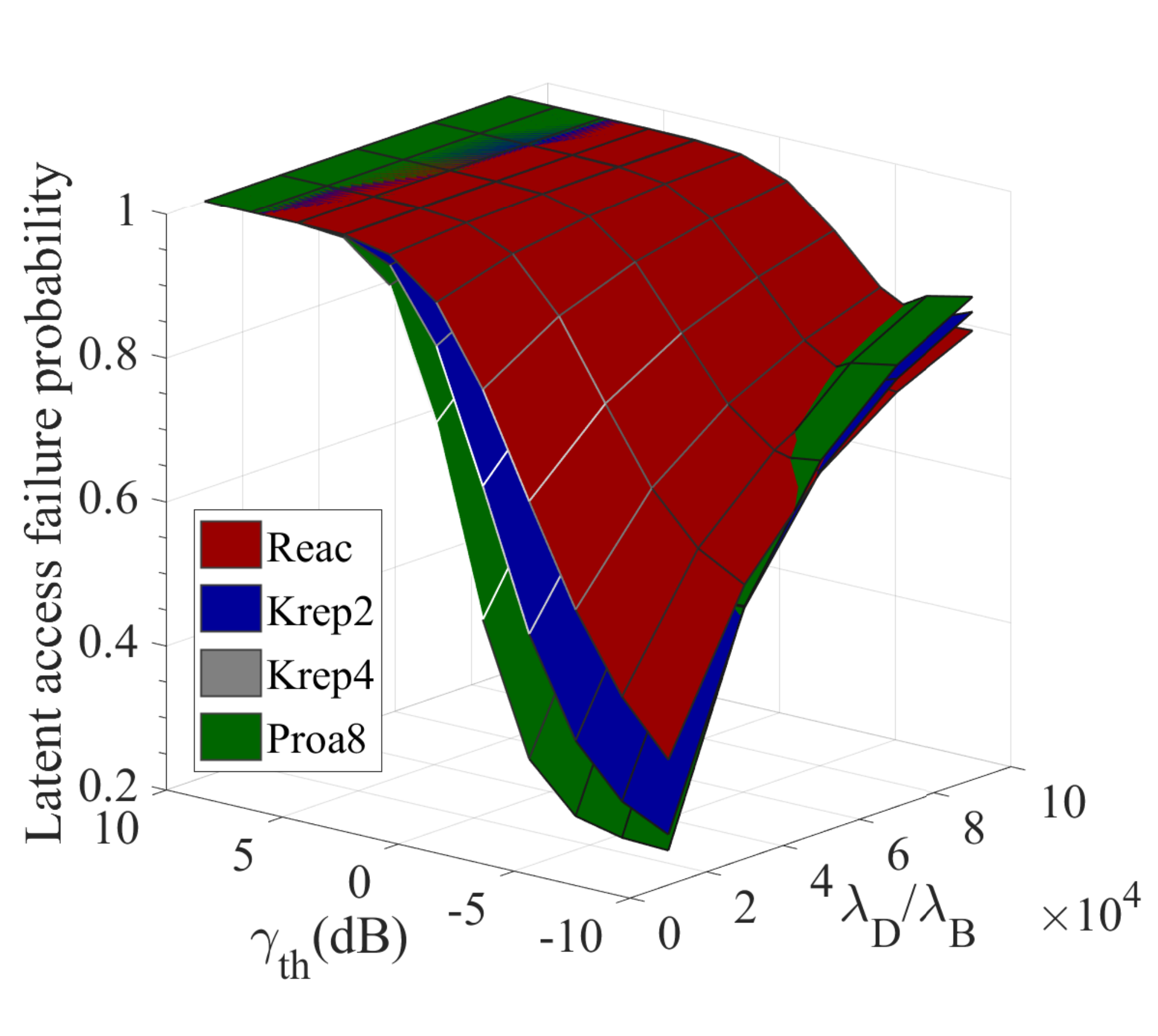}
        \vspace*{-0.4cm}
        \caption{\scriptsize Latent access failure probability for different density ratios and SINR thresholds when ${\cal T}=8$ TTIs (1ms).}
    \end{minipage}
    \label{fig:10}
        \begin{minipage}[t]{0.45\textwidth}
    \centering
        \includegraphics[width=3.5in,height=3in]{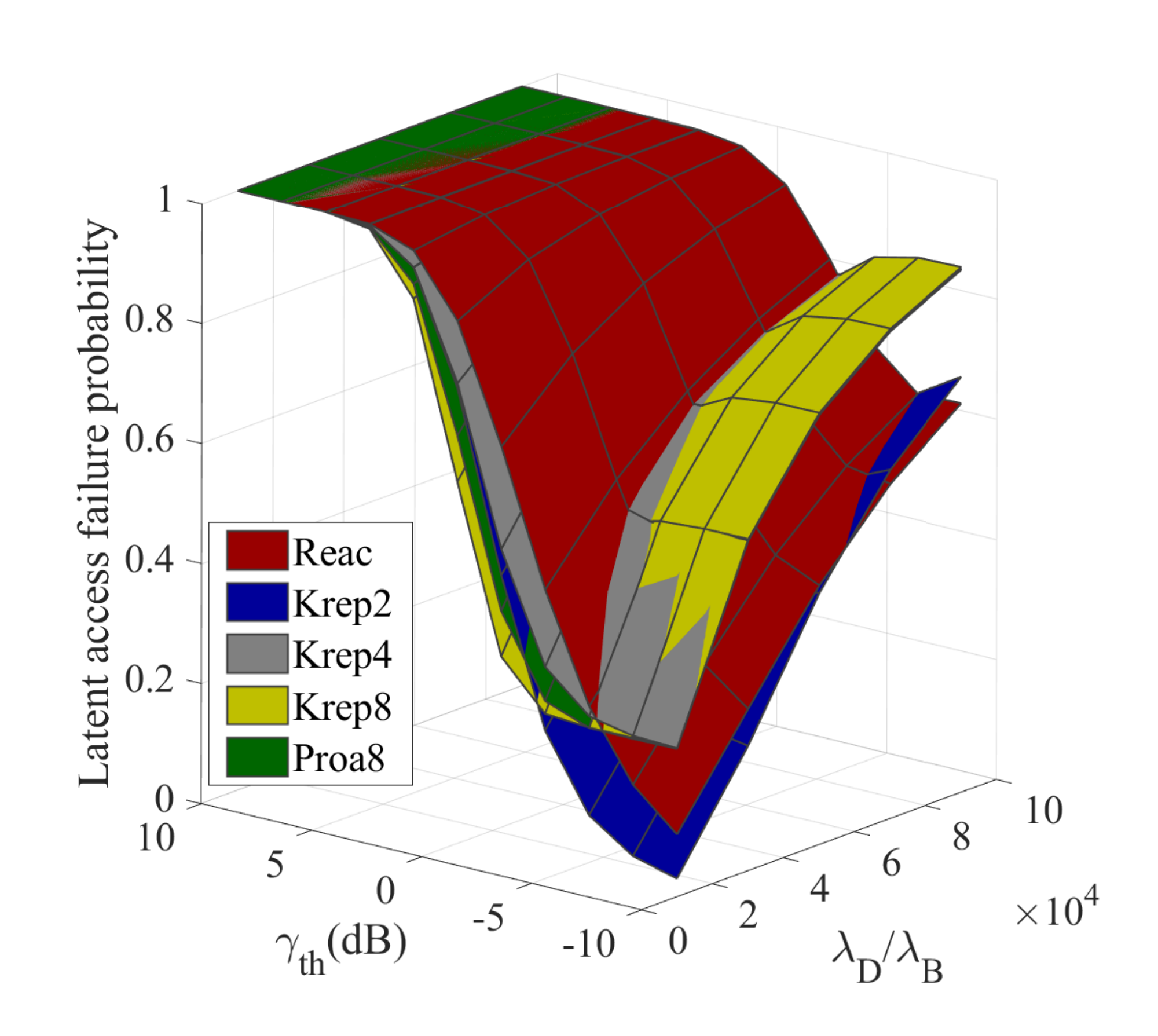}
        \vspace*{-0.4cm}
        \caption{\scriptsize Latent access failure probability for different densities ratios and SINR thresholds when ${\cal T}=12$ TTIs (1.5ms).}
        \end{minipage}
        \label{fig:11}
    \end{center}
\end{figure}

Fig. 11-Fig. 12 plot the GF latent access failure probabilities under the latency constraint ${\cal T}=1$ms  (8 TTIs) and  ${\cal T}=1.5$ms  (12 TTIs) for different density ratios and SINR thresholds. 
We observe that the GF latent access failure probability increases with increasing density ratio which is due to the following two reasons: 1) increasing the number of UEs generating interference leads to lower received SINR at the BS; 2) increasing the number of UEs leads to higher probability of collision. 
We also observe that the GF latent access failure probabilities decreases with decreasing SINR  threshold.
This is due to the lower SINR threshold leading to higher access success probability.

In Fig. 11, we observe that the GF latent access failure probabilities decrease in light load scenario (e.g., $\lambda_D/\lambda_B\le 40000$), while increases in high load scenario (e.g., $\lambda_D/\lambda_B\ge 40000$) with increasing the repetition value, which is in line with Fig. 5 (b).
This is due to the fact that increasing the repetition increases the collisions in overloaded traffic scenario, and wastes extra time and frequency resource.
% Thus, despite the potential of K-repetition to achieve a lower outage, it is not suited for this scenario.
We also note that, as the latency constraint ${\cal T}=1$ms (8 TTIs), so the 8-repetition scheme can not be adopted because its waiting time for the 1st transmission is more than 1ms.
But the Proactive scheme with a maximum of 8 repetitions could have as good performance as the 4-repetition scheme.

In Fig. 12, we observe that the GF latent access failure probabilities decrease in higher SINR thresholds scenarios (e.g., ${\gamma_{\rm th}}\ge -5$dB), while increases in lower SINR thresholds scenarios (e.g., ${\gamma_{\rm th}}\le -5$dB) with increasing the repetition value $K_{\rm Krep}>2$.
Thus, despite the K-repetition scheme can cope with tight time constraints by allowing a number of consecutive repetitions in a short time, the interference due to the multiple repetitions is the major impacting factor and surpasses the benefits of the combining gain in lower SINR threshold and high density scenarios.

% Fig. 14 plots the GF latent access failure probabilities under the latency constraint ${\cal T}=8$ TTIs (1ms) for different density ratios and SINR thresholds. 
% The K-repetition scheme always outperforms the Reactive scheme, and the performance gets better with increasing the repetition value,
% but the gap decreases when $\gamma_{th}\ge -5$dB.
% In addition, for the short latency constraint ${\cal T}=1$ms (8 TTIs), K-repetition schemes are not available in high load and low SINR threshold scenario. 

\begin{figure}[htbp!]
    \begin{center}
    \begin{minipage}[t]{0.53\textwidth}
    \centering
        \includegraphics[width=3.4in,height=2.8in]{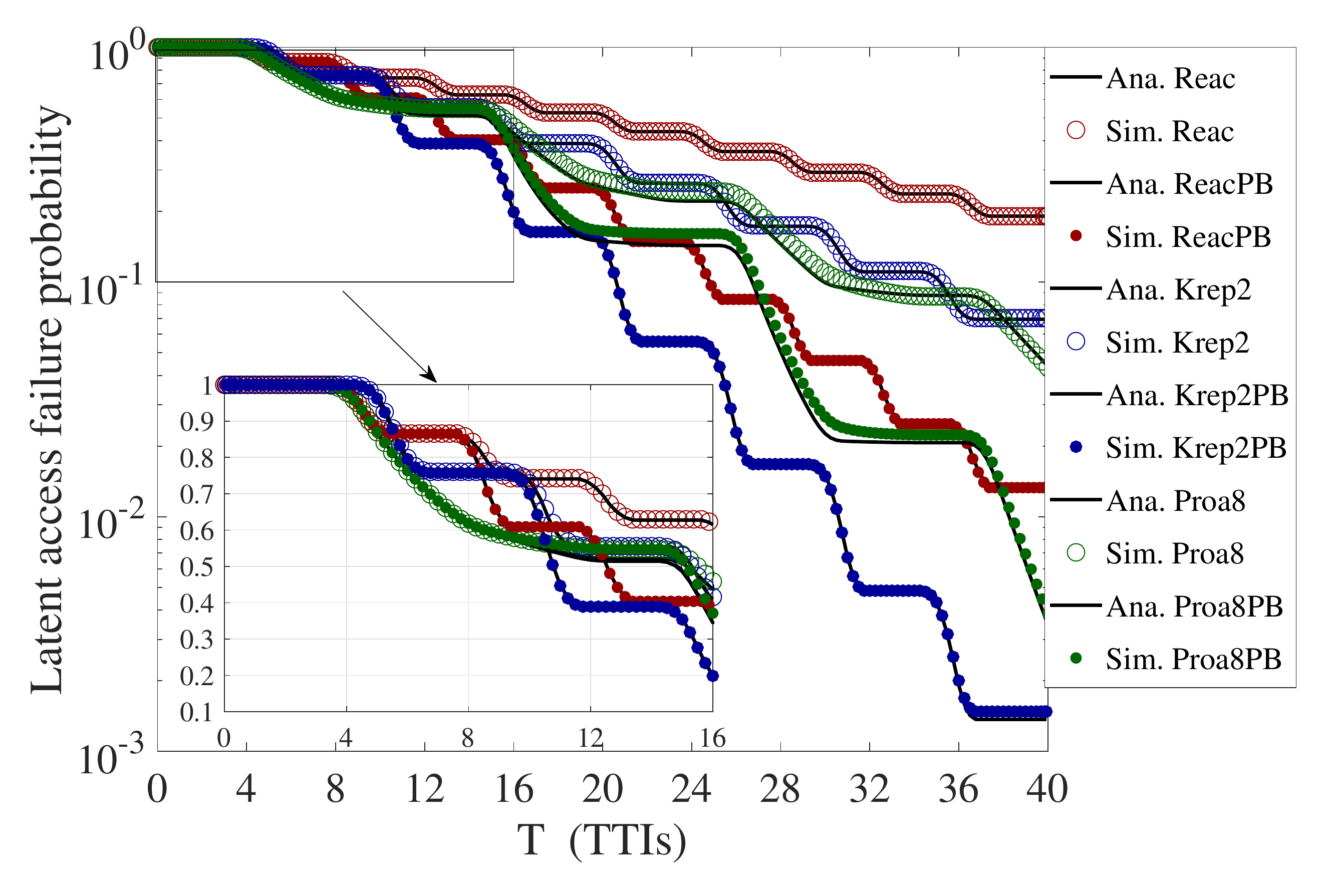}
        \vspace*{-0.4cm}
        \caption{\scriptsize Latent access failure probability when ${\gamma_{\rm th}}=-2$dB.}
    \end{minipage}
    \label{fig:10}
        \begin{minipage}[t]{0.45\textwidth}
    \centering
        \includegraphics[width=3.7in,height=2.8in]{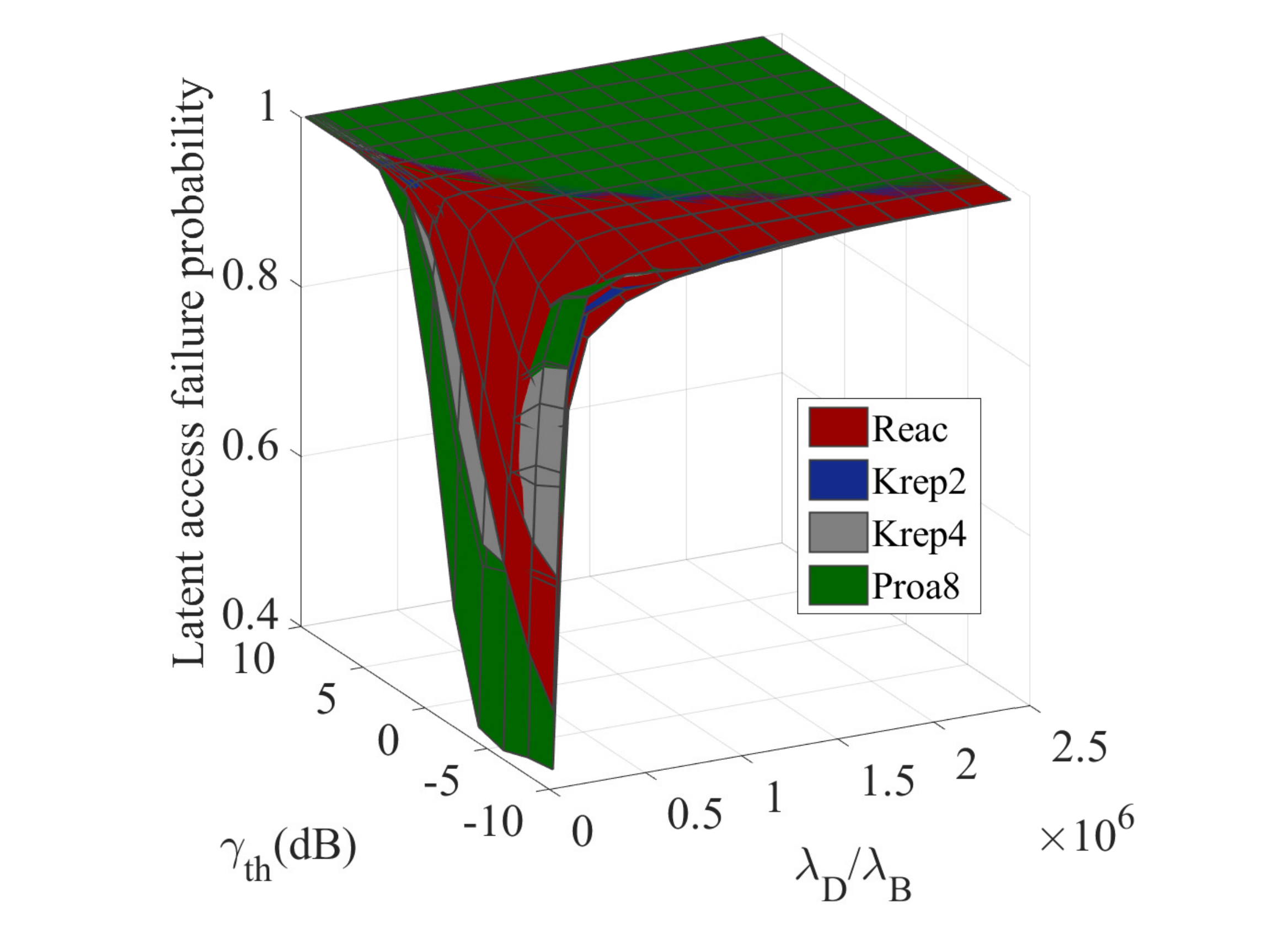}
        \vspace*{-0.4cm}
        \caption{\scriptsize Latent access failure probability when ${\cal T}=8$ TTIs (1ms) for large densities.}
        \end{minipage}
        \label{fig:11}
    \end{center}
\end{figure}

%Fig. 13 plots the GF latent access failure probabilities of the UE under  the Reactive, K-repetition, and Proactive GF HARQ schemes with Power Boosting. We observe that the GF latent access failure probabilities of the GF HARQ schemes with Power Boosting outperform those of the schemes without Power Boosting. This is due to that the failed UEs are favored by stepping up the transmit power, which significantly increases the transmission success probability.Interestingly, we also observe that Power Boosting has a greater improvement on the K-repetition scheme than the other two schemes.

{Fig. 13 plots the GF latent access failure probabilities of the UE under the Reactive, K-repetition, and Proactive schemes with PB.
Interestingly, we observe that PB has greater improvement in the Reactive and K-repetition schemes than the Proactive scheme. For example, without PB, GF latent access failure probabilities of the UE under the K-repetition scheme are similar to those of the Proactive scheme, while with PB, the GF latent access failure probabilities of the UE under the K-repetition scheme is much lower than those of the Proactive scheme.
}

{Fig. 14 plots the GF latent access failure probabilities under the latency constraint ${\cal T}=1$ms  (8 TTIs) versus density ratios and SINR thresholds for larger UE densities. 
We observe that when the density ratios (UE densities) are particularly large ($\lambda_D/\lambda_B\ge 1.5\times 10^6$), no matter what schemes are taken, the GF access cannot be successful, that is, the network is very crowded. Thus, the number of active UEs that access to the network should be limited to some thresholds.}

\section{Conclusion}
In this paper, we developed a spatio-temporal mathematical model to analyze and compare the grant-free access latent access failure probabilities of a randomly chosen UE with three different GF HARQ schemes for URLLC requirements. 
% We first gave an overview of contention-based grant-free random access schemes in 5G NR.
We defined the latent access failure probability to characterize the URLLC performance. 
We proposed a tractable approach to derive and analyze the GF latent access failure probabilities of the UE under the Reactive, the K-repetition, and the Proactive  schemes, respectively.
Our results have shown that
{
1) either K-repetition scheme or Proactive scheme provides lower latent access failure probability than the Reactive scheme, except higher density and lower SINR threshold scenarios; 
2) under shorter latency constraints (${\cal T} \le 8$TTIs), the Proactive scheme provides the lowest latent access failure probability;
%the latent access failure probabilities depend on the latency constraints ${\cal T}$, i.e., which scheme performs better depends on the latency constraints ${\cal T}$;
3) under longer latency constraints, 
%${\cal T} \ge 22$TTIs, 
the K-repetition scheme provides the lowest latent access failure probability, which  depends on $K$, i.e., $K$ need to be optimized; 
{4) if $K$ is overestimated; the Proactive scheme provides lower latent access failure probability than the K-repetition scheme;} 5) the Power Boosting can improve the latent access failure probability, especially for the K-repetition scheme {(including $K$=1)}.}
The analytical model presented in this paper can also be applied for the reliability and latency performance evaluation of other types
of GF HARQ schemes in the cellular-based networks.
% Currently, a lot of 3GPP study items are also focused on enhancing the GF access schemes.
% The analysis in this paper can also be extended with the adoption of other schemes, such as the Non-Orthogonal Coded Access scheme and Hybrid  Resource Allocation scheme, in order to improve the performance further.

\appendices
\numberwithin{equation}{section}
\section{A Proof of  Theorem 1}
% For ease of presentation, we omit the subscript $``{\rm Reac}"$.
For a given latency constraint $\cal T$ {\rm TTIs}, we have $M=\lfloor {({\cal T}-1)}/T^{\rm RTT}_{\rm Reac} \rfloor$.
For $M=1$, the latent access failure probability is the probability that the UE fails to access in the 1st HARQ round trip, where we can derive
\begin{align}\label{latency1}
&{\cal P}_{F}^{\rm }[T_{\rm latency}\leqslant{\cal T}_{}] %\\\nonumber
=1-{{\cal P}_1^{\rm }}. 
% (M=\lfloor {({\cal T}-1)}/T^{\rm RTT}_{\rm Reac} \rfloor=1).     
\end{align}
For $M=2$, the latent access failure probability is the probability that the {UE fails to access in neither two HARQ round trips,} where we can derive 
\begin{align}\label{latency2}
&{\cal P}_{F}^{\rm }[T_{\rm latency}\leqslant{\cal T}_{}] %\\\nonumber
=1-{{\cal P}_1^{\rm }-(1-{\cal P}_1^{}){\cal P}_2^{}}.     
\end{align}
Substituting   (\ref{active_probability})  into   (\ref{latency2}), we have
\begin{align}\label{latency22}
&{\cal P}_{F}^{}[T_{\rm latency}\leqslant{\cal T}_{}] 
%\\ \nonumber
% =1-{{\cal P}_1^{}-\mathcal{A}_2^{}{\cal P}_2^{}}
%\\ \nonumber
=1-\sum\limits_{m = 1}^{M = 2}{\mathcal{A}_m^{}}{{{\cal P}_m^{}}}.
\end{align}
For $M=3$, the latent access failure probability means the probability that the UE {fails to access after all the three HARQ round trips}. So we can derive
\begin{align}\label{latency3}
&{\cal P}_{F}^{}[T_{\rm latency}\leqslant{\cal T}_{}] \nonumber\\
% &=1-{{\cal P}_1^{}-(1-{\cal P}_1^{}){\cal P}_2^{}}- (1-{\cal P}_1^{})(1-{\cal P}_2){\cal P}_3 \\\nonumber
&=1-{{\cal P}_1^{}-(1-{\cal P}_1^{}){\cal P}_2^{}}- (1-{\cal P}_1^{}-(1-{\cal P}_1^{}){\cal P}_2){\cal P}_3\nonumber\\
&=1-{\cal P}_1^{}-\mathcal{A}_2^{}{\cal P}_2^{}-\mathcal{A}_3^{}{\cal P}_3^{}
=1-\sum\limits_{m = 1}^{M = 3}{\mathcal{A}_m^{}}{{{\cal P}_m^{}}}.
\end{align}

For $M>3$, the latent access failure probability ${\cal P}_{F}^{}[T_{\rm latency}\leqslant{\cal T}_{}]$  can be derived based on the iteration process following  $M=2$ and $3$.

\section{A Proof of  Lemma 1}
We derive the GF transmission success probability conditioning on $n$ number of intra-cell interfering UEs based on the SINR outage as 
\begin{align}\label{PROOF_Reac}
 &\Theta^{\rm Reac} [{n},{}m] = { { {{\mathbb P}}[{\rm SINR}_m\geqslant{\gamma_{\rm th}}| {N = n}]} } \nonumber\\
 &=\mathbb P \bigg\{ {\frac{{{}g_m\rho {h_0}}}{{{\mathcal I}_{\operatorname{{\rm inter}}}^m + {\mathcal I}_{\operatorname{{\rm intra}}}^m + {\sigma ^2}}} \geqslant {{\gamma_{\rm th}}} \Big| {N = n} } \bigg\}  \\ \nonumber
   &= \exp \Big( { - \frac{{{{\gamma_{\rm th}}}}}{{{}g_m\rho }}{\sigma ^2}} \Big){{\mathcal L}_{{\mathcal I}_{\operatorname{{\rm inter}}}^m}}\Big( {\frac{{{{\gamma_{\rm th}}}}}{{{g_m}\rho }}} \Big){{\mathcal L}_{{\mathcal I}_{\operatorname{{\rm intra}}}^m}}\Big( {\frac{{{{\gamma_{\rm th}}}}}{{{g_m}\rho }}\Big| {N = n}} \Big).  
\end{align}

The Laplace Transform of aggregate intra-cell interference  conditioning on $N = n$ is derived as
\begin{align}\label{intra_reac}
  {{\mathcal L}_{{\mathcal I}_{\operatorname{{\rm intra}}}^m}}( {s| {N = n} } ) = { E}\Big[ {\exp \Big( { - s\sum\limits_{\beta = 1}^n {{g_m}\rho {h_\beta}} } \Big)} \Big] \hfill  = {\Big( {\frac{1}{{1 + s{g_m}\rho }}} \Big)^n}, 
\end{align}
where $s= { {{{{\gamma_{\rm th}}}}}/({{{g_m}\rho }})}$.

The Laplace Transform of aggregate inter-cell interference received at the BS is derived as
\begin{align}\label{inter_reac}
 & {{\mathcal L}_{{\mathcal I}_{\operatorname{{\rm inter}}}^m}}( s ) = { E}\Big[ {\exp \Big( { - s\sum\limits_{i \in {{\mathcal Z}_{\operatorname{\rm inter}}}} {{g_m}{P_i}{h_i}{{\| {{u_i}} \|}^{ - \alpha }}} } \Big)} \Big] 
   \overset{\text{(a)}} \nonumber\\
   &= { E}\Big[ {\prod\limits_{i \in {{\mathcal Z}_{\operatorname{\rm inter}}}}^{} {\frac{1}{{1 + s{g_m}{P_i}y_i^{ - \alpha }}}} } \Big] \hfill \nonumber\\
  & \overset{\text{(b)}}= \exp \Big( { - 2\pi {\mathcal A}_m^{\rm Reac}{\lambda _a}\int_{{{( {\frac{P}{{{g_m}\rho }}} )}^{\frac{1}{\alpha }}}}^\infty  {{{ E}_P}\Big[ {1 - \frac{1}{{1 + s{g_m}P{y^{ - \alpha }}}}} \Big]} ydy} \Big) \hfill \nonumber\\
  & \overset{\text{(c)}}= \exp \Big( { - 2\pi {\mathcal A}_m^{\rm Reac}{\lambda _a}{{({g_m}s)}^{\frac{2}{\alpha }}}{{ E}_P}[ {{P^{\frac{2}{\alpha }}}} ]\int_{{{( {s{g_m}\rho } )}^{\frac{{ - 1}}{\alpha }}}}^\infty  {\frac{x}{{1 + {x^\alpha }}}} dx}\Big )\nonumber\\
  & \overset{\text{ }}= \exp \Big( { - 2 {\mathcal A}_m^{\rm Reac}{\lambda _a}/{{{\lambda _B}}}{{({}{\gamma_{\rm th}})}^{\frac{2}{\alpha }}}\int_{{{( {{\gamma_{\rm th}} } )}^{\frac{{ - 1}}{\alpha }}}}^\infty  {\frac{x}{{1 + {x^\alpha }}}} dx}\Big ),
%   & \overset{\text{(d)}}= \exp \Big( { -\pi {\mathcal A}_m^{Reac}{\lambda _a}/{{{\lambda _B}}}{{({}\gamma_{th})}^{\frac{2}{\alpha }}}\arctan({{({}\gamma_{th})}^{\frac{1}{2 }}}) }\Big )\\\nonumber
\end{align}
where (a) is obtained by taking the average with respect to $h_i$, (b) follows from the probability generation functional (PGFL) of the PPP, 
(c) follows by changing the variables $x=y/(sP)^{\frac{1}{\alpha}}$
and $E_P[ {{P^{\frac{2}{\alpha }}}}]=\rho^{\frac{2}{\alpha}}/(\pi\lambda_B)$ is the
moments of the transmit power.
Substituting Eq. (\ref{intra_reac}) and Eq. (\ref{inter_reac}) into (\ref{PROOF_Reac}), we derive the transmission success probability in the $m$th round trip as
\begin{align}\label{mth_reac}
&\Theta^{\rm Reac} [{n},{}m]=
\exp \big( {- \displaystyle\frac{{\gamma_{\rm th}}\sigma ^2}{g_m{\rho }}\big)}{{{(1 + {{\gamma_{\rm th}}})}^{-n}}}
\nonumber \\
&
\times\displaystyle{{\exp \Big(
%\displaystyle\frac{{ - k{\gamma _{th}}{\sigma ^2}}}{\rho } 
{ - \displaystyle{{{}{\mathcal{A}_m^{\rm Reac}\lambda _a}}}/{{{\lambda _B}}}\Big( {{}_2{F_1}\Big( { - \frac{2}{\alpha },1;\frac{{\alpha  - 2}}{\alpha }; - {{\gamma_{\rm th}}}} \Big) - 1} \Big)} \Big)}}.
\end{align}
We consider a general fading with the path loss exponent $\alpha=4$ to simplify our results as
\begin{align}\label{mth_reac_sim}
&\Theta^{\rm Reac} [{n},{}m]=
\exp \big( {- \displaystyle\frac{{\gamma_{\rm th}}\sigma ^2}{g_m{\rho }}\big)}{{{(1 + {{\gamma_{\rm th}}})}^{-n}}}
\nonumber \\
&\times\displaystyle{{\exp \Big(
%\displaystyle\frac{{ - k{\gamma _{th}}{\sigma ^2}}}{\rho } 
{ - {{( {{{\gamma_{\rm th}}}} )}^{\frac{1}{2 }}}\displaystyle{{{}{\mathcal{A}_m^{\rm Reac}\lambda _a}}}/{\lambda _B}{{}}\arctan({{({}{\gamma_{\rm th}})}^{\frac{1}{2 }}}) }\Big)}}.
\end{align}

\section{A Proof of  Lemma 2}
For the K-repetition scheme, the GF transmission in one HARQ round trip is successful if any of the repetition succeeds. We derive the GF transmission success probability under $K_{\rm Krep}$ repetitions conditioning on $n$ number of intra-cell interfering UEs based on the SINR outage as 
\begin{align}\label{PROOF_TRANS}
 &{\Theta^{\rm Krep}} [{n},{m},K_{\rm Krep}] \nonumber \\
& = 1 - \prod\limits_{{k} = 1}^{{K_{\rm Krep}}} {\Big( {1 - {{\mathbb P}}[{\rm SINR}_{k}^m\geqslant{\gamma_{\rm th}}| {N = n}]} \Big)}. 
\end{align}
Based on the Binomial theorem,  \eqref{PROOF_TRANS} can be rewritten as
\begin{align}\label{krep_access}
&{\Theta^{\rm Krep}} [{n},{m},K_{\rm Krep}]= {\sum\limits_{{{k}} = 1}^{{{K_{\rm Krep}}}}}{{( - 1)}^{{{k}} + 1}} {\Big( \begin{array}{l}{{K_{\rm Krep}}}\\{{k}}\end{array}  \Big)}\nonumber \\
&\times{{\mathbb{P}}}[{{{\rm SINR}_1^m}\geqslant{\gamma_{\rm th}}},... ,{{{\rm SINR}_{{{k}}}^m}\geqslant{\gamma_{\rm th}}}| {N = n} ],
\end{align}
where $\Big( \begin{array}{l}
{K_{\rm Krep}}\\
{k}
\end{array} \Big) =\displaystyle \frac{{{K_{\rm Krep}}!}}{{{k}!\big( {{K_{\rm Krep}} - {k}} \big)!}}$ is the binomial coefficient and
% $\displaystyle{{\mathbb P}[{{\rm SINR}_1^m \geqslant {\gamma _{th}},...,{\rm SINR}_k^m\geqslant {\gamma _{th}}| {N = n}} ]}$ is the probability that all of $k$ transmissions are successful derived as
\begin{align}\label{A1}
&{{\mathbb P}[{{\rm SINR}_1^m \geqslant {{\gamma_{\rm th}}},...,{\rm SINR}_k^m\geqslant {{\gamma_{\rm th}}}| {N = n}} ]}\nonumber \\ 
 &= \exp \Big( { - \frac{{{k{\gamma_{\rm th}}}}}{{{g_m}\rho }}{\sigma ^2}} \Big){{\mathcal L}_{I_{\operatorname{{\rm inter}}}^m}}\Big( {\frac{{{{\gamma_{\rm th}}}}}{{{g_m}\rho }}} \Big){{\mathcal L}_{I_{\operatorname{{\rm intra}}}^m}}\Big( {\frac{{{{\gamma_{\rm th}}}}}{{{g_m}\rho }}\Big| {N = n}} \Big).
\end{align}
The Laplace Transform of aggregate intra-cell interference conditioning on $N = n$ is derived as
\begin{align}\label{intra}
  &{{\mathcal L}_{I_{\operatorname{{\rm intra}}}^m}}( {s| {N = n} } )  
  \nonumber \\
  &={ E}\Big[ {\exp \Big( { - s\sum\limits_{\beta = 1}^n {{g_m}\rho \sum\limits_{r = 1}^k {h_\beta^r}} } \Big)} \Big] \hfill  = {\Big( {\frac{1}{{1 + s{g_m}\rho }}} \Big)^{kn}}, 
\end{align}
where $s= { {{{{\gamma_{\rm th}}}}}/({{{g_m}\rho }})}$.

The Laplace Transform of aggregate inter-cell interference  is derived as
\begin{align}\label{inter}
 &{{\mathcal L}_{I_{\operatorname{{\rm inter}}}^m}}( s ) \nonumber \\
 &= { E}\Big[ {\exp \Big( { - s\sum\limits_{i \in {{\mathcal Z}_{\operatorname{\rm inter}}}} {{g_m}{P_i}({\sum\limits_{r = 1}^k h_i^r}){{\| {{u_i}} \|}^{ - \alpha }}} } \Big)} \Big] 
  \overset{\text{}}\nonumber \\
  &= { E}\Big[ {\prod\limits_{i \in {{\mathcal Z}_{\operatorname{\rm inter}}}}^{} \Big({\frac{1}{{1 + s{g_m}{P_i}y_i^{ - \alpha }}}}\Big)^k } \Big] \hfill \nonumber \\
  & \overset{\text{}}= \exp \Big( { - 2\pi {\mathcal A}_m^{Krep}{\lambda _a}\int_{{{( {\frac{P}{{{}\rho }}} )}^{\frac{1}{\alpha }}}}^\infty  {{{ E}_P}\Big[ {1 - (\frac{1}{{1 + s{g_m}P{y^{ - \alpha }}}})^k} \Big]} ydy} \Big) \hfill \nonumber \\
%   & \overset{\text{()}}= \exp \Big( { - 2\pi {\mathcal A}_m^{Krep}{\lambda _a}{{({}s)}^{\frac{2}{\alpha }}}{{ E}_P}[ {{P^{\frac{2}{\alpha }}}} ]\int_{{{( {s{}\rho } )}^{\frac{{ - 1}}{\alpha }}}}^\infty  {\Big[ {1 - (\frac{1}{{1 + {}{x^{ - \alpha }}}})^k} \Big]} xdx}\Big )\\\nonumber
   & \overset{\text{}}= \exp \Big( { - 2 {\mathcal A}_m^{Krep}\frac{\lambda _a}{\lambda _B}({{\gamma_{\rm th}}})^{\frac{2}{\alpha }}\int_{{{( {{\gamma_{\rm th}}} )}^{\frac{{ - 1}}{\alpha }}}}^\infty  {\Big[ {1 - (\frac{1}{{1 + {}{x^{ - \alpha }}}})^k} \Big]} xdx}\Big ).
\end{align}
% where (a) is obtained by taking the average with respect to $h_i$, (b) follows from the probability generation functional (PGFL) of the PPP, 
% (c) follows by changing the variables $x=y/(sP)^{\frac{1}{\alpha}}$
% and $E_P[ {{P^{\frac{2}{\alpha }}}}]=\rho^{\frac{2}{\alpha}}/(\pi\lambda_B)$ is the
% moments of the transmit power. 
Substituting (\ref{intra}) and  (\ref{inter}) into (\ref{A1}) and then substituting (\ref{A1}) into (\ref{krep_access}), we derive the transmission success probability in the $m$th roudn trip with the K-repetition scheme as
\begin{align}\label{trans_krep_proof}
&\Theta^{\rm Krep} [{n},{m},K_{\rm Krep}]
\nonumber \\
&=\sum\limits_{{k} = 1}^{{K_{\rm Krep}}} {{( - 1)}^{{k} + 1}}\Big( \begin{array}{l}
{K_{\rm Krep}}\\
{k}
\end{array} \Big)
\exp \big( {- \displaystyle\frac{{k\gamma_{\rm th}}\sigma ^2}{g_m{\rho }}\big)}{{{(1 + {{\gamma_{\rm th}}})}^{-kn}}}
\nonumber \\
&\times\displaystyle{{\exp \Big(
%\displaystyle\frac{{ - k{\gamma _{th}}{\sigma ^2}}}{\rho } 
{ - \displaystyle{{{}{\mathcal{A}_m^{\rm Krep}\lambda _a}}}/{{{\lambda _B}}}\Big( {{}_2{F_1}\Big( { - \frac{2}{\alpha },k;\frac{{\alpha  - 2}}{\alpha }; - {{\gamma_{\rm th}}}} \Big) - 1} \Big)} \Big)}}.
% &\displaystyle\frac{{\exp \bigg( {\displaystyle\frac{{ - r{\gamma _{th}}{\sigma ^2}}}{\rho } - 2{{( {{\gamma _{th}}} )}^{\frac{2}{\alpha }}}\displaystyle\frac{{{}{\mathcal{A}_m^{k_{rep}}\lambda _a}}}{{{\lambda _B}}}\int_{{{( {{\gamma _{th}}} )}^{ - \frac{1}{\alpha }}}}^\infty  {\Big[ {1 - {{\Big( {\displaystyle\frac{1}{{1 + {x^{ - \alpha }}}}} \Big)}^r}} \Big]} xdx} \bigg)}}{{{{(1 + {\gamma _{th}})}^{rn}}}}.
\end{align}

\section{A Proof of  Lemma 3}
For $l\le 4$, the UE can not receive feedback, thus the number of interfering users remains unchanged in each repetition.
So we have
\begin{align}\label{trans_l4_proof}
    &{\Theta^{\rm Proa} [{n},{1},l]}=1-\prod\limits_{r = 1}^{l}(1-{\mathbb P}_{1,r}),
\end{align}
where
\begin{align}\label{m=1trans_proof}
       & {\mathbb P}_{1,r}={\eta_{1,r}}
        %\sum\limits_{n = 0}^\infty 
       {{\Theta^{\rm Proa} [{n},{1},1]}}\nonumber \\
      &=
\displaystyle\frac{{{\eta_{1,r}}\exp \Big(\displaystyle\frac{{ - {{\gamma_{\rm th}}}{\sigma ^2}}}{\rho }
%\displaystyle\frac{{ - k{\gamma _{th}}{\sigma ^2}}}{\rho } 
{ - \displaystyle{{{\eta_{1,r}}\frac{\lambda _a}{{{\lambda _B}}}}}\Big( {{}_2{F_1}\Big( { - \frac{2}{\alpha },1;\frac{{\alpha  - 2}}{\alpha }; - {{\gamma_{\rm th}}}} \Big) - 1} \Big)} \Big)}}{{{(1 + {{\gamma_{\rm th}}})}^{n}}}.
    \end{align}
For $l\ge 5$, the UE can receive feedback from the 4th repetition, thus the number of interfering users changes from the 5th
repetition.
So we have
\begin{align}\label{trans_l5_proof}
{\Theta^{\rm Proa} [{n},{1},l]} =1-(1-{{\Theta^{\rm Proa} [{n},{1},4]}})\Big(\prod\limits_{r = 5}^{l}{1-\mathbb P}_{1,r}\Big),
\end{align}
where
\begin{align}\label{m=1trans_proof2}
        {\mathbb P}_{1,r}&={\eta_{1,r}}{{\rm O}[{n},1,r]}
        %\sum\limits_{n = 0}^\infty 
       {{\Theta^{\rm Proa} [{n},{1},1]}},
%       \sum\limits_{{r} = 1}^{{l}} {{( - 1)}^{{r} + 1}}\Big( \begin{array}{l}
% {l}\\
% {r}
% \end{array} \Big)
% \exp \big( {\displaystyle{{ - k{\gamma _{th}}{\sigma ^2}}}/{\rho }\big)}
% \displaystyle\frac{{\exp \Big({{ - {\gamma _{th}}{\sigma ^2}}}/{\rho }
% %\displaystyle\frac{{ - k{\gamma _{th}}{\sigma ^2}}}{\rho } 
% { - \displaystyle{{{\eta_{1,r}}{\lambda _a}}}/{{{\lambda _B}}}\Big( {{}_2{F_1}\Big( { - \frac{2}{\alpha },1;\frac{{\alpha  - 2}}{\alpha }; - {\gamma _{th}}} \Big) - 1} \Big)} \Big)}}{{{(1 + {\gamma _{th}})}^{n}}}.
    \end{align}
    with
    \begin{align}\label{N_condition1mu_proa_proof}
{{\rm O}[{n},1,r]} = \displaystyle\frac{{{c^{(c + 1)}}\Gamma (n + c + 1){{\big( {{{{\eta_{1,r}\lambda _{a}}}}/{{{\lambda _B}}}} \big)}^{n}}}}{{\Gamma (c + 1)\Gamma (n+ 1){{\big( {{{{\eta_{1,r}\lambda _{a}}}}/{{{\lambda _B}}} + c} \big)}^{n + c + 1}}}}.
\end{align}

\bibliographystyle{IEEEtran}

\bibliography{IEEEabrv,grant}

\end{document}